\pdfoutput=1 

\documentclass[options]{asl}

\frenchspacing 
\usepackage{ellipsis} 
\usepackage[abbreviations,british]{foreign} 

\usepackage{xcolor}
\usepackage{hyperref}

\definecolor{darkmidnightblue}{rgb}{0.0, 0.2, 0.4}
\definecolor{persianplum}{rgb}{0.44, 0.11, 0.11}

\hypersetup{
	colorlinks  = true, 
	citecolor   = persianplum, 
	urlcolor    = persianplum, 
	linkcolor   = persianplum
}

\usepackage{tikz}
\usepackage{todonotes}
\usepackage{xspace}





\usepackage{caption}
\usepackage{float}
\usepackage{subcaption}

\usepackage[h]{esvect}

\usepackage{graphicx}

\makeatletter
\def\desclabel#1#2{\begingroup
	\def\@currentlabel{#1}%
	#1\label{#2}\endgroup
}
\makeatother

\usetikzlibrary{shadows}
\tikzset{
	diagonal fill/.style 2 args={fill=#2, path picture={
			\fill[#1, sharp corners] (path picture bounding box.south west) -|
			(path picture bounding box.north east) -- cycle;}},
	reversed diagonal fill/.style 2 args={fill=#2, path picture={
			\fill[#1, sharp corners] (path picture bounding box.north west) |- 
			(path picture bounding box.south east) -- cycle;}}
}

\usetikzlibrary{hobby,backgrounds,calc,trees}

\usepackage{tikz-cd}
\usetikzlibrary{matrix}

\usepackage{stackengine}
\usepackage{xspace}
\usepackage{xcolor}
\usepackage{multirow}
\usepackage{longtable}
\usepackage[linesnumbered,ruled,vlined]{algorithm2e}

\SetCommentSty{mycommfont}
\SetKwInput{KwData}{Input}

\newtheorem{theorem}{Theorem}
\newtheorem{lemma}{Lemma}

\usetikzlibrary{intersections}
\usetikzlibrary{decorations}
\usetikzlibrary{snakes}
\usepackage{enumitem}
\newtheorem*{theorem*}{Theorem}
\newtheorem*{lemma*}{Lemma}
\newtheorem*{proposition*}{Proposition}
\usepackage{tikz}
\usetikzlibrary{calc,arrows,decorations.pathmorphing,decorations.markings,decorations.pathreplacing,snakes}
\tikzstyle{element}=[circle,draw=blue,fill=blue,inner sep=0pt,minimum size=3]


\tikzset{->-/.style={decoration={
			markings,
			mark=at position #1 with {\arrow{>}}},postaction={decorate}}}
\tikzset{-<-/.style={decoration={
			markings,
			mark=at position #1 with {\arrow{<}}},postaction={decorate}}}

\tikzset{-|>-/.style={decoration={
			markings,
			mark=at position #1 with {\arrow{open triangle 60}}},postaction={decorate}}}

\makeatletter
\newcommand*{\coloneqq}{\mathrel{\rlap{%
											\raisebox{0.3ex}{$\m@th\cdot$}}%
											\raisebox{-0.3ex}{$\m@th\cdot$}}%
											=}
\makeatother


\title[The Adjacent Fragment]
{The Adjacent Fragment and Quine's Limits of Decision}
\author{Bartosz Bednarczyk}
\revauthor{Bednarczyk, Bartosz}
\address{Computational Logic Group, Technische Universität Dresden, Germany\\
Institute of Computer Science, University of Wrocław, Poland}
\email{bartosz.bednarczyk@cs.uni.wroc.pl}
\thanks{Bartosz Bednarczyk was supported by the ERC Consolidator Grant No. 771779 (DeciGUT).}
\author{Daumantas Kojelis}
\revauthor{Kojelis, Daumantas}
\address{Department of Computer Science, University of Manchester, UK}
\email{daumantas.kojelis@manchester.ac.uk}
\author{Ian Pratt-Hartmann}
\revauthor{Pratt-Hartmann, Ian}
\address{Department of Computer Science, University of Manchester, UK\\
Institute of Computer Science, University of Opole, Poland}
\email{ian.pratt@manchester.ac.uk}
\thanks{Ian Pratt-Hartmann was 
supported by the NCN grant 2018/31/B/ST6/03662.}



\theoremstyle{definition}



\renewcommand{\phi}{\varphi}

\newcommand{\logic}[1]{\mathsf{#1}}

\newcommand{\FO}{\ensuremath{\logic{FO}}}
\newcommand{\FOt}{\FO^{2}}
\newcommand{\FOts}{\FO^{2*}}

\newcommand{\FL}{\ensuremath{\mathcal{FL}}}          
\newcommand{\FLv}[1]{\ensuremath{\mathcal{FL}^{#1}}} 

\newcommand{\AF}{\ensuremath{\mathcal{AF}}}          
\newcommand{\AFv}[1]{\ensuremath{\mathcal{AF}^{#1}}} 

\newcommand{\GA}{\ensuremath{\mathcal{GA}}} 
\newcommand{\GF}{\ensuremath{\mathcal{GF}}} 

\newcommand{\complexityclass}[1]{\textsc{#1}} 
\newcommand{\NP}{\complexityclass{NPTime}}

\newcommand{\PSpace}{\complexityclass{PSpace}}

\newcommand{\AExpSpace}{\complexityclass{AExpSpace}}
\newcommand{\ExpTime}{\complexityclass{ExpTime}} 
\newcommand{\NExpTime}{\complexityclass{NExpTime}} 
\hyphenation{Exp-Time} 
\hyphenation{NExp-Time} 
\newcommand{\TwoExpTime}{\complexityclass{2ExpTime}} 
\newcommand{\Tower}{\complexityclass{Tower}} %

\newcommand{\set}[1]{\ensuremath{\{ #1 \}}} 
\newcommand{\sizeof}[1]{|\!|#1|\!|}         
\newcommand{\fv}{\ensuremath{\mathrm{fv}}} 
\newcommand{\acl}{\ensuremath{\mathrm{acl}}} 

\newcommand{\bA}{\ensuremath{\mathbf{A}}} 
\newcommand{\bD}{\ensuremath{\mathbf{D}}} 
\newcommand{\bx}{\ensuremath{\mathbf{x}}} 
\newcommand{\by}{\ensuremath{\mathbf{y}}} 
\newcommand{\bu}{\ensuremath{\mathbf{z}}} 
\newcommand{\bz}{\ensuremath{\mathbf{z}}} 
\newcommand{\bv}{\ensuremath{\mathbf{z}'}} 

\newcommand{\N}{\ensuremath{\mathbb{N}}} 

\newcommand{\fA}{\ensuremath{\mathfrak{A}}} 
\newcommand{\fB}{\ensuremath{\mathfrak{B}}} 
\newcommand{\fC}{\ensuremath{\mathfrak{C}}} 

\newcommand{\ft}{\ensuremath{\mathfrak{t}}} 

\newcommand{\Atp}{\ensuremath{\mbox{\rm Atp}}}  

\newcommand{\st}{\ensuremath{\mbox{\rm st}}}   
\newcommand{\St}{\ensuremath{\mbox{\rm St}}}   

\newcommand{\tl}{\ensuremath{\mbox{\rm tl}}}   

\newcommand{\atp}{\ensuremath{\mbox{\rm atp}}}  
\newcommand{\itp}{\ensuremath{\mbox{\rm itp}}}  
\newcommand{\col}{\ensuremath{\mbox{\rm col}}}  
\newcommand{\dom}{\ensuremath{\mbox{\rm dom}}}  

\newcommand{\restr}{\!\!\restriction\!\!} 
\newcommand{\blank}{\ensuremath{\llcorner \hspace{-1mm} \lrcorner}} 

\newcommand{\val}[1]{\ensuremath{\textsc{val}^{#1}}} 




\newcommand{\sheq}[2]{\ensuremath{\textsc{eq}(#1, #2)}} 

\newcommand{\str}[1]{{\mathfrak{#1}}}

\newcommand{\sig}{\mathsf{sig}} 






\begin{document}

\begin{abstract}
We introduce the \emph{adjacent fragment} $\AF$ of first-order logic, obtained by restricting the sequences of variables occurring as arguments in atomic formulas. 
The adjacent fragment generalizes (after a routine renaming) the two-variable fragment of first-order logic as well as the so-called fluted fragment.
We show that the adjacent fragment has the finite model property, and that the satisfiability problem for its $k$-variable sub-fragment is
in $(k{-}1)$-\NExpTime. Using known results on the fluted fragment, it follows that the satisfiability problem for the whole adjacent fragment
is $\Tower$-complete. 
We~additionally consider the effect of the adjacency requirement on the well-known guarded fragment of first-order logic, whose satisfiability 
problem is \TwoExpTime-complete. We show that the satisfiability problem for the intersection of the adjacent and guarded adjacent fragments remains $\TwoExpTime$-hard. 
Finally, we show that any relaxation of the adjacency condition on the allowed order of variables in
argument sequences yields a logic whose satisfiability and finite satisfiability problems are undecidable.
\end{abstract}

 \maketitle

\section{Introduction}
\label{sec:intro}

The quest to find fragments of first-order logic for which satisfiability is algorithmically decidable has been a central undertaking of mathematical logic since  the 
appearance of D.~Hilbert and W.~Ackermann's {\em Grundz\"{u}ge der theoretischen Logik}~\cite{hilbert, book:ha50} almost a century ago. The best-known 
of these fragments belong to just three families:
(i) \textit{quantifier prefix} fragments~\cite{BorgerGG1997}, where we are restricted to formulas in prenex form with a specified quantifier sequence; 
(ii) \textit{two-variable} logics~\cite{Henkin1967}, where the only logical variables occurring as arguments of predicates are $x_1$ and $x_2$;
and (iii) \textit{guarded} logics, where either quantifiers or negated formulas are relativized by atomic formulas featuring all the free variables in their scope~\cite{ABN98,BaranyCS15}.
There is, however, a fourth family of first-order fragments for which satisfiability is decidable, but which has languished in relative obscurity. 
The fragments of this family are defined by restricting the allowed sequences of variables occurring as arguments in atomic formulas, an idea dating back 
W.~Quine's {\em homogeneous $m$-adic formulas}~\cite{quine69}. Such {\em argument-sequence fragments}, as we might call them, 
include the \emph{fluted fragment}~\cite{purdy96}, the \emph{ordered fragment}~\cite{herzig90} and the \emph{forward fragment}~\cite{Bednarczyk21}. In this paper, we identify a new argument-sequence fragment, the \emph{adjacent fragment}, which includes the fluted, ordered and forward fragments, and subsumes, in a sense we make precise, the two-variable fragment. We show that the satisfiability problem for the adjacent fragment is decidable, and obtain bounds on its complexity. Finally, we
show that the adjacent fragment is maximal among the argument-sequence fragments whose satisfiability and finite satisfiability problems are decidable.

To explain how restrictions on argument sequences work, we consider presentations of first-order logic over purely relational signatures, employing individual variables from the alphabet $\set{x_1, x_2, x_3, \dots}$. Any atomic formula in this logic has the form~$p(\bar{x})$, where~$p$ is a predicate of arity $m$ (possibly 0), and $\bar{x}$ a word of length $m$ over the alphabet of variables. Call a first-order formula $\phi$ \emph{index-normal} if any occurrence of a quantifier binding a variable $x_k$ has as its scope 
a Boolean combination of formulas that either (i) are atomic with free variables among $x_1, \dots, x_k$, or (ii) have as their major connective a quantifier binding~$x_{k+1}$. By re-indexing variables, any first-order formula can easily be written as a
logically equivalent index-normal formula. In the {\em fluted fragment}, as defined by W.~Purdy~\cite[Sec.~3]{purdy96}, we confine attention to index-normal formulas, but additionally insist that any atom occurring in a context in which~$x_k$ is available for quantification have the form $p(x_{k - m +1} \cdots x_k)$, i.e.~$p(\bar{x})$ with $\bar{x}$ being a \emph{suffix} of
$x_1 \cdots x_k$. 
In~the \emph{ordered fragment}, due to A.~Herzig~\cite[Sec.~2]{herzig90}, by contrast,
we insist that $\bar{x}$ be a \emph{prefix} of
$x_1 \cdots x_k$.  In the \emph{forward fragment}, due to B.~Bednarczyk~\cite[Sec.~3.1]{Bednarczyk21}, we insist only that $\bar{x}$ be an \emph{infix} (i.e.~a factor) of
$x_1 \cdots x_k$. 
All these logics have the finite model property, and hence are decidable for satisfiability.

We denote the fluted fragment by \FL, and the sub-fragment of \FL{} involving at most $k$ variables (free or bound) by $\FLv{k}$. It is known that  
the satisfiability problem for $\FLv{k}$ is in $(k{-}2)$-\NExpTime{} for
all $k \geq 3$, and $\lfloor k/2\rfloor$-\NExpTime-hard for all $k \geq 2$~\cite[Thm.~3.2 \& Thm.~4.2]{phst19}.
Thus, satisfiability for the whole of \FL{} is \Tower-complete, in the system of trans-elementary complexity classes due to S.~Schmitz~\cite[Sec.~3.1.1]{schmitz16}. By contrast, the satisfiability problem for the ordered fragment is $\PSpace$-complete~\cite{herzig90}
(see also a related result by R. Jaakkola~\cite[Thm.~13]{Jaakkola21}).
On the other hand, the apparent liberalization afforded by the forward fragment yields no useful increase in expressive power, and
there is a polynomial-time, satisfiability-preserving reduction of the forward fragment to the fluted fragment~\cite[p.~182]{BednarczykJ22}.
The term ``fluted'' originates with Quine~\cite{quine76a}, and presumably invites us to imagine the atoms in formulas aligned in such a way that the variables form columns.  (However, it is unclear that Quine had in mind the fragment now generally referred to as the fluted fragment; for a brief historical discussion see,
\cite[p.~221]{PrattHartmann23}.)
Note that none of these fragments can state that a relation is reflexive or symmetric, as can be easily established using a game-theoretic argument~\cite[Sec.~3]{BednarczykJ22}.

Say that a word $\bar{x}$ over the alphabet $\set{x_1, \dots, x_k}$ is \emph{adjacent} if the indices of neighbouring letters differ by at most 1. For example,
$x_3x_2x_1x_2x_2x_2x_3x_4x_3$ is adjacent, but $x_1x_3x_2$ is not.  The \emph{adjacent fragment}
is analogous to the fluted, ordered and forward fragments, but we allow any atom $p(\bar{x})$
to occur in a context where $x_k$ is available for quantification as long as $\bar{x}$ is an adjacent word over $\set{x_1, \dots, x_k}$
(see Sec.~\ref{sec:preliminaries} for a formal definition).
As a simple example, the formula 
\begin{equation}
	\forall{x_1}\forall{x_2}\forall{x_3}\exists{x_4}\forall{x_5} \ \big(p(x_1 x_2 x_3 x_2 x_3 x_4 x_5) \to p(x_1 x_2 x_3 x_4 x_3x_4x_5)\big).
\label{eq:simpleExample}
\end{equation}
is in the adjacent fragment. (In fact, it is a validity,
as can be seen by assigning~$x_4$ the same value as $x_2$.) 
We denote the adjacent fragment by \AF, and the sub-fragment of \AF{} involving at most $k$ variables (free or bound) by $\AFv{k}$.
Evidently, \AF{} includes the fluted, ordered and 
forward fragments; the inclusion is strict, since the
formulas $\forall x_1\, r(x_1x_1)$ and 
$\forall x_1 x_2 (r(x_1x_2)\rightarrow r(x_2x_1))$, stating that $r$ is reflexive and symmetric,
respectively, are in \AF.
As we show in the sequel (Theorem~\ref{theo:fo2-and-af-over-binary-sig-are-the-same}), any formula of the two-variable fragment may be translated 
to a logically equivalent formula of $\AF$.
Hence, a number of other well-known logics can be translated naturally into the adjacent fragment, including
the system of basic multimodal propositional logic $K$ 
(under the standard translation), 
a great many description logics~\cite{dlbook}, 
and even polyadic extensions~of multimodal logic~\cite[Sec.~1.5]{GorankoO07}.

Our principal result is that \AF{} has the finite model property, and that the satisfiability problem for $\AFv{k}$ is in $(k{-}1)$-\NExpTime{} for all $k \geq 2$.
The proof follows the same basic strategy as employed for \FLv{k} in~\cite{phst19}: the (finite) satisfiability problem for $\AFv{k{+}1}$ is
reduced, with exponential blow-up, to that for~$\AFv{k}$. The
result then follows from the fact that
$\AFv{2}$ is subsumed by the two-variable fragment, $\FOt$, which
has the finite model property, and
for which satisfiability is known to be in \NExpTime~\cite{GradelKV97}. 
On the other hand, $\AFv{k}$ includes $\FLv{k}$, whence the above-mentioned lower bounds for the latter carry over: the satisfiability problem for $\AFv{k}$ is $\lfloor k/2 \rfloor$-\NExpTime-hard for all $k \geq 2$ (and \NP-hard for $k \in \{0,1\}$). We remark that, using techniques similar to those employed in~\cite{phst19},
we can in fact shave one exponential off the upper bounds for $\AFv{k}$ ($k \geq 3$) when the equality predicate is disallowed; in the interests of simplicity, we leave this as an exercise to the interested reader. We additionally consider the \textit{guarded adjacent fragment} $\GA$, defined as the intersection of the adjacent fragment, $\AF$, and the 
guarded fragment, $\GF$, due to H.~Andr\'eka, J.~van~Benthem, and I.~N\'emeti~\cite[Sec.~4.1]{ABN98}. The satisfiability 
problem for $\GF$ is \TwoExpTime-complete, as shown by by E.~Gr\"adel~\cite[Thm.~4.4]{Gradel99}.
We show in the sequel that the satisfiability problem for $\GA$ remains \TwoExpTime-hard. 
We finish with a pair of results on the expressiveness of the adjacent fragment. 
First, we show that this fragment subsumes, in a sense we make precise, the two-variable fragment, with a converse subsumption holding for
signatures featuring predicates of arity at most two. Second, we
consider liberalizations of the adjacent fragment, in which the palette of permitted variable sequences is further extended. 
We show that the fragment \AF{} is a \textit{maximal} argument-sequence fragment for which satisfiability (or finite satisfiability) is decidable.

The structure of the paper is as follows. Sec.~\ref{sec:preliminaries} defines the fragments considered in this paper and establishes the notation used throughout.
Sec.~\ref{sec:primGen} is devoted to the combinatorics of words, and presents a pair of results (Lemmas~\ref{lemma:main} and ~\ref{lemma:main2}), which 
form the basis of the following two sections. 
Sec.~\ref{section:AF-upper-bounds} establishes upper bounds on the complexity of satisfiability for the sub-fragment of $\AFv{k}$ \textit{without equality}; concentrating on the equality-free case simplifies the combinatorics, thus bringing the key technical ideas 
into sharper focus. Sec.~\ref{section:AF-upper-boundsEq} then extends these results to the logic with equality. Sec.~\ref{sec:guarded} gives the
advertised lower complexity bound for the guarded adjacent fragment. Sec.~\ref{sec:extensions-and-future-work} establishes the observations on expressive power
mentioned in the previous paragraph. The results of Secs.~\ref{section:AF-upper-bounds}
and~\ref{sec:guarded}, concerning the adjacent fragment without equality, 
were first announced in the conference paper~\cite{bkp-h23}. The present article provides full proofs, and extends our results to the full 
adjacent fragment with equality. 


\section{Preliminaries}
\label{sec:preliminaries}
Let $m$ and $k$ be non-negative integers.
For any integers $i$ and $j$, we write $[i,j]$ to denote the set of integers $h$ such that $i \leq h \leq j$.
A~function $f \colon [1,m] \rightarrow [1,k]$ is \emph{adjacent}
if $|f(i{+}1) {-} f(i)| \leq 1$ for all $i \in [1,m{-}1]$.
We write $\bA^m_k$ to denote the set of adjacent functions $f \colon [1,m] \rightarrow [1,k]$. Since $[1,0] = \emptyset$, we have $\bA^0_k = \set{\emptyset}$, and $\bA^m_0 = \emptyset$ if $m >0$.
Let $A$ be a non-empty set. Regarding $A$ as an alphabet, a word $\bar{a}$ over the alphabet $A$ is simply
a tuple of elements from $A$; we alternate freely in the sequel between these two ways of
speaking, as the context requires. Accordingly, we take $A^k$ to denote the set of words
over $A$ having length exactly $k$, and $A^*$, the set of all finite words over $A$. 
If $\bar{a} = a_1 \cdots a_k$, we write $|\bar{a}| = k$ for the length of $\bar{a}$, and $\tilde{a} = a_k \cdots a_1$ for the reversal of $\bar{a}$.
Any function
$f\colon [1,m] \rightarrow [1,k]$ (adjacent or not) induces a natural map from $A^k$ to $A^m$ defined by
$\bar{a}^f = a_{f(1)} \cdots a_{f(m)}$.
If $f \in \bA^m_k$ (i.e.~if $f$ is adjacent), we may think of $\bar{a}^f$ as the result of a `going for a stroll' on the tuple $\bar{a}$, starting at the element $a_{f(1)}$, and moving left, right,
or remaining stationary according to the sequence of values $f(i+1) {-} f(i)$ (for $1 \leq i <m$).

For any $k \geq 0$, denote by $\bx_k$ the fixed word $x_1 \cdots x_k$ (if $k=0$, this is the empty word).
A \emph{$k$-atom} is an expression $p(\bx^f_k)$, where $p$ is a predicate of
some arity~$m$, and $f\colon [1,m] \rightarrow [1,k]$. 
Thus, in a $k$-atom, each argument is a variable chosen from $\bx_k$. If $f$ is adjacent, we speak of an
\emph{adjacent $k$-atom}.
Thus,
in an adjacent $k$-atom, the indices of neighbouring
arguments differ by at most one. 
The equality predicate is allowed when $m=2$. 
Proposition letters (predicates of arity $m=0$) count as (adjacent) $k$-atoms
for all $k \geq 0$, taking $f$ to be the empty function.
When $k= 0$, we perforce have $m = 0$, since otherwise, there are no functions from $[1,m]$ to~$[1,k]$; thus the 0-atoms are
precisely the proposition letters. 
When $k \leq 2$, the adjacency requirement is vacuous, and we prefer to speak
simply of {$k$-atoms}.\label{page:Def}

We define the sets of first-order formulas $\AFv{[k]}$ by simultaneous structural induction for all $k \geq 0$:
\begin{enumerate}
	\item every adjacent $k$-atom is in $\AF^{[k]}$;
	\item $\AF^{[k]}$ is closed under Boolean combinations;
	\item if $\phi$ is in $\AFv{[k{+}1]}$, then $\exists x_{k{+}1}\, \phi$ and  
	$\forall x_{k{+}1}\, \phi$ are in $\AFv{[\ell]}$ for all $\ell \geq k$.
\end{enumerate}
Now let $\AF=\bigcup_{k \geq 0} \AF^{[k]}$ and define $\AFv{k}$ to be the set of formulas of $\AF$ featuring no variables other than $x_1, \dots, x_k$, free or bound.
We call
$\AF$ the \emph{adjacent fragment} and $\AFv{k}$ the \emph{$k$-variable adjacent fragment}.
Note that formulas of $\AF$ contain no individual constants or function symbols; however, they may contain equality.
The primary objects of interest here are the languages  $\AF$ and~$\AFv{k}$; the sets of formulas $\AF^{[k]}$ will make only occasional 
appearances in the sequel.
Thus, for example, the formula~\eqref{eq:simpleExample} is in $\AFv{k}$ if and only if $k \geq 5$, but it is in $\AFv{[k]}$ for all $k \geq 0$.
On the other hand, the \textit{quantifier-free} formulas of
$\AFv{[k]}$ and $\AFv{k}$ are the same.  
A simple structural induction establishes that $\AFv{[k]} \subseteq \AFv{[\ell]}$ for all $k \leq \ell$.


We silently assume the variables $\bx_k \coloneqq x_1 \cdots x_k$ to be ordered in the standard way. That is: if
$\phi$ is a formula of $\AF^{[k]}$, $\fA$ a structure interpreting its signature, and $\bar{a} \coloneqq a_1 \cdots a_k \in A^k$,
we say simply that $\bar{a}$ \textit{satisfies} $\phi$ in $\fA$, and write $\str{A} \models \phi[\bar{a}]$ to mean that
$\bar{a}$ satisfies~$\phi$ in $\fA$ under the assignment $x_i \leftarrow a_i$ (for all $1 \leq i \leq k)$. (This does not
necessarily mean that
each of the variables of $\bx_k$ actually appears in $\phi$.)
If $\phi$ is true under  all assignments in all structures,
we write $\models \phi$; the notation
$\phi \models \psi$ means the same as $\models \phi \rightarrow \psi$ (i.e.~variables are consistently instantiated in $\phi$ and $\psi$).
The notation $\phi(\bar{v})$, where $\bar{v} \coloneqq v_1 \cdots v_k$ are variables (chosen from among $x_1, x_2, \dots$), will always be used to denote
the formula that results from substituting $v_i$ for $x_i$ ($1 \leq i \leq k)$~in~$\phi$, rather than to indicate the order in which elements of some structure
are to be assigned to variables. 
If $\phi$ is any formula, $\fv(\phi)$ denotes the set of free variables of $\phi$.
A~{\em sentence} is a formula with no free variables. Necessarily, all formulas
of $\AFv{[0]}$ are sentences.
For a sentence $\varphi$ we write simply $\fA \models \phi$ to mean that $\phi$ is true in $\fA$.
%
We call the set of predicates used in $\varphi$ \emph{the signature of $\varphi$}, denoted~$\sig(\varphi)$.

We adapt the standard notion of (atomic) $k$-types for the fragments studied here. Fix some non-logical relational signature $\tau$ (i.e.~not
containing the equality predicate).
An \emph{adjacent $k$-literal} \emph{over} $\tau$ is an {\em adjacent $k$-atom} or its negation, featuring a predicate in $\tau \cup \set{=}$.
An \emph{adjacent $k$-type} \emph{over} $\tau$ is a maximal consistent set of adjacent $k$-literals over $\tau$.
Reference to $\tau$ is suppressed where clear from context.
We use the letters $\zeta$, $\eta$ and~$\xi$ to range over
adjacent $k$-types for various $k$. We denote by
$\Atp^\tau_k$ the set of all adjacent $k$-types over $\tau$. For finite $\tau$, we identify members of
$\Atp^\tau_k$ with their conjunctions, and treat them as (quantifier-free) $\AFv{k}$-formulas,
writing $\zeta$ instead of $\bigwedge \zeta$.
Given a pair of integers $i, j$ ($1 \leq i \leq j \leq k$), we write $\zeta \restr_{[i, j]}$ for the set $\eta$
obtained by deleting literals in~$\zeta$ that feature variables outside the range $[i, j]$.
It is evident that (after a shift in indices) $\eta$ is a $(j{-}i{+}1)$-type. 
We write $\zeta^+$ for the quantifier-free $\AFv{\ell+1}$-formula $\zeta(x_2, \dots, x_{k{+}1})$ obtained by incrementing the index of each variable. 
When $k \leq 2$, the adjacency requirement is vacuous, and in this case we shall simply speak of \emph{$k$-types}. 
Every quantifier-free $\AFv{k}$-formula $\chi$
is thus logically equivalent to a disjunction of adjacent $k$-types, as may be seen by 
writing $\chi$ in disjunctive normal form. In particular,
if $\chi$ is satisfiable, then there is an adjacent $k$-type which entails it.
If $\fA$ is a $\tau$-structure and $\bar{a}$ a $k$-tuple of elements from $A$, there
is a unique adjacent $k$-type $\zeta$ such that $\fA \models \zeta[\bar{a}]$; we denote
this adjacent $k$-type by $\atp^\fA[\bar{a}]$, and call it the {\em adjacent type of $\bar{a}$ in $\fA$}.
It is not required that the elements of $\bar{a}$ be distinct; note however that any (in)equality literals occurring in adjacent types must themselves be adjacent. For instance,
$x_5 = x_6$ or $x_5 \neq x_6$ may occur in an adjacent type, but not $x_4 = x_6$ or $x_4 \neq x_6$. 

The following derivative notions relating to adjacent types will feature in the sequel. 
Call an adjacent $k$-literal {\em covering} if it features all of the variables in $\bx_k$, i.e.~if it has the form $p(\bx_k^f)$
or $\neg p(\bx_k^f)$ with $p$ of arity $m$ and $f \in \bA^m_k$ {surjective}. Define an {\em incremental $k$-type} over $\tau$ to be a maximal consistent set
$\iota$ of covering adjacent $k$-literals over $\tau$. 
If $\xi$ is an adjacent $k$-type, then the {\em increment of} $\xi$, denoted  $\partial\xi$, is the (unique) incremental $k$-type included in $\xi$.
If $\fA$ interprets $\tau$ and $\bar{c}$ is a $k$-tuple over $\fA$, then the {\em incremental type of} $\bar{c}$ in $\fA$ is the (unique) incremental 
type $\iota$ such that $\fA \models \iota[\bar{c}]$; we write $\itp^\fA[\bar{c}]$ to denote $\iota$. 
Suppose now that $\bar{c} \coloneqq a\bar{a}b$, $\atp^\fA[a\bar{a}]= \zeta$, $\atp^\fA[\bar{a}b]= \eta$ and $\itp^\fA[\bar{c}]= \iota$. It should be
obvious that, writing $\xi$ for $\atp^\fA[\bar{c}]$, we have $\xi = \zeta \cup \eta^+ \cup \iota$, and, moreover, $\iota = \partial \xi$.

The following lemma establishes a {normal form} for \AFv{\ell+1}-sentences, which simplifies the decision procedures discussed in Secs.~\ref{section:AF-upper-bounds} and~\ref{section:AF-upper-boundsEq}. 
\begin{lemma}
Let $\phi$ be a sentence of \AFv{\ell+1}, where $\ell \geq 1$. We can compute, in polynomial time, an \AFv{\ell+1}-formula
$\psi$ satisfiable over the same domains as $\phi$, of the form
\begin{equation}
\psi \coloneqq \bigwedge_{i \in I} \forall \bx_{\ell} \exists x_{\ell+1}\, \gamma_i \wedge 
\forall \bx_{\ell+1}\,
\beta,
\label{eq:anf}
\end{equation}  
where $I$ is a finite index set, 
and the formulas $\gamma_i$ and $\beta$ are quantifier-free; moreover, if $\phi$ is equality-free, then so is $\psi$.
\label{lma:anf}
\end{lemma}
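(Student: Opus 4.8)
The plan is to adapt the classical Scott-normal-form construction to the adjacent setting, being careful that every abbreviation I introduce is itself an adjacent atom, and that the quantifier pattern of \eqref{eq:anf} is respected. First I would bring $\phi$ into negation normal form and then work from the outside in on its quantifier structure. The standard device is to introduce, for each subformula $\chi$ whose principal connective is a quantifier binding some $x_j$, a fresh predicate symbol $p_\chi$ together with a defining sentence that says $p_\chi(\bx_{j-1}^{\,\mathrm{id}})$ — more precisely a fresh atom on exactly the free variables of $\chi$ — is equivalent to $\chi$. The crucial point to check is that such an atom is \emph{adjacent}: since $\chi$ is an $\AFv{[\,]}$-subformula, its free variables form an \emph{initial segment} $x_1,\dots,x_{j-1}$ (this is exactly the index-normal discipline baked into the inductive definition of $\AFv{[k]}$ on page \pageref{page:Def}), so the atom $p_\chi(x_1\cdots x_{j-1})$ uses the strictly increasing, hence adjacent, argument sequence. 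Replacing $\chi$ by this atom inside its parent formula is therefore a legal \AF-substitution, and iterating from the top yields, after renaming, a conjunction of sentences each of which has at most one genuine quantifier alternation over the "definition layer" it introduces.

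Second, I would flatten the resulting quantifier prefixes. Each defining sentence has, up to relettering of the bound variables to $x_1,\dots$, one of the shapes $\forall \bx_{j-1}\,(p_\chi \leftrightarrow \forall x_j\, \theta)$ or $\forall \bx_{j-1}\,(p_\chi \leftrightarrow \exists x_j\, \theta)$ with $\theta$ quantifier-free after the top-level replacements. The "$\forall$" case and the forward direction of "$\exists$" are universally quantified over $\bx_j$ and can be collected into the single conjunct $\forall \bx_{\ell+1}\,\beta$ of \eqref{eq:anf}, after padding with dummy variables up to index $\ell+1$ (padding preserves adjacency and, trivially, truth). The backward direction of the "$\exists$" case, $\forall \bx_{j-1}\,(p_\chi \to \exists x_j\,\theta)$, is of the form $\forall \bx_{j-1}\,\exists x_j\,\gamma$; again I pad the universal block out to $\bx_\ell$ and the existential variable to $x_{\ell+1}$ — relabelling $x_j$ as $x_{\ell+1}$ and inserting fresh quantified variables $x_j,\dots,x_\ell$ that do not occur in $\gamma$ — to obtain a conjunct $\forall\bx_\ell\,\exists x_{\ell+1}\,\gamma_i$. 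Since $\ell\ge 1$ there is always room to do this. Collecting all conjuncts gives exactly the form \eqref{eq:anf}. The whole procedure introduces one fresh predicate per quantifier occurrence and copies each subformula a bounded number of times, so it runs in polynomial time; and since equality is never used to build the new atoms, equality-freeness is preserved.

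Third, I would verify equisatisfiability over the same domains. In one direction, any model of $\psi$ restricted to the original signature is a model of $\phi$, because the defining biconditionals force each $p_\chi$ to denote exactly the relation defined by $\chi$, so $\phi$ and $\psi$ agree on $\sig(\phi)$-structures of the same size; in the other direction, given a model of $\phi$ one interprets each fresh $p_\chi$ by the relation $\chi$ defines, obtaining a model of $\psi$ on the same domain. The main obstacle, and the only place real care is needed, is the bookkeeping in the previous paragraph: one must be sure that (i) the free variables of every subformula one abstracts really do form an initial segment $x_1,\dots,x_{j-1}$, so that the replacement atom is adjacent, and (ii) the padding with dummy variables does not accidentally violate adjacency or change the bound. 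Point (i) is guaranteed by the index-normal shape of $\AFv{[k]}$-formulas, and point (ii) is immediate since appending or inserting consecutive indices keeps argument sequences adjacent. Everything else is the routine Scott-style argument.
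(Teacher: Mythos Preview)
Your overall strategy---replacing quantified subformulas by fresh atoms, Scott-style---is the same as the paper's. But two of your bookkeeping claims are wrong, and the second one actually breaks the construction.

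First, you assert that the free variables of a quantified $\AF$-subformula $\chi=Qx_j\theta$ always form an initial segment $x_1,\dots,x_{j-1}$. This is false: in the $\AF$-sentence
\[
\forall x_1\forall x_2\forall x_3\bigl(r(x_1x_2x_3)\wedge\exists x_4\,(s(x_1)\wedge t(x_3x_4))\bigr),
\]
the subformula $\exists x_4(s(x_1)\wedge t(x_3x_4))$ has free variables $\{x_1,x_3\}$, and an atom $p_\chi(x_1x_3)$ is not adjacent. The fix is what the paper does (and what your first formulation ``$p_\chi(\bx_{j-1}^{\,\mathrm{id}})$'' already says before the ``more precisely'' gloss contradicts it): take $p_\chi$ of arity $j{-}1$, where $x_j$ is the bound variable, and use the full prefix $p_\chi(x_1\cdots x_{j-1})$ regardless of which of those variables actually occur free.

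Second, and more seriously, your padding step destroys adjacency. You propose to turn $\forall\bx_{j-1}\exists x_j\,\gamma$ into $\forall\bx_\ell\exists x_{\ell+1}\,\gamma'$ by ``relabelling $x_j$ as $x_{\ell+1}$ and inserting fresh quantified variables $x_j,\dots,x_\ell$ that do not occur in $\gamma$''. But $\gamma$ will typically contain atoms in which $x_{j-1}$ and $x_j$ are neighbouring arguments; after your substitution these become $x_{j-1}$ and $x_{\ell+1}$, whose indices differ by $\ell{-}j{+}2>1$, so the atom is no longer adjacent and $\gamma'\notin\AFv{[\ell+1]}$. The correct re-indexing (which the paper gestures at with ``re-indexing variables'') is a \emph{uniform shift}: replace every $x_h$ in $\gamma$ by $x_{h+\ell+1-j}$, yielding $\forall x_{\ell-j+2}\cdots\forall x_\ell\exists x_{\ell+1}\,\gamma'$, and then prepend vacuous universals $\forall x_1\cdots\forall x_{\ell-j+1}$. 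A uniform shift preserves all index differences and hence adjacency; your selective relabelling does not.

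Two smaller points. The paper processes \emph{innermost} quantifiers first, so the body is already quantifier-free at each step; this avoids your ``after the top-level replacements'' indirection. And your accounting of which directions of the biconditional are purely universal is off: for $\chi=\forall x_j\theta$, the direction $(\forall x_j\theta)\to p_\chi$ unfolds to $\forall\bx_{j-1}\exists x_j(\neg\theta\vee p_\chi)$, which is $\forall^{j-1}\exists$, not universal, and so contributes another $\gamma_i$ rather than landing in $\beta$.
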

\begin{proof}
If the sentence $\phi$ is quantifier-free, then it is a formula of the propositional calculus, and the result is easily obtained by adding vacuous quantification. 
Otherwise, write $\phi_0 = \phi$, and  let $\theta\coloneqq Q x_{k{+}1}\, \chi$
be a subformula of $\phi$, where $Q \in \set{\forall, \exists}$, such that $\chi$ is quantifier-free. 
Writing $\bar{\exists}= \forall$ and $\bar{\forall} = \exists$,
let $p$ be a new predicate 
of arity $k$, let $\phi_1$ be the result of replacing $\theta$ in $\phi_0$ by the atom $p(\bx_k)$, 
and let  $\psi_1$ be the formula 
\begin{equation*}
\forall \bx_k Q x_{k+1}\big(p(\bx_{k}) \rightarrow \chi\big) \wedge
\forall \bx_k \bar{Q} x_{k+1}\big(\chi \rightarrow p(\bx_{k})\big).
\end{equation*}
It is immediate that $\phi_1 \wedge \psi_1 \models \phi_0$. Conversely, if $\fA \models \phi_0$, then we may expand $\fA$ 
to a model $\fA'$ of $\phi_1 \wedge \psi_1$ by taking $p^{\fA'}$ to be the set of $k$-tuples $\bar{a}$ such that $\fA \models \theta[\bar{a}]
$. Evidently, $\phi_1$ is a sentence of \AFv{\ell+1}. 
Processing $\phi_1$ in the same way, and proceeding similarly, we obtain a set of formulas $\phi_2, \dots, \phi_m$ and $\psi_2, \dots, \psi_m$, with  $\phi_m$ quantifier-free and
$\phi_0$ satisfiable over the same domains
as $\psi_1 \wedge \cdots \wedge \psi_m \wedge \phi_m$. Since $\phi_m$ is a sentence, it is a formula of the
propositional calculus. By moving $\phi_m$ inside one of the quantified formulas, re-indexing variables and re-ordering conjuncts, we obtain a formula
$\psi$ of the form~\eqref{eq:anf}.
\end{proof}
We refer to any \AFv{k}-sentence having the form~\eqref{eq:anf} as a \textit{normal-form} formula.

The following notation will be useful.
If $\chi$ is any quantifier-free $\AFv{\ell+1}$-formula, we denote by $\chi^{-1}$ the formula $\chi(x_{\ell+1}, \dots, x_1)$
obtained by simultaneously replacing each variable $x_h$ by $x_{\ell -h +2}$ ($1 \leq h \leq \ell+1$); and we denote by $\hat{\chi}$ the formula $\chi \wedge \chi^{-1}$. Obviously $\chi^{-1}$ and $\hat{\chi}$ are also in $\AFv{\ell+1}$.


 \section{Primitive generators of words}
\label{sec:primGen}
The upper complexity bounds obtained below depend on an observation concerning the combinatorics of words, which may be of independent interest.
For words $\bar{a}, \bar{c} \in A^*$ 
with $|\bar{a}|=k$ and $|\bar{c}|=m$, 
say that $\bar{a}$ \emph{generates} $\bar{c}$ if $\bar{c} = \bar{a}^f$ for some \textit{surjective} function $f \in \bA^m_k$. 
As explained above, it helps to think of $\bar{a}^f$ as the sequence of letters encountered on an $m$-step `stroll' backwards and forwards on the tuple $\bar{a}$, with $f(i)$ giving the index of 
our position in $\bar{a}$ at the $i$th step. The condition that 
$f$ is adjacent ensures that we never change position by more than one letter at a time; the condition that $f$ is surjective ensures that we visit every position of $\bar{a}$. 
We may depict $f$ as a piecewise linear graph,
with the \mbox{generat{\em{}ed}} word superimposed on the abscissa and the \mbox{generat{\em ing}} word on the ordinate (Fig.~\ref{fig:example}).
\begin{figure}[h]
  \begin{center}
    \begin{tikzpicture}[scale= 0.25]
      \draw[->] (0,0) --  (25,0);
      \coordinate[label={$\bar{c}$}] (wLabel) at (24,0);
      
      \draw[->-=1] (-1,1) to (-1, 10.5);
      \coordinate[label={\rotatebox{90}{$\bar{a}$}}] (wLabel) at(-0.25, 8.5) ;
      
      \coordinate[label=left:{\rotatebox{90}{c}}] (wLabel) at (-1,1);
      \coordinate[label=left:{\rotatebox{90}{b}}] (wLabel) at (-1,2);
      \coordinate[label=left:{\rotatebox{90}{a}}] (wLabel) at (-1,3);
      \coordinate[label=left:{\rotatebox{90}{d}}] (wLabel) at (-1,4);
      \coordinate[label=left:{\rotatebox{90}{e}}] (wLabel) at (-1,5);
      \coordinate[label=left:{\rotatebox{90}{f}}] (wLabel) at (-1,6);
      \coordinate[label=left:{\rotatebox{90}{b}}] (wLabel) at (-1,7);
      \coordinate[label=left:{\rotatebox{90}{a}}] (wLabel) at (-1,8);
      
      \coordinate[label={a}] (wLabel) at (0,-2);
      \coordinate[label={b}] (wLabel) at (1,-2);
      \coordinate[label={c}] (wLabel) at (2,-2);
      \coordinate[label={b}] (wLabel) at (3,-2);
      \coordinate[label={a}] (wLabel) at (4,-2);
      \coordinate[label={a}] (wLabel) at (5,-2);
      \coordinate[label={a}] (wLabel) at (6,-2);
      \coordinate[label={d}] (wLabel) at (7,-2);
      \coordinate[label={e}] (wLabel) at (8,-2);
      \coordinate[label={f}] (wLabel) at (9,-2);
      \coordinate[label={e}] (wLabel) at (10,-2);
      \coordinate[label={d}] (wLabel) at (11,-2);
      \coordinate[label={a}] (wLabel) at (12,-2);
      \coordinate[label={d}] (wLabel) at (13,-2);
      \coordinate[label={e}] (wLabel) at (14,-2);
      \coordinate[label={f}] (wLabel) at (15,-2);
      \coordinate[label={b}] (wLabel) at (16,-2);
      \coordinate[label={a}] (wLabel) at (17,-2);
      \coordinate[label={b}] (wLabel) at (18,-2);
      \coordinate[label={f}] (wLabel) at (19,-2);
      
      \draw (0,3) -- (2,1) -- (4,3) -- (6,3) -- (9,6) -- (12,3) --(17,8) -- (19,6);
    \end{tikzpicture}
  \end{center}
  \caption{Generation of {abcbaaadefedadefbabf} from {cbadefba}.}
  \label{fig:example}
\end{figure}
We refer to any maximal interval $[i,j] \subseteq [1,m]$ over which $f(h{+}1){-}f(h)$ is constant (for $i \leq h < j$) as a {\em leg} of $f$. Thus, 
the legs correspond to 
the straight-line segments in the graph of $f$. A leg is {\em increasing}, {\em flat} or {\em decreasing} according as $f(h{+}1){-}f(h)$ is
1, 0 or -1.

Every word $\bar{a}$ generates both itself and its reversal, $\tilde{a}$. Moreover, if $\bar{a}$ generates~$\bar{c}$, then
$|\bar{c}| \geq |\bar{a}|$, by the surjectivity requirement. In fact, $\bar{a}$ and $\tilde{a}$ are the only words of length $|\bar{a}|$ generated by $\bar{a}$. 
Finally, generation is transitive: if $\bar{a}$ generates $\bar{b}$ and $\bar{b}$ generates $\bar{c}$, then $\bar{a}$ generates $\bar{c}$.
We call $\bar{a}$ \textit{primitive} if it is not generated by any word shorter than itself, equivalently, if it is generated only by itself and its reversal. 
For example, $babcd$ and $abcbcd$ are not primitive, because they are generated by $abcd$; but
$abcbda$ is primitive.  
Note that factors of primitive words need not be primitive; for example, $abcbda$ is primitive, but its factor $bcb$ is not.
Define  a {\em primitive generator} of $\bar{c}$ to be
a generator of $\bar{c}$ that is itself primitive.
It follows from the foregoing remarks 
that every word $\bar{c}$ has some primitive generator $\bar{a}$, and indeed, $\tilde{a}$ as well,
since the reversal of a primitive generator is clearly a primitive generator. The~following observation, on the other hand,
is surprising. 
\begin{lemma}[Thm.~1 of \cite{ph:primGen24}]
  The primitive generator of any word is unique up to reversal.
  \label{lemma:main}
\end{lemma}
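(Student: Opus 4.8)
The plan is to deduce the statement from a directedness property of the ``generates'' relation, which I write $\bar{a} \preceq \bar{c}$. Recall from the preceding discussion that $\preceq$ is a preorder, and that $\bar{a} \preceq \bar{b}$ together with $\bar{b} \preceq \bar{a}$ forces $|\bar{a}| = |\bar{b}|$ and hence $\bar{b} \in \{\bar{a}, \tilde{a}\}$ (the only words of length $|\bar{a}|$ generated by $\bar{a}$ being $\bar{a}$ and $\tilde{a}$). It therefore suffices to prove the property $(\dagger)$: \emph{if $\bar{a}$ is a primitive generator of $\bar{c}$ and $\bar{b}$ is any generator of $\bar{c}$, then $\bar{a} \preceq \bar{b}$}. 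For then, given two primitive generators $\bar{a}, \bar{a}'$ of $\bar{c}$, applying $(\dagger)$ twice (with the roles of $\bar{a}$ and $\bar{b}$ interchanged) gives $\bar{a} \preceq \bar{a}'$ and $\bar{a}' \preceq \bar{a}$, whence $\bar{a}' \in \{\bar{a}, \tilde{a}\}$, which is the assertion.

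I would then reduce $(\dagger)$ to a statement about how two ``foldings'' of a single word interact. A primitive word has no two equal consecutive letters (otherwise deleting one of them yields a shorter generator), so in a surjective adjacent $f$ witnessing $\bar{a} \preceq \bar{c}$ with $\bar{a}$ primitive the flat legs of $f$ lie precisely over the maximal constant blocks of $\bar{c}$; collapsing those blocks (and likewise in $\bar{b}$) is a routine step that reduces $(\dagger)$ to the case in which $\bar{c}$ has no two equal consecutive letters. Writing $\bar{c} = \bar{a}^f = \bar{b}^g$ with $f, g$ surjective, adjacent and — as a flat leg of $f$ or $g$ would force equal consecutive letters of $\bar{c}$ — without flat legs, the key claim $(\ast)$ becomes: \emph{if $\bar{a}$ is primitive, then $g(x) = g(y)$ implies $f(x) = f(y)$ for all positions $x, y$ of $\bar{c}$}. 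Granting $(\ast)$, the map $h$ given by $h(i) \coloneqq f(x)$ for any $x$ with $g(x) = i$ is well defined and surjective (as $g$ is surjective), it is adjacent (since for any pair of consecutive positions $x, x{+}1$ at which $g$ takes the values $i, i{+}1$ one has $|h(i{+}1) - h(i)| = |f(x{+}1) - f(x)| \leq 1$, and such a pair exists for each $i$ because $g$ is a surjective $\pm 1$-step walk), and it satisfies $b_i = c_x = a_{h(i)}$; hence $\bar{b} = \bar{a}^h$, giving $\bar{a} \preceq \bar{b}$, i.e.\ $(\dagger)$.

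It remains to prove $(\ast)$, and this is where essentially all the combinatorial weight lies. I would argue by contradiction, choosing a pair $x < y$ with $g(x) = g(y)$, $f(x) \neq f(y)$ and $y - x$ minimal. If $g$ returned to the level $g(x)$ at some point strictly between $x$ and $y$, minimality applied to the two shorter pairs so obtained would give $f(x) = f(y)$; hence $g$ performs on $[x,y]$ a single excursion lying strictly to one side of that level, so in particular $g(x{+}1) = g(y{-}1)$, and minimality then gives $f(x{+}1) = f(y{-}1) =: p_1$. Since $f(x) \neq f(y)$ are both neighbours of $p_1$ in $\bar{a}$, they are \emph{its two} neighbours, and $a_{f(x)} = c_x = c_y = a_{f(y)}$, so $\bar{a}$ carries a palindromic factor around position $p_1$. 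The hard part — the main obstacle — is to iterate this analysis along the nested pairs $(x{+}t, y{-}t)$ and to exploit surjectivity of $f$ so as to control how the excursion is embedded in $\bar{a}$, thereby extracting from it a word strictly shorter than $\bar{a}$ that generates $\bar{a}$ and so contradicting primitivity; I expect this to go through by induction on $|\bar{a}|$ (or on $y - x$), with a case split according to whether or not the excursion passes back through $p_1$. The two reductions above are routine; this last step is not.
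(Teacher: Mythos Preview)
The paper does not actually prove this lemma: it is quoted as Theorem~1 of the companion paper~\cite{ph:primGen24}, with no argument supplied here beyond a pointer to a related result in~\cite{ar90}. So there is no in-paper proof to compare your proposal against.

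As to the proposal itself: the reduction of uniqueness to $(\dagger)$ is correct, the further reduction to $(\ast)$ is reasonable, and the derivation of $(\dagger)$ from $(\ast)$ via the factoring map $h$ is clean (adjacency and surjectivity of $h$ follow exactly as you say). The minimal-counterexample opening for $(\ast)$ is also fine and does produce $a_{p_1-1}=a_{p_1+1}$. But you stop precisely where the substance lies, and you acknowledge as much (``I expect this to go through\ldots\ this last step is not [routine]''). Observe in particular that a length-$3$ palindromic factor in $\bar{a}$ does \emph{not} by itself contradict primitivity---the paper's own example $abcbda$ is primitive yet contains the factor $bcb$---so the ``iterate and extract a shorter generator of $\bar{a}$'' plan needs a concrete mechanism showing how the nested identifications $f(x{+}t)=f(y{-}t)$ interact with the surjectivity of $f$ to force a global folding of $\bar{a}$, not merely a local palindrome. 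Neither the case split you mention nor the induction you gesture at is specified enough to see that it closes. As it stands this is a plausible outline with the central combinatorial lemma missing; it is not yet a proof.
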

For a very similar, though not identical, result, see~\cite{ar90}.

Define the {\em primitive length} of any word $\bar{c}$ to be the length of any primitive generator of~$\bar{c}$. By Lemma~\ref{lemma:main},
this notion is well-defined; it will play a significant role in our analysis of the adjacent fragment.
Clearly, the primitive length of~$\bar{c}$ is at most $|\bar{c}|$, but will be strictly less if $\bar{c}$ is not primitive. 

It is important to realize that, while primitive generators are unique up to reversal, modes of generation are not. 
Indeed, 
$\bar{a} \coloneqq abcbd$ is one of the two
primitive generators of ${\bar{c}} \coloneqq abcbcbd$, but we have $\bar{a}^f= \bar{c}$ for $f \colon[1,7] \rightarrow [1,5]$ given by either of the courses of values $[1,2,3,4,3,4,5]$ or $[1,2,3,2,3,4,5]$.
In the sequel, it will be important to identify those words on which a given pair of surjective adjacent functions yield identical outputs.
A {\em palindrome} is a word equal to its reversal;
a palindrome is \textit{non-trivial} if its length is at least 2.
Let $\bar{a} \coloneqq a_1 \cdots a_k$ be a word of length $k$.
We say that a pair $\langle i, j \rangle$ is a \emph{defect} of $\bar{a}$ if the factor $a_i \cdots a_j$ is a non-trivial palindrome. 
We denote the set of defects of $\bar{a}$ by $D_{\bar{a}}$, and regard it as a 
a binary relation on the set $[1,k]$. If $R$ is any binary relation, we write 
$R^{*}$ for its equivalence closure, i.e.~the smallest reflexive, symmetric and transitive relation that includes $R$.
Now, for any pair of adjacent functions
$f,g \colon [1,m] \rightarrow [1,k]$ and any set of pairs
$D \subseteq \set{\langle i,j \rangle \mid 1 \leq i < j \leq k}$, we write $f \overset{D}{=} g$ if
$\langle f(i), g(i) \rangle \in D^{*}$ for all $i$ ($1 \leq i \leq m$). Evidently, $\overset{D}{=}$ is an equivalence relation. 
\begin{lemma}[Thm.~4 of \cite{ph:primGen24}]
  Let $\bar{a}$ be a primitive word of length $k$ with defect set $D$, and let $f$ and $g$ be surjective functions in $\bA^m_k$ for some
  $m \geq k$.   
  Then $\bar{a}^f = \bar{a}^g$ if and only if $f \overset{D}{=} g$.
  \label{lemma:main2}
\end{lemma}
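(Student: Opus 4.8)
The plan is to prove the two implications separately. The \emph{if} direction is immediate: if $\langle p,q\rangle\in D$ with $p<q$, then $a_p\cdots a_q$ is a non-trivial palindrome, so $a_p=a_q$; since ``having equal letters'' is an equivalence relation on $[1,k]$ and $D^*$ is the least equivalence relation containing $D$, it follows that $a_p=a_q$ for every $\langle p,q\rangle\in D^*$. Hence $f\overset{D}{=}g$ forces $a_{f(i)}=a_{g(i)}$ for all $i\in[1,m]$, i.e.\ $\bar a^f=\bar a^g$. In particular $D^*\subseteq\{\langle p,q\rangle: a_p=a_q\}$, a containment used tacitly below.

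For the \emph{only if} direction, write $\bar c\coloneqq\bar a^f=\bar a^g$ and assume $a_{f(i)}=a_{g(i)}$ for all $i$; we must show $\langle f(i),g(i)\rangle\in D^*$ for every $i$. Since $\bar a$ is primitive it is a primitive generator of $\bar c$, so by Lemma~\ref{lemma:main} any primitive generator of $\bar c$ is $\bar a$ or its reversal; thus $f$ and $g$ are merely two modes of generating $\bar c$ from one fixed primitive word, and the task is to show that any two such modes differ only by reflections through palindromic factors of $\bar a$. Since the desired conclusion is automatic at any step where $f(i)=g(i)$, the content lies in the maximal intervals $[i_0,i_1]\subseteq[1,m]$ on which $f(i)\neq g(i)$ throughout; fix one, with (say) $f(i)<g(i)$ on it, the opposite case being symmetric. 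Each end of such an interval is flanked by a step at which the strolls agree, or is an endpoint of $[1,m]$, or involves a wall (position $1$ or $k$) or a turning point of $f$ or $g$. The core claim is that on $[i_0,i_1]$ the requirement $a_{f(i)}=a_{g(i)}$ at each step forces the segment of $\bar a$ lying between the two strolls to decompose into overlapping non-trivial palindromes: each maximal sub-run over which $f$ and $g$ move monotonically relative to each other produces a chain of equalities $a_{p-t}=a_{q+t}$ (or its mirror image), i.e.\ a palindromic factor and hence a member of $D$, and consecutive runs are glued at turning points and walls so that the associated defects chain in $D^*$, placing $\langle f(i),g(i)\rangle$ in $D^*$ throughout. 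This is naturally set up as an induction — on $m$, or on the combined number of legs of $f$ and $g$ — but two standing cautions apply: a sub-stroll of a surjective stroll need not be surjective, and a factor of a primitive word need not be primitive, so global information about $\bar a$ must be carried along.

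Primitivity of $\bar a$ does the real work in two places. First, at the walls: a stroll bouncing off position $1$ (resp.\ $k$) emits a factor of $\bar c$ which, matched against the other stroll's simultaneous output, would exhibit a generator of $\bar a$ — or of a long prefix (resp.\ suffix) of $\bar a$ — of length strictly less than $k$, contradicting primitivity, or, via Lemma~\ref{lemma:main}, the uniqueness of the primitive generator, unless the two strolls are already $\overset{D}{=}$-related there. Second, primitivity prevents $\bar a$ itself from being a long palindrome (or having a long palindromic prefix/suffix abutting a wall), which would otherwise let one stroll slide past the other without the intervening factors relating $f(i)$ to $g(i)$ in $D^*$. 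I expect the main obstacle to be exactly this combinatorial bookkeeping: classifying the modes of relative motion of the two strolls (both coordinates increasing, both decreasing, opposite directions, or one stationary) and the ways a divergence interval can terminate (re-meeting, a turning point of $f$ or $g$, a wall at $1$ or at $k$), and checking in every case that the output-matching hypothesis together with primitivity propagates the palindromic structure of $\bar a$ needed to place each visited pair $\langle f(i),g(i)\rangle$ in $D^*$.
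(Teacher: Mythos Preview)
The paper does not supply its own proof of this lemma: it is quoted as Theorem~4 of \cite{ph:primGen24} and used as a black box. So there is nothing in the present article to compare your attempt against; I can only assess the sketch on its own merits.

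Your \emph{if} direction is correct and complete. (It does not even need primitivity: for \emph{any} word $\bar a$, a defect $\langle p,q\rangle$ forces $a_p=a_q$, so $D^*$ lies inside the kernel of the map $i\mapsto a_i$, and $f\overset{D}{=}g$ gives $\bar a^f=\bar a^g$ at once.)

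Your \emph{only if} direction, however, is a plan rather than a proof, and you acknowledge as much (``I expect the main obstacle to be exactly this combinatorial bookkeeping''). The gap is genuine. The hypothesis $a_{f(i)}=a_{g(i)}$ for all $i$ is a collection of pointwise letter equalities; placing $\langle f(i),g(i)\rangle$ in $D^*$ requires exhibiting a chain of actual \emph{palindromic factors} of $\bar a$ linking $f(i)$ to $g(i)$. Your claim that each monotone sub-run of the two strolls ``produces \dots\ a palindromic factor'' is not self-evident. When $f$ and $g$ move in parallel with a fixed gap $d=g(i)-f(i)>2$, what you learn is the \emph{periodicity} constraint $a_{p}=a_{p+d}$ over a range of $p$, which is not a palindrome; converting such periodicity into a chain of defects requires combining it with information accumulated at earlier turning points, and this is exactly what your sketch does not carry out. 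Likewise, your invocation of primitivity ``at the walls'' via Lemma~\ref{lemma:main} is suggestive but not an argument: you would need to show concretely that a long parallel run reaching position $1$ or $k$ without the requisite defects would manufacture a shorter generator of $\bar a$. Until this case analysis is actually executed --- or replaced by a cleaner invariant that is preserved step by step --- the \emph{only if} direction remains open.
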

Of course, given surjective functions $f,g \in \bA^m_k$ and any $D \subseteq \set{\langle i,j \rangle \mid 1 \leq i < j \leq k}$,
it is a simple matter to check whether $f \overset{D}{=} g$.
Lemma~\ref{lemma:main2} allows us to read this condition as stating that $f$ and $g$ yield the same tuples when applied to
any primitive word of length $k$ whose defect set includes $D$.

Any adjacent function $f\colon [1,\ell], \rightarrow [1,k]$ induces a natural map from quantifier-free $\AFv{\ell}$-formulas
to quantifier-free $\AFv{k}$-formulas. Specifically,
if $\chi$ is a\linebreak \mbox{quantifier-free} $\AFv{\ell}$-formula, denote by $\chi^g$ the formula $\chi(x_{g(1)} \cdots x_{g(\ell)})$, obtained by simultaneously
replacing every variable $x_i$ in $\chi$ by the corresponding variable $x_{g(i)}$. 
We claim that $\chi^g \in \AFv{k}$. Indeed, any
atom $\alpha$ appearing in $\chi$ is of the form $p(\bx^f_k)$, 
where $p$ is a predicate of some 
arity $m$ and $f \in \bA^m_\ell$. But then the corresponding atom in $\chi^g$ has the form
$\beta\coloneqq \alpha(x_{g(1)} \cdots x_{g(\ell)}) = p(x_{g(f(1))} \cdots x_{g(f(m))}) = p(\bx^{(g \circ f)}_k)$.
Since the composition of adjacent functions is adjacent, the assertion follows. 
The following (almost trivial) lemma is useful when manipulating adjacent formulas. 
Recall in this regard that any function $g \in \bA^\ell_k$ 
maps a $k$-tuple $\bar{a}$ over some set to an $\ell$-tuple $\bar{a}^g$ 
over the same set.

\begin{lemma}
  Let $\chi$ be a quantifier-free $\AFv{\ell}$-formula, and let $g \in \bA^\ell_k$. For~any structure $\fA$ and any $\bar{a} \in A^k$, we have $\fA \models \chi^g[\bar{a}]$ if and only if $\fA \models \chi[\bar{a}^g]$.~Thus,~the adjacent type of any tuple in $\fA$ is determined 
  by that of its primitive~generator. 
  \label{lma:essentiallyTrivial}
\end{lemma}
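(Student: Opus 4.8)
The plan is to prove the biconditional by a straightforward structural induction on the quantifier-free formula $\chi$, and then to read off the closing sentence as an immediate consequence. Throughout, I would keep in mind that two distinct operations are in play: $\chi \mapsto \chi^g$ substitutes $x_{g(i)}$ for $x_i$ throughout $\chi$, whereas $\bar{a} \mapsto \bar{a}^g$ rewrites the tuple $\bar{a}$ as $a_{g(1)} \cdots a_{g(\ell)}$; the content of the lemma is precisely that, under $\models$, these two moves cancel.

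For the base case I would take $\chi$ to be an atom $\alpha \coloneqq p(\bx_\ell^f)$, where $p$ has arity $m$ and $f \in \bA^m_\ell$ (the equality atom, permitted when $m=2$, is handled identically). As recorded in the paragraph preceding the lemma, $\alpha^g = p(\bx_k^{g \circ f})$. Unwinding the semantics, $\fA \models \alpha^g[\bar{a}]$ holds exactly when $(a_{g(f(1))}, \dots, a_{g(f(m))}) \in p^{\fA}$; and, writing $\bar{b} \coloneqq \bar{a}^g$ so that $b_i = a_{g(i)}$, the condition $\fA \models \alpha[\bar{b}]$ holds exactly when $(b_{f(1)}, \dots, b_{f(m)}) \in p^{\fA}$, i.e.\ when $(a_{g(f(1))}, \dots, a_{g(f(m))}) \in p^{\fA}$ --- the same assertion. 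The inductive step should be trivial: both $\chi \mapsto \chi^g$ and the satisfaction relation commute with the Boolean connectives, so the cases $\chi = \neg\chi'$, $\chi = \chi_1 \wedge \chi_2$ and $\chi = \chi_1 \vee \chi_2$ will follow at once from the induction hypothesis applied to the subformulas.

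For the final assertion I would let $\bar{c}$ be any tuple over $\fA$, let $\bar{a}$ be a primitive generator of $\bar{c}$, say of length $k$, and fix a surjective $g \in \bA^m_k$ (where $m = |\bar{c}|$) with $\bar{c} = \bar{a}^g$. Applying the first part with $\ell = m$ and $\chi$ an adjacent $m$-literal $\lambda$, one gets $\fA \models \lambda[\bar{c}]$ iff $\fA \models \lambda^g[\bar{a}]$; moreover $\lambda^g$ is a quantifier-free $\AFv{k}$-formula --- its atom is adjacent because the composition of adjacent functions is adjacent. Hence whether $\lambda$ lies in $\atp^{\fA}[\bar{c}]$ is determined by $\atp^{\fA}[\bar{a}]$, and so $\atp^{\fA}[\bar{c}]$ is a function of $\atp^{\fA}[\bar{a}]$, as required.

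I do not expect any real obstacle --- the lemma is pure bookkeeping, which is why it is flagged as ``almost trivial''. The only points that will demand a moment's attention are matching up the index composition $g \circ f$ in the atomic case, and, in the closing remark, explicitly invoking closure of the adjacent functions under composition so as to be sure that $\lambda^g$ remains a formula of the fragment.
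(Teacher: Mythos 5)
Your proposal is correct and follows essentially the same route as the paper's proof: verify the atomic case by unwinding both sides to the single condition $a_{g(f(1))} \cdots a_{g(f(m))} \in p^{\fA}$, dispose of the Boolean cases by routine structural induction, and derive the final assertion by applying the biconditional to the (surjective, adjacent) map witnessing generation from the primitive generator. The extra care you take over $g \circ f$ being adjacent matches the remark the paper makes just before stating the lemma.
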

\begin{proof}
  We need consider only the case where $\chi$ is atomic: the general case follows by straightforward structural induction.
  Let $\chi \coloneqq p(\bx^f_k)$, with $p$ an $m$-ary predicate, and $f \in \bA^m_\ell$.
  Then, writing $\bar{a} = a_1 \cdots a_m$,  both sides of the 
	bi-conditional amount to the statement $a_{g(f(1))} \cdots a_{g(f(m))} \in p^\fA$. 
	For the second statement, 
	let $\fA$ be a structure, and 
	$\bar{a}$ an $\ell$-tuple from $A$. Then $\bar{a}$ has a primitive generator, say $\bar{b}$ of length $k \leq \ell$, with
	$\bar{a} = \bar{b}^g$ for some (surjective) $g \in \bA^\ell_k$.  Now consider any atomic \AFv{\ell}-formula $\alpha$. 
	Then
	$\fA \models \alpha[\bar{a}]$ if and only if 
	$\fA \models \alpha^g[\bar{b}]$. 
\end{proof}

Let $\fA$ and $\fA'$ be structures interpreting some common signature over a common domain~$A$, and let $\ell \geq 0$. 
We write
$\fA \approx_\ell \fA'$, if, for 
any predicate $p$ of arity $m \geq 0$, and any $m$-tuple $\bar{a}$ from $A$ of primitive length at most $\ell$, we have 
$\bar{a} \in p^\fA$ if and only if $\bar{a} \in p^{\fA'}$. That is, $\fA \approx_\ell \fA'$
just in case, for any predicate $p$ interpreted by~$\fA$, $p^\fA$ and $p^{\fA'}$ agree on all those $m$-tuples whose primitive length~is~at~most~$\ell$. 
The next lemma states that, when evaluating \AFv{\ell}-formulas in structures, we can disregard tuples whose primitive length
is greater than $\ell$. 
\begin{lemma}
Let $\phi$ be an $\AFv{\ell}$-sentence,  and suppose $\fA$ and $\fA'$
are $\sig(\phi)$-structures over a common domain $A$ such that
$\fA \approx_\ell \fA'$. 
Then $\fA \models \phi \Leftrightarrow\fA' \models \phi$.
\label{lma:approx}
\end{lemma}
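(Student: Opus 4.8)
The plan is to establish, by a routine structural induction on the subformulas $\chi$ of $\phi$, the stronger statement that $\fA \models \chi[\bar{a}] \Leftrightarrow \fA' \models \chi[\bar{a}]$ for every $\ell$-tuple $\bar{a} = a_1 \cdots a_\ell \in A^\ell$ (read, as usual, under the assignment $x_i \leftarrow a_i$); the lemma then follows on taking $\chi = \phi$, since $\phi$ is a sentence. Here I use the (readily verified) fact that every subformula of an $\AFv{\ell}$-sentence lies in $\AFv{[k]}$ for some $k \leq \ell$ and mentions only the variables $x_1, \dots, x_\ell$; in particular, every quantifier occurring in $\phi$ binds a variable $x_{k{+}1}$ with $k{+}1 \leq \ell$.

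The Boolean and quantifier cases of the induction are immediate. For a Boolean connective, the equivalence for $\chi$ follows at once from the inductive hypotheses for its immediate subformulas. For a quantified subformula $Q x_{k{+}1}\, \psi$ with $Q \in \set{\forall, \exists}$, $\psi \in \AFv{[k{+}1]}$ and $k{+}1 \leq \ell$, evaluating the quantifier at $\bar{a}$ amounts to ranging over the $\ell$-tuples obtained from $\bar{a}$ by overwriting its $(k{+}1)$-st coordinate with an arbitrary element of $A$, so the inductive hypothesis for $\psi$ applied to each of these tuples closes the case. An equality atom $x_i = x_j$ (necessarily with $|i-j| \leq 1$) is likewise trivial, since the truth of $x_i = x_j$ under $\bar{a}$ depends only on whether $a_i = a_j$, and $\fA$ and $\fA'$ have the same domain.

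The only step carrying real content is the atomic case for a predicate $p \in \sig(\phi)$, and this is exactly where the hypothesis $\fA \approx_\ell \fA'$ and the word combinatorics of Sec.~\ref{sec:primGen} come into play. Suppose $\chi$ is an atom $p(x_{i_1} \cdots x_{i_m})$. Since $\chi$ is an adjacent atom, the consecutive differences $i_{j{+}1} - i_j$ all lie in $\set{-1,0,1}$, so the set of indices $\set{i_1, \dots, i_m}$ is a contiguous interval $[q, q'] \subseteq [1, \ell]$. Both sides of the desired equivalence assert membership of the $m$-tuple $\bar{c} \coloneqq a_{i_1} \cdots a_{i_m}$ in the interpretation of $p$, so it suffices to verify that $\bar{c}$ has primitive length at most $\ell$. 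Now $\bar{c} = (a_q \cdots a_{q'})^g$, where $g \colon [1,m] \to [1, q'-q+1]$ is the adjacent surjection $g(j) = i_j - q + 1$; hence the factor $a_q \cdots a_{q'}$ of $\bar{a}$ --- a word of length $q' - q + 1 \leq \ell$ --- generates $\bar{c}$. Composing a primitive generator of $a_q \cdots a_{q'}$ with $g$ and invoking transitivity of generation together with Lemma~\ref{lemma:main}, the primitive length of $\bar{c}$ equals that of $a_q \cdots a_{q'}$ and so is at most $\ell$; the definition of $\approx_\ell$ then yields $\bar{c} \in p^{\fA} \Leftrightarrow \bar{c} \in p^{\fA'}$, as required. (If $m = 0$ then $\bar{c}$ is the empty tuple, of primitive length $0$, and the same conclusion holds.)

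I do not foresee any serious difficulty; the single point needing care is the remark just used --- that adjacency of an argument sequence forces the variables of an atom to occupy a contiguous block --- since this is precisely what bounds the primitive length of the instantiated tuple by $\ell$, and hence makes $\fA \approx_\ell \fA'$ applicable in the atomic case.
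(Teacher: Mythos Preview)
Your proof is correct and follows essentially the same route as the paper's: structural induction, with the only substantive step being the atomic case, where adjacency of the argument sequence bounds the primitive length of the instantiated tuple by $\ell$. The paper's version is terser---it works with $k$-tuples (where $\psi \in \AFv{[k]}$) rather than padding everything out to $\ell$-tuples, and simply asserts that the primitive length of $\bar{a}^f$ is at most $|\bar{a}|$ without spelling out the contiguous-interval observation---but the content is the same. Your invocation of Lemma~\ref{lemma:main} to upgrade ``at most'' to ``equals'' is correct but unnecessary: the inequality already suffices for applying $\fA \approx_\ell \fA'$.
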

\begin{proof}
  Let $\psi$ be a formula of $\AFv{\ell}$ (possibly featuring free variables), 
  and let $k$ (bounded by $\ell$) be such that $\psi \in \AFv{[k]}$. 
  We claim that, for any $k$-tuple   of elements $\bar{a}$, $\fA \models \psi[\bar{a}]$ 
  if and only if $\fA' \models \psi[\bar{a}]$. The statement of the lemma is the special case where $\psi$ has no free variables. 
  Again, we need consider only the case where $\psi$ is atomic: the general case follows by straightforward structural induction. 
  Let $\psi \coloneqq p(\bx^f_k)$, with $p$ an $m$-ary predicate, and $f \in \bA^m_k$.
  If $\bar{a}$ is a $k$-tuple of elements from $A$, then $\fA \models \psi[\bar{a}]$ 
  if and only if $\bar{a}^f \in p^\fA$, and similarly for $\fA'$. But the primitive length of $\bar{a}^f$ is certainly at most $k = |\bar{a}|$, and thus $\bar{a}^f \in p^\fA$ if and only
  if $\bar{a}^f \in p^{\fA'}$.
\end{proof}

In
view of Lemma~\ref{lma:approx}, when considering models of \AFv{\ell}-sentences, it will be useful to take the extensions of non-logical predicates (of whatever arity) to be {undefined in respect of tuples whose primitive length is greater than} $\ell$, since these 
cannot affect the outcome of semantic evaluation. That is, where $\ell$ is clear from context, we typically suppose any
model $\fA$ of $\phi$ to 
determine
whether $\bar{a} \in p^\fA$ for any $m$-ary predicate $p$ and any $m$-tuple $\bar{a}$ \textit{of primitive length at most} $\ell$; but with respect to $m$-tuples $\bar{a}$ having greater primitive length, $\fA$ remains agnostic. To make it clear that the structure
$\fA$ need not be fully defined, we refer to it as a {\em layered structure}, of {\em height} $\ell$. Notice that the notion of height is independent of the arities of the predicates interpreted. 
A layered structure $\fA$ may have height, say 3, but still interpret a predicate $p$ of arity, say, 5. In this case, it is
determined whether $\fA \models p[babcc]$, because the primitive generator of $babcc$ is $abc$; however, it is not determined
whether  $\fA \models p[abcab]$, because $abcab$~is~primitive.

This idea enables us to build up models of \AF-formulas layer by layer.
Suppose $\fA$ is a layered structure of height $\ell$, and we wish to
construct a layered structure $\fA^+$ of height $\ell{+}1$ over the same domain $A$, agreeing with the assignments made by $\fA$. Clearly, it suffices to fix the adjacent type of each primitive $(\ell{+}1)$-tuple $\bar{b}$ from $A$. 
Suppose we want to fix the adjacent type
of $\bar{b}$ and hence that of its reversal $\tilde{b}$. 
To do so, we consider each predicate $p$ in turn---of arity, say, $m$---and decide, for any
$m$-tuple $\bar{c}$ from $A$ whose primitive generator is $\bar{b}$, whether $\fA \models p[\bar{c}]$.  
Now repeat this process for all pairs of mutually inverse 
primitive words $(\bar{b}, \tilde{b})$ from $A$ having primitive length~$\ell{+}1$.
Since every tuple $\bar{c}$ considered for inclusion in the extension of some predicate has primitive length~$\ell{+}1$, these assignments will not clash with any previously made
in the original structure~$\fA$. Moreover, since, by Lemma~\ref{lemma:main}, every $m$-tuple $\bar{c}$ assigned in this process has a unique primitive generator
$\bar{b}$ (up to reversal), these assignments will not clash with each other.
Thus, to elevate
$\fA$ to a layered structure of height $\ell{+}1$, one takes each inverse pair $(\bar{b}, \tilde{b})$ of primitive $(\ell{+}1)$-tuples in turn,
and fixes the adjacent type of each $\bar{b}$ consistently with the existing assignments of all tuples generated by
proper infixes of $\bar{b}$, as given in the original structure~$\fA$.

We finish this section with an easy technical  observation  that will be needed in the sequel.
Denote by  $\vec{\bA}^m_k$ the set of all functions $f \in \bA^m_k$ such that $f(m) = k$. We refer to $f$ as a {\em final} adjacent function.
Thus, if $f \in \vec{\bA}^m_k$ is thought of as a stroll of length $m$ on some word $\bar{a}$ of length $k$, then that stroll ends at the final position of $\bar{a}$.
\begin{lemma}
	Let $\bar{c}$ be a word of length $m$ over some alphabet $A$, and $d$ an element of~$A$ that does not appear in $\bar{c}$. If $\bar{c}d$ is not primitive, then neither is~$\bar{c}$. In fact, there is a word~$\bar{a}$
	of length $k < m$ and a function $f \in \vec{\bA}^m_{k}$ such that~$\bar{a}^f = \bar{c}$.
	\label{lma:simplePrimitive}
\end{lemma}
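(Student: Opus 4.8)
The plan is to unfold the definition of primitivity and to track the unique occurrence of the letter $d$ through a short generation of $\bar{c}d$. Since $\bar{c}d$ is not primitive it has length at least $2$ (one-letter words are primitive), so $m \geq 1$, and there are a word $\bar{b} = b_1 \cdots b_\ell$ with $\ell \leq m$ together with a surjective $g \in \bA^{m+1}_{\ell}$ such that $\bar{b}^g = \bar{c}d$. First I would observe that $d$ occurs in $\bar{b}$ exactly once. Indeed, if $b_s = d$ then, by surjectivity, the stroll $g$ visits position $s$ at some step $i$, so $(\bar{c}d)_i = b_{g(i)} = d$; since $d$ does not occur in $\bar{c}$, this forces $i = m+1$. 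Hence no position of $\bar{b}$ carrying $d$ is visited at a step $\leq m$, and as $g(m+1)$ is a single index there is exactly one such position, call it $t$; moreover $g(i) \neq t$ for every $i \in [1,m]$.

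Next I would show that $t$ sits at an end of $\bar{b}$. Put $f \coloneqq g\restr_{[1,m]}$; this is adjacent, and by the previous step its image is precisely $[1,\ell] \setminus \{t\}$ (it omits $t$, while every other index of $[1,\ell]$ is visited by $g$, necessarily at some step $\leq m$). But the image of an adjacent function defined on an interval is itself an interval of integers, so $[1,\ell] \setminus \{t\}$ is an interval; being nonempty, this forces $\ell \geq 2$ and $t \in \{1,\ell\}$. Replacing $\bar{b}$ by its reversal and $g$ by the map $i \mapsto \ell{+}1{-}g(i)$ if necessary — the reversed pair still generates $\bar{c}d$ — we may assume $t = \ell$. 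Then $b_\ell = d$, the image of $f$ is $[1,\ell{-}1]$, and since $g(m)$ differs from $g(m+1) = \ell$ by at most $1$ and cannot equal $\ell$, we get $g(m) = \ell - 1$.

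Finally I would take $\bar{a} \coloneqq b_1 \cdots b_{\ell-1}$, a word of length $\ell - 1 \leq m - 1 < m$. Then $f \in \bA^m_{\ell-1}$ is surjective and $f(m) = \ell - 1 = |\bar{a}|$, so $f \in \vec{\bA}^m_{\ell-1}$; and for each $i \in [1,m]$ we have $f(i) \leq \ell - 1$, whence $(\bar{a}^f)_i = b_{f(i)} = b_{g(i)} = (\bar{c}d)_i = c_i$, giving $\bar{a}^f = \bar{c}$. Since $\bar{a}$ is a strictly shorter word that generates $\bar{c}$, the word $\bar{c}$ is not primitive either, which is exactly what is claimed. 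The only point requiring any care is the discrete intermediate-value argument that pins $d$ to an end of $\bar{b}$; the rest is routine bookkeeping, and neither Lemma~\ref{lemma:main} nor Lemma~\ref{lemma:main2} is needed.
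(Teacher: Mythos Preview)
Your proof is correct and follows essentially the same approach as the paper's: locate the unique occurrence of $d$ in a short generator of $\bar{c}d$, argue it must sit at an endpoint, reverse if necessary so that it is the last letter, and then restrict the generating function to $[1,m]$. Your treatment is in fact more careful than the paper's in explicitly verifying that $d$ occurs exactly once in $\bar{b}$, that the restricted function is surjective onto $[1,\ell{-}1]$, and in phrasing the endpoint argument via the observation that the image of an adjacent function on an interval is an interval; the paper simply asserts that a non-terminal $d$ would be ``encountered again in the entire traversal'' and leaves surjectivity of the truncated map implicit.
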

\begin{proof}
  Suppose $\bar{c}d = \bar{b}^g$ for some word $\bar{b}$ of length $k{+}1$ (bounded by $m$) and some surjective map $g \in \bA^{m+1}_{k +1}$.
  Since $d$ does not occur in $\bar{c}$, it is immediate that $d$ occupies either the first or last position in $\bar{b}$, for
  otherwise, it would be encountered again in the entire traversal of $\bar{b}$ (as $g$ is adjacent and surjective). By reversing $\bar{b}$ if necessary,
  assume the latter, so that we may write $\bar{b} = \bar{a}d$, with $g(m+1) = k+1$. By adjacency, $g(m) = k$, so that
  setting $f = g \setminus \set{\langle m+1, k+1\rangle}$, we have the required $\bar{a}$ and $f$.
\end{proof}

We remark that, if $f \in \vec{\bA}^m_{k}$, then the function $f^+ = f \cup \langle m{+}1,k{+}1 \rangle$
satisfies $f^+ \in \vec{\bA}^{m{+}1}_{k{+}1}$. That is, we can extend $f$ by setting $f(m{+}1) = k{+}1$, retaining adjacency.
We utilise this fact as follows.
Let $\phi$ be a normal-form \AFv{\ell+1}-formula as given in~\eqref{eq:anf}.
Recalling that we write $\chi^h = \chi(x_{h(1)} \cdots x_{h(\ell{+}1)})$ for $\chi$ quantifier-free in $\AFv{\ell{+}1}$ and $h \in \bA^{\ell{+}1}_k$, 
we define the {\em adjacent closure of $\phi$}, denoted $\acl(\phi)$, to be:
\begin{equation*}
\bigwedge_{\hspace{2mm}i \in I\phantom{\bigwedge^{\ell+1}_{k+1}}\hspace{-5mm}} \hspace{-3mm}
\hspace{-1mm} \bigwedge_{\hspace{2mm}k=1\phantom{\bigwedge^{\ell+1}_{k+1}}\hspace{-4mm}}^{\ell-1} \hspace{-3mm}
\bigwedge_{f \in \vec{\bA}^{\ell}_{k}} \hspace{-2mm}
\forall \bx_{k} \exists x_{k+1}\, (\gamma_i)^{f^+} 
\wedge
\bigwedge_{\hspace{2mm}k=1\phantom{\bigwedge^{\ell+1}_{k+1}}\hspace{-5mm}}^{\ell}  \hspace{-4mm}
\bigwedge_{\hspace{2mm}g \in \bA^{\ell+1}_{k}\phantom{\bigwedge^{\ell+1}_{k+1}}\hspace{-8mm}} \hspace{-3mm}
\forall \bx_{k}\, \beta^g.
\end{equation*}  
Observe that the
conjunctions for the $\forall^\ell\exists$-formulas range over $f \in \vec{\bA}^{\ell}_{k}$ (so that $f^+ \in \bA^{\ell{+}1}$), while the conjunctions
for the purely universal formula range over the whole of $\bA^{\ell+1}_{k}$.
Up to trivial logical rearrangement and re-indexing of variables, $\acl(\phi)$ is actually a normal-form \AFv{\ell}-formula. 
In effect, $\acl(\phi)$ is the result of identifying various universally quantified variables in $\phi$ in a way which preserves adjacency.
The following~lemma~is~therefore~immediate. 
\begin{lemma}
  Let $\phi \in \AFv{\ell+1}$ be in normal-form. Then $\phi \models  \acl(\phi)$. 
  \label{lma:adjacentClosure}
\end{lemma}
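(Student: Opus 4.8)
The plan is to argue semantically and conjunct by conjunct. Recall that $\phi$, being in normal form, is literally $\bigwedge_{i \in I}\forall\bx_\ell\exists x_{\ell+1}\,\gamma_i \wedge \forall\bx_{\ell+1}\,\beta$, and that each conjunct of $\acl(\phi)$ is obtained from one of these conjuncts by applying an adjacency-preserving substitution of variables (an $f^+$ with $f\in\vec{\bA}^\ell_k$ to some $\gamma_i$, or a $g\in\bA^{\ell+1}_k$ to $\beta$). So I would fix an arbitrary model $\fA\models\phi$ and show that it satisfies each conjunct of $\acl(\phi)$, the only tool needed being Lemma~\ref{lma:essentiallyTrivial}, which converts a variable substitution in a quantifier-free formula into the corresponding ``stroll'' applied to the tuple of arguments.

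For the purely universal block I would fix $k\in[1,\ell]$, $g\in\bA^{\ell+1}_k$, and a $k$-tuple $\bar{a}$ over $A$, and check $\fA\models\beta^g[\bar{a}]$. By Lemma~\ref{lma:essentiallyTrivial} this is equivalent to $\fA\models\beta[\bar{a}^g]$, which holds because $\bar{a}^g$ is an $(\ell{+}1)$-tuple over $A$ and $\fA\models\forall\bx_{\ell+1}\,\beta$; neither surjectivity of $g$ nor the bound $k\leq\ell$ plays any logical role here, the latter serving only to keep $\beta^g$ inside $\AFv{\ell}$. For the $\forall\exists$-block I would fix $i\in I$, $k\in[1,\ell-1]$, $f\in\vec{\bA}^\ell_k$, and a $k$-tuple $\bar{a}$, and must supply a witness $b\in A$ with $\fA\models(\gamma_i)^{f^+}[\bar{a}b]$. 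Applying $\fA\models\forall\bx_\ell\exists x_{\ell+1}\,\gamma_i$ to the $\ell$-tuple $\bar{a}^f$ yields some $b$ with $\fA\models\gamma_i[\bar{a}^f b]$; the one computational point is the identity $\bar{a}^f b = (\bar{a}b)^{f^+}$, valid because $f^+$ restricts to $f$ on $[1,\ell]$ (whose image avoids position $k{+}1$, so $b$ is never read there) while $f^+(\ell{+}1)=k{+}1$ reads exactly $b$. Feeding this identity back through Lemma~\ref{lma:essentiallyTrivial} (now with the map $f^+\in\bA^{\ell+1}_{k+1}$ and the tuple $\bar{a}b\in A^{k+1}$) gives $\fA\models(\gamma_i)^{f^+}[\bar{a}b]$, so $b$ is the desired witness. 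Since these two blocks exhaust $\acl(\phi)$, this establishes $\phi\models\acl(\phi)$.

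I do not expect a genuine obstacle: the entire content of the lemma is that identifying universally quantified variables (including the leading block of universals in a $\forall\exists$-prefix) along an adjacency-preserving map is a sound operation, and Lemma~\ref{lma:essentiallyTrivial} packages exactly this equivalence. The only things demanding a moment's care are the index arithmetic in $\bar{a}^f b=(\bar{a}b)^{f^+}$ and the observation that restricting to $f\in\vec{\bA}^\ell_k$ (rather than an arbitrary member of $\bA^\ell_k$) is precisely what makes $f^+$ adjacent, hence what guarantees that $(\gamma_i)^{f^+}$ lies in $\AFv{k+1}\subseteq\AFv{\ell}$ and that $\acl(\phi)$ is itself a legitimate normal-form $\AFv{\ell}$-formula.
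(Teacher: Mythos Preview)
Your proposal is correct and is essentially the same approach as the paper's, which dispatches the lemma in one sentence: ``$\acl(\phi)$ is the result of identifying various universally quantified variables in $\phi$ in a way which preserves adjacency,'' and then simply declares the lemma ``immediate.'' What you have done is spell out explicitly, via Lemma~\ref{lma:essentiallyTrivial} and the identity $\bar{a}^f b = (\bar{a}b)^{f^+}$, exactly why this variable-identification is sound---which is precisely the reasoning the paper leaves implicit.
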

Important notation mentioned in this and the previous sections is recapitulated in Table~\ref{table:prelim} for future reference.
\begin{table}[t]
	\begin{tabular}{c | l}
		\multicolumn{2}{l}{Functions $f: [1, m] \to [1, k]$ and tuples $\bar{a} = a_1 a_2 \cdots a_k$}\\
		\hline
		$\bA^m_k$ & the set of all adjacent functions $f: [1, m] \to [1, k]$ \\
		$\vec{\bA}^m_k$ & the set of all final adjacent functions $f: [1, m] \to [1, k]$ \\
		$f^+$ & $f \cup \{ (m+1, k+1) \}$ only if $f \in \vec{\bA}^m_k$ \\
		$\bx_k$ & $x_1 x_2 \cdots x_k$ \\
		$\tilde{a}$ & reversal of $\bar{a}$\\ 
		$\bar{a}^f$ & $a_{f(1)} a_{f(2)} \cdots a_{f(m)}$ \\
		\multicolumn{2}{l}{}\\ 
		\multicolumn{2}{l}{Formulas $\chi \in \AFv{[m]}$ with $f: [1, m] \to [1, k]$}\\
		\hline
		$\sig(\chi)$ & the signature of $\chi$ \\
		$\chi^{-1}$ & $\chi(x_m x_{m-1} \cdots x_1)$ \\
		$\hat{\chi}$ & $\chi \wedge \chi^{-1}$ \\
		$\chi^f$ & $\chi(\bx_k^f)$ \\
		
		\multicolumn{2}{l}{}\\ 
		\multicolumn{2}{l}{Normal form $\varphi$ (in $(\ell+1)$-variables)}\\
		\hline
		$\varphi$ & $\bigwedge_{i \in I} \forall \bx_{\ell} \exists x_{\ell+1}\, \gamma_i \wedge  \forall \bx_{\ell+1}\, \beta$ \\
		$\acl(\varphi)$ & 
		$
		\bigwedge_{i \in I}
		\bigwedge_{k=1}^{\ell-1}
		\bigwedge_{f \in \vec{\bA}^{\ell}_{k}}
		\forall \bx_{k} \exists x_{k+1}\, \gamma_i^{f^+} 
		\wedge
		\bigwedge_{k=1}^{\ell}
		\bigwedge_{g \in \bA^{\ell+1}_{k}}
		\forall \bx_{k}\, \beta^g
		$ \\
		\multicolumn{2}{l}{}\\ 
		\multicolumn{2}{l}{Adjacent $k$-types $\chi$ and the $k$-type of a $k$-tuple $\bar{a}$ in $\fA$}\\
		\hline
		$\Atp_k^\tau$ & the set of all adjacent $k$-types over $\tau$ \\
		$\atp^\fA_k(\bar{a})$ & the adjacent $k$-type of $\bar{a}$ in $\fA$ \\
		$\itp^\fA_k(\bar{a})$ & the incremental $k$-type of $\bar{a}$ in $\fA$ \\
		$\partial \xi$ & the incremental $k$-type included in $\xi$\\
		$\chi^+$ & $\chi(x_2 x_3 \cdots x_{k+1})$ \\
	\end{tabular}
	\caption{Quick reference guide for Sec~\ref{sec:preliminaries}~and~\ref{sec:primGen}.}
  \label{table:prelim}
	\end{table}
\section{Upper bounds for $\AFv{\ell}$ without equality}\label{section:AF-upper-bounds}
Our goal in the following two sections is to establish a small model property for each of the fragments $\AFv{\ell}$, for $\ell \geq 2$. 
We proceed by induction on $\ell$. The base case ($\ell = 2$) involves no work: it was shown in~\cite[Thm.~5.1]{GradelKV97}
that each satisfiable $\FOt$-sentence $\phi$ has a model of size $2^{O(\sizeof{\varphi})}$; 
since $\AFv{2} \subseteq \FOt$, the result follows.
For the inductive step, we reduce the case $\ell{+}1$, with exponential blow-up, to the case~$\ell$.
More precisely, we compute, for a given formula $\phi \in \AFv{\ell{+}1}$, 
an equisatisfiable formula $\psi \in \AFv{\ell}$, over an exponentially larger signature, together with bounds on the relative sizes of their
respective models. We thereby show that any satisfiable
$\AFv{\ell +1}$-sentence $\phi$ is satisfiable in a structure of size $\ft(\ell{-}1,O(\sizeof{\phi}))$, where $\ft$ 
is defined inductively by $\ft(0,n) = n$ and $\ft(k{+}1,n) = 2^{\ft(k,n)}$. 
To illustrate the proof strategy as perspicuously as possible, we confine our attention in this section to the
sub-fragment of $\AFv{\ell}${} \textit{without equality}. In the next section we generalize the result to the full fragment $\AFv{\ell }${}, at
the cost of some additional combinatorics. Returning to current affairs,
Table~\ref{table:noeq} provides a summary of important symbols that will be defined throughout this section. (There is no need to read it just~yet!)

\begin{table}
	\begin{tabular}{c | l}
		\multicolumn{2}{l}{Surjective $f, g \in \bA^m_k$, incremental $k$-types $\iota$, quantifier-free $\chi \in \AFv{[k]}$}\\
		\hline
		$\bD^{\mathrm{o}}_{k}$ & all pairs $\langle i, j \rangle$ for $1 \leq i < j \leq k$ with $j {-} i {+} 1 \geq 2$ and odd \\
		$D^+$ & $\{ \langle i+1, j+1 \rangle \mid \langle i, j \rangle \in D \}$ \\
		$R^*$ & the equivalence closure of a binary relation $R$\\
		$f \overset{D}{=} g$ & $\langle f(i),g(i) \rangle \in D^*$ for all $i \in [1,m]$\\
		$\iota$ is $D$- & \multirow{2}{*}{$f \overset{D}{=} g$ implies $\iota \models p(\bx^f_k) \leftrightarrow p(\bx^g_k)$ for all $p {\in} \sig(\chi)$, all $f, g \in \bA^m_k$} \\
		compatible       & \\
		$\chi$ is $D$- & \multirow{2}{*}{ there exists an $\xi \in \Atp_k^\tau$ s.t. $\xi \models \chi$ and $\partial \xi$ is $D$-compatible } \\
		consistent  & \\
		\multicolumn{2}{l}{}\\ 
		\multicolumn{2}{l}{Construction of $\psi$ from $\phi$}\\
		\hline
		$d_{k}(\bx_k)$ & atom implying that $\bx_k$ is a palindrome \\
		$\delta_D$($\bx_\ell$)  & $\bigwedge \set{d_{j{-}i{+}1}(x_i \cdots x_j) \mid \langle i,j \rangle \in D}$ \\ 
		$p_\zeta(\bx_{\ell-1})$ & atom implying there is some $x$ s.t. $x\bx_{\ell-1}$ realizes $\zeta$ \\
		$\fB \times H$  & the model obtained by cloning elements of $\fB$ for each $h \in H$ \\
	\end{tabular}
	\caption{Quick reference guide for Sec~\ref{section:AF-upper-bounds}.}
	\label{table:noeq}
	\end{table}

For the next few lemmas (\ref{lma:coherencyAndConsistency}--\ref{lma:reductionDirection2}), fix an equality-free, normal-form \AFv{\ell{+}1}-formula $\phi$ over some signature $\tau$, as given in~\eqref{eq:anf}, with $\ell \geq 2$.
We construct an equality-free, normal-form formula $\psi \in \AFv{\ell}$ such that: (i) if $\phi$ is satisfiable over some domain $A$, then so is $\psi$; and (ii) if $\psi$ is satisfiable
over a domain $B$, then $\phi$ is satisfiable over some domain $C$, with $|C|/|B|$ 
bounded by an exponential~function~of~$\sizeof{\phi}$. For the remainder of this section, all formulas will silently be assumed to be equality-free, and likewise
for adjacent types.

We take $\psi$ to have the form
\begin{equation*}
\psi\coloneqq \acl(\phi) \wedge \psi_1  \wedge \psi_2 \wedge \psi_3 \wedge \psi_4,
\end{equation*}
where $\acl(\phi)$ is the adjacent closure of $\phi$ (featured in Lemma~\ref{lma:adjacentClosure}) and 
$\psi_1, \dots, \psi_4$ are $\AFv{\ell}$-formulas over an expanded signature.  
To motivate this construction of $\psi$, we first suppose $\phi$ is satisfiable, and consider any model $\fA \models \phi$.
We then introduce the conjuncts $\psi_1, \dots, \psi_4$ one by one,
simultaneously defining an expansion $\fA^+$ of $\fA$ by interpreting the new predicates so as to satisfy these conjuncts. (Of course, the $\psi_1, \dots, \psi_4$ depend only on $\phi$, and not on $\fA^+$.) Since,
by Lemma~\ref{lma:adjacentClosure}, $\fA \models \acl(\phi)$, we have $\fA^+ \models \psi$. Our main task is then to show that, given any finite layered structure $\fB \models \psi$ of height $\ell$, we can
construct a finite layered structure $\fC \models \phi$ of height $\ell{+}1$, with
$|C|/|B|$ bounded by some exponential~function~of~$\sizeof{\phi}$.
This establishes the finite model property for $\AFv{\ell+1}$, and reduces its
satisfiability problem to that for $\AFv{\ell}$, though with exponential blow-up. In effect,
$\psi$ specifies just the right amount of information concerning the primitive $\ell$-tuples occurring in any of its models
to ensure that the adjacent types of primitive $(\ell{+}1)$-tuples can be assigned in such a way as to build a model of $\phi$. 

%
%

We now proceed to the definition of the conjuncts $\psi_1, \dots, \psi_4$, and the construction of the expansion $\fA^+$.
Turning first to $\psi_1$, 
for each $s$ ($3 \leq 2s + 1 \leq \ell$) we introduce a fresh $(2s{+}1)$-ary predicate $d_{2s{+}1}$, and declare that a tuple $\bar{b} \in A^{2s+1}$
satisfies $d_{2s+1}$ in $\fA^+$ just in case $\bar{b}$ is a palindrome. It is then easy to verify that $\fA^+$
is a model of the sentence $\psi_1$ given as
\begin{equation*}
	\psi_1 \coloneqq \bigwedge_{3 \leq 2s + 1 \leq \ell} \forall \bx_{k+1}\ d_{2s{+}1}(x_1 \cdots x_s x_{s + 1} x_s \cdots x_1).
\end{equation*}
Conversely, if $\fB$ is any structure such that $\fB \models \psi_1$, and $\bar{c} \in B^{2s{+}1}$ is a palindrome
($3 \leq 2s + 1 \leq \ell$), 
then $\fB \models d_{2s+1}[\bar{c}]$.
Observe that $\psi_1$ employs at most $(\ell+1)/2 < \ell$ variables. 

The predicates $d_s$ just introduced will be useful for specifying that various tuples of objects exhibit certain sets of \textit{defects}, in
the sense of Sec.~\ref{sec:primGen}. Recall that a defect of an $\ell$-tuple $a_1 \cdots a_\ell$ is a pair of integers $\langle i, j \rangle$ such that 
$a_i \cdots a_j$ is a non-trivial palindrome. (Note that the length of this factor is $j{-}i{+}1 \geq 2$.)
In the sequel, for any $k \geq 2$, we denote by $\bD^{\mathrm{o}}_{k}$ the set of all pairs $\langle i, j \rangle$ for $1  \leq i < j \leq k$ such that
$j{-}i{+}1$ is greater than two and odd. (Thus, $\bD^{\mathrm{o}}_{2}= \emptyset$.)
Now, for any $D \subseteq \bD^{\mathrm{o}}_{\ell{-}1}$, we write $\delta_D$ for the formula
\begin{equation*}
\delta_D:= \bigwedge \set{d_{j{-}i{+}1}(x_i \cdots x_j) \mid \langle i,j \rangle \in D}.
\end{equation*}
Intuitively, $\delta_D$ says that any satisfying tuple has a defect set which includes $D$. Note that $\delta_D$ is defined only for 
$D \subseteq \bD^{\mathrm{o}}_{\ell{-}1}$: for technical reasons, we are not interested in defects
corresponding to even-length palindromic factors.

Turning now to $\psi_2$, we introduce, for each adjacent $\ell$-type $\zeta \in \Atp_\ell^\tau$, an $(\ell-1)$-ary predicate $p_\zeta$
intended to identify the tails of $\ell$-tuples satisfying $\zeta$. 
Specifically, we declare that $\fA^+ \models p_\zeta[\bar{b}]$ just in case there is some 
$a \in A$ such that $\fA \models \zeta[a\bar{b}]$.
It is then easy to verify that $\fA^+$
is a model of the sentence $\psi_2$ given as 
\begin{equation*}
\psi_2 \coloneqq	\bigwedge_{\zeta \in \Atp^\tau_\ell} \forall \bx_{\ell} \Big(\zeta \rightarrow p_\zeta(x_2 \cdots x_{\ell})\Big).
\end{equation*}
Conversely, if $\fB$ is a structure interpreting the signature $\tau$, and $\fB^+$ is an expansion of $\fB$ such that $\fB^+ \models \psi_2$, then 
$\atp^\fB[a\bar{b}] = \zeta$ implies $\fB^+ \models p_\zeta[\bar{b}]$.
Observe that $\psi_2$ employs $\ell$ variables.

The construction of $\psi_3$ and $\psi_4$ requires some preliminary work concerning palindromes, as tuples containing palindromic factors pose a
technical challenge in the construction of our formula $\psi$. To explain why,
let $\bar{a}$ be an $(\ell {+} 1)$-tuple over some set $A$ and $\xi$ an adjacent
$(\ell {+} 1)$-type, and imagine that we wish to define a structure
$\fA$ on $A$ in such a way that $\atp^\fA[\bar{a}] = \xi$. Is this at all possible? Suppose, for example, that $\bar{a} = abcbd$, and  that $\xi$ 
contains the literals $p(x_1 x_2 x_3 x_2 x_3 x_4 x_5)$ and $\neg p(x_1 x_2 x_3 x_4 x_3 x_4 x_5)$. A moment's thought shows that we cannot
have $\atp^\fA[\bar{a}] = \xi$, since that would require both $\fA \models p[abcbcbd]$ and $\fA \models \neg p[abcbcbd]$. We need to be able to identify 
this sort of situation using only the resources of $\AF$. The next lemma explains how.

Recall the apparatus of incremental types introduced in Sec.~\ref{sec:preliminaries}: if $\bar{c}$ is a $k$-tuple over some structure $\fA$,
then $\itp^\fA[\bar{c}]$ is the set of covering adjacent $k$-literals (i.e.~those featuring all variables in $\bx_k$) satisfied by $\bar{c}$ in $\fA$.
Recall also the notation introduced in Lemma~\ref{lemma:main2}: 
for any pair of adjacent functions
$f,g \colon [1,m] \rightarrow [1,k]$ and any set
$D \subseteq \set{\langle i,j \rangle \mid 1 \leq i < j \leq k}$, we
write $f \overset{D}{=} g$ if
$\langle f(i), g(i) \rangle \in D^{*}$ for all $i$ ($1 \leq i \leq m$), 
where $D^{*}$ denotes the equivalence closure of $D$.
Let us say that an incremental $k$-type  $\iota$ is $D$-\textit{compatible} if
$f \overset{D}{=} g$ implies $\iota \models p(\bx^f_k) \leftrightarrow p(\bx^g_k)$  for all predicates $p$ in the
signature of $\iota$ and all
surjective adjacent functions $f,g \colon [1,m] \rightarrow [1,k]$ where
$m$ is the arity of $p$.

\begin{lemma}
Let $\bar{c}$ be a primitive $k$-tuple over $A$, and $D$ the defect set of $\bar{c}$.
If $\fA$ is a structure with domain $A$, then $\iota = \itp^\fA[\bar{c}]$ is $D$-compatible.
Conversely, if $\iota$ is a $D$-compatible incremental $k$-type over some signature $\tau$, 
then there is a structure $\fA$ interpreting $\tau$ over $A$ such that $\itp^\fA[\bar{c}] = \iota$. 
\label{lma:coherencyAndConsistency}
\end{lemma}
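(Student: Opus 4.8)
The plan is to derive both directions directly from the combinatorial facts already in hand, namely the uniqueness of primitive generators (Lemma~\ref{lemma:main}) and the agreement criterion of Lemma~\ref{lemma:main2}; essentially no further combinatorics is needed.

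For the forward direction, I would fix $\fA$ with domain $A$ and set $\iota = \itp^\fA[\bar{c}]$. Let $p \in \tau$ have arity $m$, and let $f, g \in \bA^m_k$ be surjective with $f \overset{D}{=} g$. Since $\bar{c}$ is primitive of length $k$ with defect set $D$, and $m \geq k$ by surjectivity, Lemma~\ref{lemma:main2} yields $\bar{c}^f = \bar{c}^g$. Hence $\bar{c}^f \in p^\fA$ iff $\bar{c}^g \in p^\fA$, i.e.\ $\fA \models p(\bx^f_k)[\bar{c}]$ iff $\fA \models p(\bx^g_k)[\bar{c}]$. Because $\iota$, being a maximal consistent set of covering adjacent $k$-literals, contains exactly one of $p(\bx^f_k)$, $\neg p(\bx^f_k)$ (and likewise for $g$), this gives $p(\bx^f_k) \in \iota$ iff $p(\bx^g_k) \in \iota$, so $\iota \models p(\bx^f_k) \leftrightarrow p(\bx^g_k)$; as $p, f, g$ were arbitrary, $\iota$ is $D$-compatible.

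For the converse, given a $D$-compatible incremental $k$-type $\iota$ over $\tau$, I would interpret each $p \in \tau$ of arity $m$ by
\[
p^\fA \coloneqq \set{\bar{c}^f : f \in \bA^m_k \text{ surjective and } p(\bx^f_k) \in \iota},
\]
putting no other $m$-tuples into $p^\fA$ (the choice on tuples not of the form $\bar{c}^f$ is immaterial, since such tuples never occur when evaluating the covering literals of $\bar{c}$). The only thing to check is that for every surjective $f \in \bA^m_k$ and every $p$ we have $\bar{c}^f \in p^\fA$ iff $p(\bx^f_k) \in \iota$; the ``$\Leftarrow$'' direction is immediate, and for ``$\Rightarrow$'' I would argue: if $\bar{c}^f = \bar{c}^g$ for some surjective $g$ with $p(\bx^g_k) \in \iota$, then $f \overset{D}{=} g$ by Lemma~\ref{lemma:main2}, so $D$-compatibility gives $\iota \models p(\bx^f_k) \leftrightarrow p(\bx^g_k)$, and maximality of $\iota$ forces $p(\bx^f_k) \in \iota$. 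Given this equivalence, $\fA \models p(\bx^f_k)[\bar{c}]$ iff $\bar{c}^f \in p^\fA$ iff $p(\bx^f_k) \in \iota$ for every covering adjacent $k$-literal, so $\itp^\fA[\bar{c}] = \iota$.

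I do not expect a genuine obstacle here: all the real work sits in Lemma~\ref{lemma:main2}. The single delicate point is the well-definedness of the interpretation $p^\fA$ — that two distinct surjective adjacent functions producing the same tuple from $\bar{c}$ are handled consistently by $\iota$ — and this is exactly what $D$-compatibility buys once Lemma~\ref{lemma:main2} is applied.
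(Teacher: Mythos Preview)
Your proposal is correct and follows essentially the same approach as the paper's own proof: both directions are derived directly from Lemma~\ref{lemma:main2}, with the converse handled via the same explicit interpretation $p^{\fA} = \set{\bar{c}^f \mid f \in \bA^m_k \text{ surjective, } p(\bx^f_k) \in \iota}$ and the same well-definedness check using $D$-compatibility. Your argument is if anything slightly more careful in noting $m \geq k$ for the applicability of Lemma~\ref{lemma:main2}.
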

\begin{proof}
The first statement of the lemma is almost immediate. Fix
a predicate $p$ of arity $m$ interpreted by $\fA$, and suppose $f, g \in \bA^m_k$ are surjective. 
We must show that $f \overset{D}{=} g$ implies $\iota \models p(\bx^f_k) \leftrightarrow p(\bx^g_k)$.
But by Lemma~\ref{lemma:main2}, if $f \overset{D}{=} g$, then
$\bar{c}^f = \bar{c}^g$, hence
$\fA \models p[\bar{c}^f] \Leftrightarrow \fA \models p[\bar{c}^g]$. 
Thus $\iota \models p(\bx^f_k) \leftrightarrow p(\bx^g_k)$.
	
For the second statement, define $\fA$ over the domain $A$ by setting, for any predicate $p$ of $\tau$ having arity, say, $m$:
\begin{equation*}
p^{\fA} = \set{\bar{a}^f \mid \text{there exists a surjective $f \in \bA^m_k$ such that $\iota \models p(\bx^f_k)$}}.
\end{equation*}
To show that $\fA \models \iota[\bar{a}]$, fix any $p \in \tau$ with arity $m$. If $\iota$ contains the (covering, adjacent) atom $\alpha\coloneqq p(\bar{x}^f)$, where
then it is immediate
from the construction of $\fA$ that $\fA \models \alpha[\bar{a}]$. It remains to show that
if $\iota$ contains the negated atom $\nu\coloneqq \neg p(\bx^f_k)$, then $\fA \models \nu[\bar{a}]$. 
Suppose otherwise. 
From the construction of $\fA$, we have $\iota \models p(\bx^g_k)$ for some surjective adjacent $g\colon [1,m] \rightarrow [1,k]$ such that $\bar{a}^f = \bar{a}^g$. 
By Lemma~\ref{lemma:main2}, $f \overset{D}{=} g$. Yet 
$\iota$ is by assumption $D$-compatible, whence $\iota \models p(\bar{x}^f) \leftrightarrow p(\bar{x}^g)$, contradicting the
fact that $\xi$ contains both $\neg p(\bx^f_k)$ and $p(\bx^g_k)$.
\end{proof}

To explain the significance of this lemma, let us return to our example of the 5-tuple $\bar{a} = abcbd$, and the adjacent 5-type $\xi$ 
containing literals $p(x_1 x_2 x_3 x_2 x_3 x_4 x_5)$ and $\neg p(x_1 x_2 x_3 x_4 x_3 x_4 x_5)$. 
Notice that these literals are covering, and thus are contained in $\partial \xi$, the incremental $5$-type included in $\xi$. 
As we have seen, $\bar{a}$ cannot be assigned the  adjacent-type $\xi$, because
it makes inconsistent demands in respect of the 7-tuple $abcbcbd$. Observe, however, that
the defect set of $\bar{a}$ is $D = \set{\langle 2,4 \rangle}$, and when we
write the argument sequences $x_1 x_2 x_3 x_2 x_3 x_4 x_5$ and
$x_1 x_2 x_3 x_4 x_3 x_4 x_5$ in the form
$\bx_5^f$ and $\bx_5^g$, respectively, it is easily checked that $f\overset{D}{=} g$, whence $\partial \xi$ is not $D$-compatible.
On the other hand, setting $\ell \coloneqq 4$, and supposing the formula $\psi_1$ to hold, we see that 
$bcb$ satisfies the predicate $d_3$, whence the quadruple
$abcb$ satisfies the $\AFv{4}$-formula $\delta_D = d_3(x_2x_3x_4)$. Crucially,
the fact that $abcb$ satisfies $\delta_D$ allows us to
detect, using resources available only in $\AFv{4}$, that 
$\bar{a}= abcbd$ cannot be assigned the adjacent 5-type $\xi$, because $\partial \xi$ is not $D$-compatible.

Returning now to the construction of $\psi_3$ and $\psi_4$, and recalling that
$\bD^{\mathrm{o}}_{\ell{-}1}$ denotes the set of all pairs $\langle i, j \rangle$ for $1  \leq i < j \leq \ell{-}1$ such that
$j{-}i{+}1$ is greater than 2 and odd, consider any subset $D \subseteq \bD^{\mathrm{o}}_{\ell{-}1}$. Denote by $D^+$ the result of adding 1 to both components of every pair in $D$, i.e.~$D^+ \coloneqq \set{\langle i{+}1, j{+}1 \rangle \mid \langle i,j \rangle \in D}$.
(Thus, $D^+  \subseteq \bD^{\mathrm{o}}_{\ell}\subseteq \bD^{\mathrm{o}}_{\ell{+}1}$.)
The intuition here is that if $D$ represents a set of odd-length non-trivial palindromic factors of
an $(\ell{-}1)$-tuple $\bar{a}$, and $a$ and $b$ are some elements, 
then $D^+$ represents a set of odd-length non-trivial  palindromic factors of the $\ell$-tuple
$a\bar{a}$ and indeed also of the $(\ell{+}1)$-tuple
$a\bar{a}b$. Say that a quantifier-free \AFv{\ell+1}-formula $\chi$ is $D^+$-\textit{consistent} if there exists an adjacent $(\ell{+}1)$-type $\xi$ over the relevant signature such
that $\xi \models \chi$ and $\partial \xi$ is $D^+$-compatible. Finally, define the formulas $\psi_3$ and $\psi_4$ as:
\begin{align*}
\begin{split}
\psi_3\coloneqq \smash{\bigwedge_{i \in I} \bigwedge_{\smash{\zeta \in {\Atp}^\tau_\ell}} \bigwedge_{D \subseteq \bD^{\mathrm{o}}_{\ell{-}1}} \hspace{-0.45cm} \forall \bx_{\ell{-}1}} 
   & \exists x_\ell \Big(\big(\delta_D \wedge p_\zeta(\bx_{\ell{-}1}) \big) \to\\
   & \bigvee \set{\eta \in \Atp^\tau_\ell \mid 
				\text{$(\zeta \wedge \hat{\beta} \wedge \gamma_i\wedge \eta^+)$ is $D^+$-consistent}}\Big)
	\end{split}
\\
\psi_4 \coloneqq \begin{split}
	\smash{\bigwedge_{\smash{\zeta \in \Atp^\tau_\ell}} \bigwedge_{D \subseteq \bD^{\mathrm{o}}_{\ell{-}1}} \forall \bx_{\ell}}
	 \Big(\big(&\delta_D \wedge p_\zeta(\bx_{\ell{-}1}) \big) \to\\
	& \bigvee \set{\eta \in \Atp^\tau_\ell \mid 
		\text{$(\zeta \wedge \hat{\beta} \wedge \eta^+)$ is $D^+$-consistent}}\Big).
\end{split}	
\end{align*}

We claim that $\fA^+ \models \psi_3$. Suppose $\bar{a}$ is an $(\ell{-}1)$-tuple such that $\fA^+ \models \delta_D[\bar{a}]$ and
$\fA^+ \models p_\zeta[\bar{a}]$. By construction of $\fA^+$, there exists $a \in A$ such that $\atp^\fA[a\bar{a}] = \zeta$.
Now, fix an index $i \in I$ of some witness requirement in $\phi$. 
Since $\fA \models \phi$, there exists an element $b \in A$ such that
$\fA \models \gamma_i[a\bar{a}b]$,
and in fact such that $\fA \models (\hat{\beta} \wedge \gamma_i)[a\bar{a}b]$.
Letting 
$\xi \coloneqq \atp^\fA[a\bar{a}b]$ and $\eta \coloneqq \atp^\fA[\bar{a}b]$, we have
$\xi \models \zeta \wedge \hat{\beta} \wedge \gamma_i \wedge \eta^+$.
On the other hand, since $\fA^+ \models \delta_D[\bar{a}]$ we have, by construction of $\fA^+$, that the defect set of $\bar{a}$ includes $D$;
and hence that the defect set of $a\bar{a}b$ includes $D^+$. By the first statement of Lemma~\ref{lma:coherencyAndConsistency}
we have that $\partial \xi$ is $D^+$-compatible.
Hence, $b$ is a witness for the quantifier $\exists x_\ell$ in $\psi_3$ with respect to $\bar{a}$, so that 
$\fA^+ \models \psi_3$ as required. By similar reasoning, $\fA^+ \models \psi_4$.

To motivate these formulas, consider first $\psi_3$, and suppose $\psi_1 \wedge \psi_2 \wedge \psi_3$ has a model, say, $\fB$. We may assume, by Lemma~\ref{lma:approx}, that 
$\fB$ is a layered structure of height $\ell$, and we may take $\fB^-$ to be the reduct of $\fB$ to the signature $\tau$ (i.e.~we forget the predicates $p_\zeta$ and $d_{2s+1}$). 
Suppose, now $a$ is an element of $B$ and $\bar{a}$ an $(\ell{-}1)$-tuple over $B$, and let
$\zeta= \atp^{\fB^-}[a\bar{a}]$. Let us further imagine that the defect set of $\bar{a}$ is $D \subseteq \bD^{\mathrm{o}}_{\ell{-}1}$. (We shall not worry
about even-length palindromic factors for the present.) Since $\fB \models \psi_1 \wedge \psi_2$, the $(\ell{-}1)$-tuple $\bar{a}$ will satisfy $\delta_D$ and $p_\zeta$
in $\fB$. Of course, $\fB$ does not in general define the adjacent types of $(\ell{+}1)$-tuples (unless they are non-primitive), but
we would like to be able to find, for each index $i \in I$, an element $b \in B$ such that the $(\ell{+}1)$-tuple $a\bar{a}b$ \textit{could} be assigned an adjacent
type satisfying the existential requirement $\gamma_i$ of $\phi$ without violating the universal constraints $\beta$.
Anticipating Lemma ~\ref{lma:reductionDirection2}, this is what $\psi_3$ does: the witnesses it provides for $x_\ell$ have the required properties. 
Formula $\psi_4$ plays an analogous role with respect to the universal requirements of $\phi$.
But we are getting ahead of ourselves. For the present, we have proved:
\begin{lemma}
	Suppose $\fA \models \phi$. Then we can expand $\fA$ to a model $\fA^+ \models \psi$.
	\label{lma:reductionDirection1}
\end{lemma}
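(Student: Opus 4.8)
The plan is to collect the observations already established while the conjuncts $\psi_1,\dots,\psi_4$ were introduced. First I would fix a model $\fA \models \phi$ over a domain $A$ and define the expansion $\fA^+$ on the same domain by the two stipulations given above: put $\fA^+ \models d_{2s+1}[\bar{b}]$ exactly when $\bar{b}$ is a palindrome (for $3 \le 2s{+}1 \le \ell$), and put $\fA^+ \models p_\zeta[\bar{b}]$ exactly when there is some $a \in A$ with $\fA \models \zeta[a\bar{b}]$ (for $\zeta \in \Atp^\tau_\ell$). Since $\acl(\phi)$ uses only predicates of $\tau$ and $\phi \models \acl(\phi)$ by Lemma~\ref{lma:adjacentClosure}, we get $\fA^+ \models \acl(\phi)$ at once; and $\fA^+ \models \psi_1$ and $\fA^+ \models \psi_2$ are immediate from the definitions of the new predicates.

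Next I would verify $\fA^+ \models \psi_3$. Given $i \in I$, $\zeta \in \Atp^\tau_\ell$, $D \subseteq \bD^{\mathrm{o}}_{\ell-1}$ and an $(\ell{-}1)$-tuple $\bar{a}$ with $\fA^+ \models (\delta_D \wedge p_\zeta(\bx_{\ell-1}))[\bar{a}]$, I would extract from $p_\zeta$ an $a \in A$ with $\atp^\fA[a\bar{a}] = \zeta$, then apply the witness conjunct $\forall \bx_\ell\, \exists x_{\ell+1}\, \gamma_i$ of $\phi$ to $a\bar{a}$ to obtain $b$ with $\fA \models \gamma_i[a\bar{a}b]$, noting also $\fA \models \hat{\beta}[a\bar{a}b]$ since $\fA \models \forall \bx_{\ell+1}\, \beta$ and $\hat{\beta} = \beta \wedge \beta^{-1}$. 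Setting $\xi \coloneqq \atp^\fA[a\bar{a}b]$ and $\eta \coloneqq \atp^\fA[\bar{a}b]$, we have $\xi = \zeta \cup \eta^+ \cup \partial\xi$, hence $\xi \models \zeta \wedge \hat{\beta} \wedge \gamma_i \wedge \eta^+$; and since $\delta_D$ holds at $\bar{a}$, the defect set of $\bar{a}$ contains $D$, so the defect set of $a\bar{a}b$ (inside which $\bar{a}$ sits at positions $[2,\ell]$) contains $D^+$, whence $\partial\xi$ is $D^+$-compatible by the first statement of Lemma~\ref{lma:coherencyAndConsistency}. This makes $\zeta \wedge \hat{\beta} \wedge \gamma_i \wedge \eta^+$ $D^+$-consistent, so $\eta$ is one of the disjuncts and $b$ is the required witness for $x_\ell$. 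The verification of $\psi_4$ runs the same way: for a given $\ell$-tuple $\bar{b}$ whose length-$(\ell{-}1)$ prefix satisfies $\delta_D \wedge p_\zeta$, one takes $a$ from $p_\zeta$, puts $\eta \coloneqq \atp^\fA[\bar{b}]$ and $\xi \coloneqq \atp^\fA[a\bar{b}]$, and argues as before with $\gamma_i$ deleted.

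The step I expect to need the most care is the appeal to the first half of Lemma~\ref{lma:coherencyAndConsistency}: that statement is phrased for primitive tuples, whereas $a\bar{a}b$ need not be primitive. My fix would be to notice that the only property actually used is that $f \overset{D^+}{=} g$, for surjective $f, g \in \bA^m_{\ell+1}$, forces $(a\bar{a}b)^f = (a\bar{a}b)^g$; and this follows directly from $D^+$ being contained in the defect set of $a\bar{a}b$, since positions linked by the equivalence closure of any sub-relation of a word's defect set carry equal letters. Equivalently, one may simply record that the proof of Lemma~\ref{lma:coherencyAndConsistency} establishes the slightly stronger statement that $\itp^\fA[\bar{c}]$ is $D$-compatible whenever $D$ is contained in the defect set of $\bar{c}$. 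Everything else is routine bookkeeping: tracking which variables each auxiliary predicate is applied to, and confirming, as already noted, that $\psi_1,\dots,\psi_4$ use at most $\ell$ variables, so that $\psi \in \AFv{\ell}$. Combining the five conjuncts gives $\fA^+ \models \psi$, as desired.
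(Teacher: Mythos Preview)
Your proposal is correct and follows essentially the same approach as the paper: the lemma is proved by collecting the verifications of $\fA^+ \models \acl(\phi)$ and $\fA^+ \models \psi_1,\dots,\psi_4$ that were established inline while the conjuncts were being introduced, and your reconstruction of those verifications matches the paper's.

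You actually add something the paper glosses over. The paper invokes the first statement of Lemma~\ref{lma:coherencyAndConsistency} to conclude that $\partial\xi$ is $D^+$-compatible, but that lemma is stated for a \emph{primitive} tuple with defect set \emph{equal} to $D$, whereas here $a\bar{a}b$ need not be primitive and $D^+$ is only a subset of its defect set. Your observation that the needed direction---$f \overset{D^+}{=} g$ implies $(a\bar{a}b)^f = (a\bar{a}b)^g$---follows for any $D^+$ contained in the defect set (since endpoints of a palindromic factor are equal, and this propagates through the equivalence closure) is exactly right, and is indeed what the proof of Lemma~\ref{lma:coherencyAndConsistency} actually establishes. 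So your treatment is, if anything, more careful than the paper's on this point.
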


Having defined $\psi$ and established Lemma~\ref{lma:reductionDirection1}, we wish to argue for the converse direction.
We begin with three straightforward, technical lemmas.
When dealing with equality-free formulas, we may freely duplicate elements in their models. 
Let $\fB$ be a $\tau$-structure, and $H$ a non-empty set of indices. We define 
the structure $\fB \times H$ over the Cartesian product $B \times H$ as follows: for any
$p \in \tau$ of arity $k$, and any $k$-tuples $b_1 \cdots b_k$ over $B$ and $h_1 \cdots h_k$ over $H$, set
$\fB \times H \models p[\langle b_1,h_1\rangle  \cdots \langle b_k, h_k\rangle]$ if and only if 
$\fB \models p[b_1 \cdots b_k]$.
By routine structural induction, we have: 
\begin{lemma}
	Let $\psi$ be an equality-free first-order formula all of whose free variables occur in $\bx_k$, $\fB$
	a structure interpreting the signature of $\psi$, and $H$ a non-empty set. Then, for any tuples 
	$\bar{b} = b_1 \cdots b_k$ from $B$ and $h_1 \cdots h_k$ from $H$,
	$\fB \models \psi[\bar{b}]$ if and only if $\fB \times H \models \psi[\langle b_1, h_1\rangle \cdots \langle b_k, h_k\rangle]$.\label{lma:cartesian}
\end{lemma}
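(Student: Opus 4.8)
The plan is a routine structural induction on $\psi$, preceded by one small generalisation. Before starting the induction, I would restate the claim allowing $\psi$ to have free variables among $x_1,\dots,x_n$ for an arbitrary $n \geq k$ (renaming bound variables if necessary so that each quantifier introduces a fresh $x_{n+1}$), with correspondingly longer tuples $\bar b = b_1\cdots b_n$ over $B$ and $h_1\cdots h_n$ over $H$; this is needed so that the inductive hypothesis is available inside the scope of a quantifier, and the lemma is the case $n = k$. (Equivalently, one may phrase everything in terms of an arbitrary valuation $\nu$ into $\fB$ together with a companion valuation $\mu$ into $H$, the pair defining a valuation into $\fB\times H$.)

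For the base case I would take $\psi \coloneqq p(x_{i_1}\cdots x_{i_m})$ atomic, with $p \in \sig(\psi)$. Since $\psi$ is equality-free, $p$ is a genuine relation symbol rather than $=$, and the defining clause of $\fB\times H$ states precisely that $\langle b_{i_1},h_{i_1}\rangle\cdots\langle b_{i_m},h_{i_m}\rangle \in p^{\fB\times H}$ iff $b_{i_1}\cdots b_{i_m} \in p^{\fB}$; this is exactly the asserted equivalence. This is also the one place where equality-freeness is used: were $=$ allowed, $\langle b,h_1\rangle$ and $\langle b,h_2\rangle$ would be distinct elements of $B\times H$ with a common first coordinate, and equality atoms would fail to transfer. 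The Boolean cases ($\neg$, $\wedge$, $\vee$, $\to$) follow immediately from the induction hypothesis applied to the immediate subformulas, since the truth-value of a Boolean combination under a given assignment is a function of the truth-values of its parts.

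For the quantifier case, say $\psi \coloneqq \exists x_{n+1}\,\psi'$ (the $\forall$ case is dual, or reducible to the $\exists$ case via negation). From left to right: given $b' \in B$ with $\fB \models \psi'[\bar b b']$, use $H \neq \emptyset$ to fix any $h' \in H$ and apply the induction hypothesis to obtain $\fB\times H \models \psi'[\langle b_1,h_1\rangle\cdots\langle b_n,h_n\rangle\langle b',h'\rangle]$, whence $\fB\times H \models \psi[\langle b_1,h_1\rangle\cdots\langle b_n,h_n\rangle]$. From right to left: a witness $\langle b',h'\rangle \in B\times H$ for the $\fB\times H$ side gives, via the induction hypothesis, the witness $b'$ for the $\fB$ side. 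I expect no real obstacle; the only things to keep in mind are the preliminary generalisation of the statement (without it there is nothing to recurse on under a quantifier) and the (entirely trivial) appeal to non-emptiness of $H$ to supply the missing second coordinate of the existential witness.
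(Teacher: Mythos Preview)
Your proposal is correct and matches the paper's approach: the paper simply asserts the lemma follows ``by routine structural induction'' without spelling out details, and your argument is precisely that induction, carried out carefully (including the necessary generalisation for the quantifier step and the observation that equality-freeness and $H \neq \emptyset$ are each used exactly once).
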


The following very simple lemma will be used in conjunction with Lemma \ref{lma:cartesian}.
\begin{lemma}\label{lma:defectInheritance}
	Suppose $\bar{b}$ is a $k$-tuple over some set $B$ and $H$ a non-empty set.
	Then, for any $h_1, \dots, h_k \in H$, if $\bar{c}$ is the tuple 
	$\langle b_1, h_1\rangle \cdots \langle b_k, h_k\rangle$ over $B \times H$, then 
	every defect of $\bar{c}$ is a defect of $\bar{b}$.
\end{lemma}
\begin{proof}
	Pick some defect $\langle i, j \rangle$ of $\bar{c}$.  Thus, $c_i \cdots c_j$ is a non-trivial palindrome,
	i.e.~$\langle b_{i + m}, k_{i + m} \rangle = \langle b_{j - m}, k_{j - m} \rangle$ for all $0 \leq m \leq \lfloor(j-i)/2\rfloor$.
	But then $b_i \cdots b_j$ is certainly a non-trivial palindrome, and hence $\langle i, j \rangle$ a defect of $\bar{b}$ as required.
\end{proof}

Finally, the following combinatorial lemma allows us to extend the technique of `circular witnessing' 
(see, e.g.~\cite[p. 62--64]{GradelKV97})
frequently used in the analysis of two-variable logics, to the languages $\AFv{k}$.
\begin{lemma}\label{lma:simpleComb}
	For any integer $k>0$ there is a set $H$ with $|H| = (k^2 + k+1)^{k+1}$
	and a function $g\colon H^k \rightarrow H$ such that,
	for any tuple $\bar{t} \in H^k$ consisting of the elements $t_1, \dots, t_k$ in some order:
	{\em (i)} $g(\bar{t})$ is not in $\bar{t}$;
	{\em (ii)} if $\bar{t}' \in H^k$ consists of the elements $\set{t_2, \dots, t_k, g(\bar{t})}$ in some order, then
	$g(\bar{t}')$  is not in $\bar{t}$ either.
\end{lemma}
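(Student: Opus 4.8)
The plan is to construct the index set $H$ as a set of tuples encoding a "colour clock" that counts modulo a suitably large number, together with a small amount of bookkeeping to rule out the forbidden coincidences. The key obstacle is constraint (ii): when we pass from $\bar t$ to $\bar t'$ by dropping $t_1$ and appending $g(\bar t)$, the new witness $g(\bar t')$ must avoid not only $\bar t'$ but the original $t_1$ as well; so $g$ must, in effect, "remember" where it has been for two consecutive steps. This is exactly the two-step analogue of circular witnessing, and the combinatorics here must be robust against the fact that $\bar t$ and $\bar t'$ are given only as \emph{sets} (the coordinates may be permuted arbitrarily), so $g$ cannot read off which coordinate was just added.

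First I would fix a large enough modulus, say $N \coloneqq k^2 + k + 1$, and let $H \coloneqq [0, N-1]^{k+1}$, so that $|H| = N^{k+1}$ as required. An element $h \in H$ is a $(k{+}1)$-tuple $h = (h^{(0)}, \dots, h^{(k)})$ of residues mod $N$; think of $h^{(0)}$ as the "current value" and $h^{(1)}, \dots, h^{(k)}$ as auxiliary slots. Given $\bar t = (t_1, \dots, t_k) \in H^k$, let $V \coloneqq \set{t_1^{(0)}, \dots, t_k^{(0)}}$ be the set of current values occurring among the $t_i$; since $|V| \leq k < N$, I can pick the least residue $v \in [0,N-1] \setminus V$ (least in the natural order on $[0,N-1]$, which makes $g$ well-defined and independent of the ordering of $\bar t$). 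I then set $g(\bar t) \coloneqq (v, v_1, \dots, v_k)$, where $(v_1, \dots, v_k)$ records the multiset of current values $\set{t_1^{(0)}, \dots, t_k^{(0)}}$ listed in nondecreasing order — a canonical encoding of $V$-with-multiplicity into the $k$ auxiliary slots. Property (i) is then immediate: $g(\bar t)^{(0)} = v \notin V$, so $g(\bar t)$ differs from every $t_i$ in its $0$th coordinate.

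For property (ii): suppose $\bar t'$ consists of the elements $\set{t_2, \dots, t_k, g(\bar t)}$ in some order, and let $V' \coloneqq \set{t_2^{(0)}, \dots, t_k^{(0)}, v}$ be the set of current values of $\bar t'$. By construction $g(\bar t')^{(0)}$ is the least residue not in $V'$, hence $g(\bar t') \neq t_j$ for $j = 2, \dots, k$ and $g(\bar t') \neq g(\bar t)$, which handles all elements of $\bar t'$; it remains to show $g(\bar t') \neq t_1$. Here I use the auxiliary slots: the auxiliary part of $g(\bar t')$ is the canonical encoding of the \emph{set of current values of $\bar t'$, with multiplicity}, namely of the multiset $\{t_2^{(0)}, \dots, t_k^{(0)}, v\}$. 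I claim this cannot equal the auxiliary part of $t_1$ unless $t_1$ already fails to lie in the original $\bar t$ in a way that is excluded. The cleanest route is to enlarge $N$ slightly further if needed and add one more coordinate tracking a step-parity bit, so that $g(\bar t)$ is flagged as "freshly produced": a freshly-produced element has its parity bit set, and the auxiliary encoding of $g(\bar t')$ inherits that flag, whereas $t_1$ — being an element of the original tuple $\bar t$ — will differ. The main work, and the step I expect to be delicate, is checking that this flagging scheme is internally consistent across an arbitrarily long chain of such transitions (since Lemma~\ref{lma:simpleComb} will be applied iteratively when building the model $\fC$), i.e.\ that no legitimate element of a later tuple is ever mistakenly barred; this amounts to verifying that the canonical encoding map is injective on the relevant multisets and that the parity/bookkeeping coordinates evolve deterministically. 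Once that bookkeeping is pinned down, both (i) and (ii) follow by direct inspection of the $0$th coordinate together with the auxiliary slots, and the cardinality bound $|H| = (k^2 + k + 1)^{k+1}$ is met exactly by taking $N = k^2 + k + 1$ and $k{+}1$ coordinates.
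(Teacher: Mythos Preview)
Your setup is essentially the paper's---$H = [0,N-1]^{k+1}$ with $N = k^2+k+1$, first coordinate ``current'', remaining $k$ coordinates storing the first coordinates of the input tuple---but there is a genuine gap in how you choose $v$. You take $v$ to be the least residue avoiding only the \emph{first} coordinates $V = \{t_1^{(0)}, \dots, t_k^{(0)}\}$. This is not enough to secure (ii), and your own hesitation (``the cleanest route is to \dots add one more coordinate tracking a step-parity bit'') signals the problem. Concretely, with $k=2$, $N=7$, take $t_1 = (0,1,2)$ and $t_2 = (2,3,4)$. Then $V = \{0,2\}$, so $v = 1$ and $g(\bar t) = (1,0,2)$. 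Now $\bar t'$ consists of $t_2$ and $g(\bar t)$, so $V' = \{2,1\}$, whence $g(\bar t') = (0,1,2) = t_1$, violating (ii). Your parity-bit fix would change $|H|$ and so is not available; and the worry about ``arbitrarily long chains'' is a red herring---the lemma concerns exactly two applications of $g$.

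The paper's construction differs in one crucial respect: it chooses $i_0$ to avoid \emph{all} coordinates of all the $t_h$, i.e.\ the set $S = \{i_1,\dots,i_k\} \cup \bar s_1 \cup \cdots \cup \bar s_k$, not just the first coordinates. Since $|S| \leq k(k+1) = k^2+k$, the choice $N = k^2+k+1$ is exactly what makes this possible (and explains the otherwise mysterious size of $H$). With this stronger avoidance, the argument for (ii) is one line: the first coordinate $i_1$ of $t_1$ is stored among the auxiliary coordinates of $g(\bar t)$, which belongs to $\bar t'$; hence when you compute $g(\bar t')$ its first coordinate $i'$ is chosen to avoid everything in $g(\bar t)$, in particular $i_1$, so $g(\bar t') \neq t_1$. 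No parity bits, no further bookkeeping---just strengthen the avoidance condition on $v$ from $V$ to all of $S$ and the proof goes through directly.
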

\begin{proof}
	Let $z = k^2{+}k{+}1$ and $H = [1,z]^{k{+}1}$.
	Thus, $|H|$~is~as required.
	Writing any element of $H$ as the word $i\bar{s}$, where $i \in [1,z]$ and $\bar{s} \in [1,z]^k$,
	let $g$ be defined by $g(i_1\bar{s}_1, \dots, i_k\bar{s}_k) = i_0 i_1 \cdots i_k$, where $i_0$ is the smallest positive integer
	not in the set $S = \set{i_1, \dots , i_k} \cup \bar{s}_1 \cup \cdots \cup \bar{s}_k$. (For
	brevity, we are here identifying words over the integers with the sets of their members.) Note that $i_0 \in [1,z]$ since $|S| <z$.
	Now let some tuple $\bar{t} \in H^k$ be given,
	consisting of words $t_1, \dots, t_k$ in some order, and write $t_h = i_h \bar{s}_h$ (for all $1 \leq h \leq k$), where $i_h \in [1,z]$ and $\bar{s}_h \in [1,z]^k$.
	Thus, $t = g(\bar{t})$ is a word of the form $i_0\bar{s}$, where $\bar{s}$ consists of $i_1, \dots, i_{k}$ in some order,
	and $i_0$ does not occur anywhere in~$\bar{t}$.
	Condition (i) is then immediate because of the choice of $i_0$. For condition (ii), we observe that
	$t' = g(\bar{t}')$ is a word of the form $i' \bar{s}'$, where $\bar{s}'$ consists of $i_0, i_2, \dots, i_{k}$ in some order, and $i'$ does not occur
	in any of the words in $\bar{t}'$. By the choice of $i'$, it is immediate that $t' \not \in \set{t_2, \dots, t_k}$. But the value $i_1$ (the first letter of $t_1$)
	occurs in $g(\bar{t})$ (which belongs to the tuple $\bar{t}'$), whence $i' \neq i_1$. It follows that $t' \neq t_1$, whence~$t'$ is not in $\bar{t}$, as required.
\end{proof}

With the help of these three technical lemmas, we provide the advertised converse to Lemma~\ref{lma:reductionDirection1}.
\begin{lemma}
	Suppose $\fB \models \psi$. Then we can construct a model $\fC^+ \models \phi$ such that $|C^+|/|B| \leq |I| \cdot (\ell^2+\ell+1)^\ell$. 
	\label{lma:reductionDirection2}
\end{lemma}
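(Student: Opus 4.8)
The plan is to start from a model $\fB\models\psi$, which — since $\psi\in\AFv{\ell}$ — may by Lemma~\ref{lma:approx} be taken to be a layered structure of height $\ell$, blow up its domain to make room for witnesses, and then extend the result to a layered structure $\fC^+$ of height $\ell{+}1$ satisfying $\phi$ by fixing the adjacent types of its primitive $(\ell{+}1)$-tuples; since raising the height adds no elements, the size bound is that of the blow-up. Concretely, apply Lemma~\ref{lma:simpleComb} with parameter $\ell{-}1$ to obtain a set $H$ with $|H|=((\ell{-}1)^2{+}(\ell{-}1){+}1)^{\ell}\le(\ell^2{+}\ell{+}1)^{\ell}$ and a function $g$, set $\fC\coloneqq\fB\times(I\times H)$, so $|C|=|B|\cdot|I|\cdot|H|$, and view $\fC$ as a layered structure of height $\ell$. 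By Lemma~\ref{lma:cartesian}, $\fC$ inherits from $\fB$ the truth of every equality-free conjunct of $\psi$ — in particular of $\acl(\phi),\psi_1,\psi_2,\psi_3,\psi_4$ — and by Lemma~\ref{lma:defectInheritance} the defect set of any tuple of $\fC$ is contained in that of its $\fB$-projection. Recall that extending $\fC$ to height $\ell{+}1$ amounts to choosing, for each inverse pair $(\bar{b},\tilde{b})$ of primitive $(\ell{+}1)$-tuples of $C$, an adjacent $(\ell{+}1)$-type agreeing with the types $\fC$ already assigns to proper infixes of $\bar{b}$.

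The argument turns on three facts about primitive words: a primitive word has no two equal consecutive letters (else a shorter word generates it by pausing), no palindromic factor meeting its first or last letter (a ``folding'' argument yields a strictly shorter generator), and hence only odd-length, strictly interior palindromic factors (an even-length palindrome has equal letters at its centre). Two consequences matter. First, if $\bar{b}=b_0b_1\cdots b_\ell$ is a primitive $(\ell{+}1)$-tuple, then its entire defect set is the set of odd-length palindromic factors of its middle block $b_1\cdots b_{\ell-1}$, read with a shift of one — which is exactly the set $D^+$ appearing in $\psi_3$ and $\psi_4$. Second, if $\bar{c}=c_1\cdots c_\ell$ is a primitive $\ell$-tuple and $w$ is an element whose $H$-component differs from those of $c_2,\dots,c_\ell$, then (for $\ell\ge3$) $\bar{c}\,w$ is again primitive. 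With these in hand, for each $\ell$-tuple $\bar{c}=c_1\cdots c_\ell$ of $C$ and each $i\in I$ designate a witness $w(\bar{c},i)$: its $\fB$-component is the element supplied by $\psi_3$ applied to $c_2\cdots c_\ell$ with $\zeta=\atp^{\fC}[\bar{c}]$, $D$ the odd-defect set of $c_2\cdots c_\ell$, and the chosen $i$; its $(I{\times}H)$-component is $\langle i,\,g(h_2,\dots,h_\ell)\rangle$, where $h_j$ is the $H$-component of $c_j$. Properties (i)–(ii) of Lemma~\ref{lma:simpleComb} force the $H$-component of $w(\bar{c},i)$ to avoid those of $c_2,\dots,c_\ell$, so that $\bar{c}\,w(\bar{c},i)$ and its reversal are primitive whenever $\bar{c}$ is; property (ii) is what excludes a palindromic factor straddling consecutive witnessing windows.

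Now process the inverse pairs $(\bar{b},\tilde{b})$ of primitive $(\ell{+}1)$-tuples of $C$ one at a time, following the layer-building recipe of Section~\ref{sec:primGen}. Writing $\bar{b}=b_0\cdots b_\ell$, put $\zeta=\atp^{\fC}[b_0\cdots b_{\ell-1}]$ and $\eta=\atp^{\fC}[b_1\cdots b_\ell]$ (defined, since these infixes have primitive length $\le\ell$), and let $D^+$ be as above. If $\bar{b}=\bar{c}\,w(\bar{c},i)$ is a designated witness tuple, then $\psi_3$ — whose antecedents $\delta_D$ and $p_\zeta$ hold by $\psi_1$ and $\psi_2$ — tells us that $(\zeta\wedge\hat{\beta}\wedge\gamma_i\wedge\eta^+)$ is $D^+$-consistent, so there is an adjacent $(\ell{+}1)$-type $\xi$ satisfying this conjunction with $\partial\xi$ being $D^+$-compatible; otherwise $\psi_4$ likewise yields an $\xi\models\zeta\wedge\hat{\beta}\wedge\eta^+$ with $\partial\xi$ being $D^+$-compatible. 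Set $\atp^{\fC^+}[\bar{b}]\coloneqq\xi$ and the mirror type on $\tilde{b}$. Since $\xi\models\zeta$ and $\xi\models\eta^+$ this agrees with $\fC$; since primitive generators are unique up to reversal (Lemma~\ref{lemma:main}) distinct pairs do not clash; and since $\partial\xi$ is $D^+$-compatible where $D^+$ is, by the previous paragraph, the actual defect set of $\bar{b}$, the assignment is realisable by Lemma~\ref{lma:coherencyAndConsistency}. Let $\fC^+$ be the result.

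Finally one checks $\fC^+\models\phi$. For the conjunct $\forall\bx_{\ell+1}\,\beta$: every primitive $(\ell{+}1)$-tuple received a type entailing $\hat{\beta}$, hence $\beta$; a non-primitive $(\ell{+}1)$-tuple is $\bar{d}^{g}$ for its primitive generator $\bar{d}$ via some surjective $g\in\bA^{\ell+1}_{|\bar{d}|}$, and $\fC^+\models\beta[\bar{d}^{g}]\Leftrightarrow\fC^+\models\beta^{g}[\bar{d}]$ by Lemma~\ref{lma:essentiallyTrivial}, the latter holding since $\forall\bx_{|\bar{d}|}\,\beta^{g}$ is one of the universal conjuncts of $\acl(\phi)$, true in $\fC$. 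For each conjunct $\forall\bx_\ell\exists x_{\ell+1}\,\gamma_i$ and each $\ell$-tuple $\bar{c}$: if $\bar{c}\,w(\bar{c},i)$ is primitive it carries a type entailing $\gamma_i$ by construction, so $w(\bar{c},i)$ is a witness; and if $\bar{c}\,w(\bar{c},i)$ is not primitive — which, by the structural facts, forces $\bar{c}$ non-primitive — Lemma~\ref{lma:simplePrimitive} writes $\bar{c}=\bar{a}^f$ for a final $f\in\vec{\bA}^{\ell}_{k}$, and the conjunct $\forall\bx_k\exists x_{k+1}(\gamma_i)^{f^+}$ of $\acl(\phi)$ supplies a witness, again via Lemma~\ref{lma:essentiallyTrivial}. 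Hence $\fC^+\models\phi$ with $|C^+|=|C|=|B|\cdot|I|\cdot|H|\le|B|\cdot|I|\cdot(\ell^2{+}\ell{+}1)^{\ell}$, as required. I expect the main obstacle to be the combinatorial bookkeeping of defects — establishing rigorously that the circular-witnessing discipline of Lemma~\ref{lma:simpleComb} really makes every designated witness tuple primitive with defect set exactly the $D^+$ certified by $\psi_3$, and that no primitive $(\ell{+}1)$-tuple of $\fC$ carries a defect outside its middle block, so that the $D^+$-compatible increments handed to us by $\psi_3$ and $\psi_4$ are always legitimate in the sense of Lemma~\ref{lma:coherencyAndConsistency} — together with reconciling the degenerate non-primitive cases with the conjuncts of $\acl(\phi)$; the rest reduces to routine structural induction on top of the two-variable base case.
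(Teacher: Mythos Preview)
Your overall architecture is right and matches the paper's: inflate $\fB$ by $I\times H$, use $\acl(\phi)$ for the non-primitive tuples, and use $\psi_3,\psi_4$ together with Lemma~\ref{lma:coherencyAndConsistency} to assign incremental types to the primitive $(\ell{+}1)$-tuples. The gap is in your instantiation of Lemma~\ref{lma:simpleComb}. You feed $g$ only the $H$-components $h_2,\dots,h_\ell$ (parameter $\ell{-}1$), whereas the paper feeds it all of $h_1,\dots,h_\ell$ (parameter $\ell$). This single change breaks two of your steps.

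First, property~(i) of Lemma~\ref{lma:simpleComb} now only tells you that the $H$-component of $w(\bar{c},i)$ avoids $h_2,\dots,h_\ell$; it may well equal $h_1$, so $w(\bar{c},i)=c_1$ is not excluded. Consequently you cannot invoke Lemma~\ref{lma:simplePrimitive} (which needs the appended letter to be absent from $\bar{c}$) in your ``non-primitive forces $\bar c$ non-primitive'' step, and your separate claim that $\bar{c}$ primitive plus $w\ne c_2,\dots,c_\ell$ makes $\bar{c}w$ primitive is not established either. Second, and more seriously, your clash-avoidance sentence ``since primitive generators are unique up to reversal, distinct pairs do not clash'' overlooks the possibility that one designated witness tuple is the \emph{reversal} of another: $\bar{c}\,w(\bar{c},i)$ and $\bar{c}'\,w(\bar{c}',i')$ with $\bar{c}'=w(\bar{c},i)\,c_\ell\cdots c_2$ and $w(\bar{c}',i')=c_1$. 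Here the $H$-input to the second call of $g$ is $(h_\ell,\dots,h_2)$, a permutation of your original input; property~(ii) does not apply to this, and nothing in your setup prevents $g(h_\ell,\dots,h_2)=h_1$. In that situation the pair $(\bar b,\tilde b)$ is designated twice, and whichever $\xi$ you pick may fail $\gamma_{i'}$ on the other side. The paper handles exactly this via its condition~(w3): with $g$ taking all $\ell$ components, the second call receives the multiset $\{h_2,\dots,h_\ell,g(h_1,\dots,h_\ell)\}$, to which property~(ii) \emph{does} apply, giving $g(\cdot)\ne h_1$ and hence $w_{i'}(\bar{c}')\ne c_1$. So the fix is simply to use parameter $\ell$ in Lemma~\ref{lma:simpleComb} and define the $H$-component of $w(\bar{c},i)$ as $g(h_1,\dots,h_\ell)$; the size bound still comes out within the stated $|I|\cdot(\ell^2{+}\ell{+}1)^{\ell}$.
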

\begin{proof}
	We repeat here for convenience the formula $\phi$ as defined in~\eqref{eq:anf}:
	\begin{equation*}
		\bigwedge_{i \in I} \forall \bx_{\ell} \exists x_{\ell+1}\, \gamma_i \wedge 
		\forall \bx_{\ell+1}\, \beta.
	\end{equation*} 
In the sequel, we refer to the conjunct $\forall \bx_{\ell} \exists x_{\ell+1}\, \gamma_i$ as the $i$th \textit{witness requirement of} $\phi$.
Our first step is to define a collection of `pseudo-witnesses' in the structure $\fB$ for the various witness requirements of $\phi$. These will be used later
to select actual witnesses in the model $\fC^+$ of $\phi$ that we eventually construct.
Let $\fB^-$ be the reduct of $\fB$ to the signature $\tau$ (i.e.~we forget the predicates $p_\zeta$ and $d_{2s+1}$).
Consider any $\ell$-tuple $\bar{b} = b_1 \cdots b_\ell$: let $\zeta = \atp^{\fB^-}[\bar{b}]$ and let $D$ be the set of defects of $b_2 \cdots b_\ell$ corresponding to odd-length palindromes, i.e.,~those defects $\langle i, j \rangle$ for which $j{-}i{+}1$ is odd.
Since $\fB \models \psi_1 \wedge \psi_2$, we have $\fB \models \delta_{D}[b_2 \cdots b_\ell]$ and
$\fB \models p_\zeta[b_2 \cdots b_\ell]$. And since $\fB \models \psi_3$, given any $i \in I$, we may select $b \in B$ 
such that $\eta= \atp^\fB[b_2 \cdots b_{\ell} b]$ is one of the disjuncts in the consequent
\begin{equation*}
\bigvee \set{\eta \in \Atp^\tau_\ell \mid 
	\text{$(\zeta \wedge \hat{\beta} \wedge \gamma_i\wedge \eta^+)$ is $D^+$-consistent}}
\end{equation*}
of the implication occurring in $\psi_3$. It follows that
there exists some $\xi \in \Atp^\tau_{\ell+1}$ such that $\xi \models (\zeta \wedge \hat{\beta} \wedge \gamma_i\wedge \eta^+)$
and $\partial \xi$ is $D^+$-compatible. 
We identify $b$ as the `pseudo-witness' for the $i$th witness requirement of $\phi$, and write $b_{\bar{b},i} \coloneqq b$ and $\xi_{\bar{b},i} \coloneqq \xi$ to make the dependencies on $i$ and $\bar{b}$ explicit. 
	
	Next we inflate the model $\fB$ using the construction of Lemma~\ref{lma:cartesian}. Let $H$ be a set of cardinality $\ell^2 + \ell + 1$ and let $g \colon H^\ell \rightarrow H$ be a function satisfying conditions 
	(i) and (ii) guaranteed by
	Lemma~\ref{lma:simpleComb}. 
	Define  $\fC=\fB^- \times (I \times H)$, and define the layered structure $\fC'$ (with the same domain, $C$) to be the result of disregarding any tuples in $\fC$ of primitive length greater than $\ell$. Thus, we may write elements of~$C$ as triples $(b,i,j)$, where $b \in B$, $i \in I$ and
	$j \in H$; and of course, $\fC'$ has height~$\ell$. By Lemma~\ref{lma:cartesian}, $\fC \models \psi$, and bearing in mind that $\psi \in \AFv{\ell}$, by Lemma~\ref{lma:approx}, $\fC' \models \psi$.
	We may now define a collection of `witness
	functions' $w_{i} \colon C^{\ell}\rightarrow C$, where $i$ ranges over $I$. 
	Consider any $\ell$-tuple  $\bar{c} \coloneqq c_1 \cdots c_\ell$ of elements in $C$, with $c_h= (b_h,i_h,j_h)$ for each $h$ ($1 \leq h \leq \ell$). 
	Writing $\bar{b} = b_1 \cdots b_\ell$, and recalling the pseudo-witness $b_{\bar{b},i}$ identified in the previous paragraph,
	we define $w_{i}(\bar{c})$ to be the element $(b_{\bar{b},i}, i, g(j_1 \cdots j_\ell))$.
	Thus, $w_i(\bar{c})$ is a judiciously chosen copy of
	$b_{\bar{b},i}$, able, potentially,  to serve as an $i$th witness for $\phi$. Most of the work will be done by its first component, $b_{\bar{b},i}$; the remaining components, 
	$i$ and $g(j_1 \cdots j_\ell)$, simply ensure that the eventually selected witnesses do not, as it were, tread on each others' toes. 
	Note that the functions $w_i$ satisfy the following properties:
	\begin{description}
	\item[(w1)] for fixed $\bar{c}$, the $w_{i}(\bar{c})$
	are distinct as $i$ varies over $I$;
	\item[(w2)] $w_{i}(\bar{c})$ does not occur in $\bar{c}$; 
	\item[(w3)] 
	if $\bar{c} = c_1 \cdots c_\ell$ and $\bar{c}'$
	is an $\ell$-tuple consisting of the elements
	$c_2, \dots, c_\ell, w_{i}(\bar{c})$ in some order, then 
	$w_{i'}(\bar{c}')$ does not occur in $\bar{c}$ for any $i' \in I$.
	\end{description}
	Indeed, {(w1)} is immediate from the fact  $w_i(\bar{c})$ contains $i$ as its second element; {(w2)} and {(w3)}
	follow, respectively, from conditions (i) and 
	(ii) on $g$ guaranteed in
	Lemma~\ref{lma:simpleComb}. 

	We are now ready to elevate the layered structure $\fC'$ to a layered structure $\fC^+$ of height $\ell{+}1$ such that $\fC^+ \models \phi$.  
	We first ensure that $\fC^+$ provides witnesses for the 
	various witness requirements of $\phi$.
	Fix any $\ell$-tuple $\bar{c} = c_1 \cdots c_\ell$ and any $i \in I$. 
	We have two cases, depending on whether the $(\ell{+}1)$-tuple
	$\bar{c}\, w_{i}(\bar{c})$ is primitive. Suppose first that it is not. By {(w2)}, $w_{i}(\bar{c})$ is not an element of $\bar{c}$, whence
	by Lemma~\ref{lma:simplePrimitive}
	there is some $k$-tuple $\bar{d}$ ($k < \ell$) and $f \in \vec{\bA}^{\ell}_{k}$ such that $\bar{d} = \bar{c}^f$. 
	As before, define $f^+ \in \vec{\bA}^{\ell+1}_{k+1}$ extending $f$ by setting $f(k+1) = \ell+1$.
	Since $k < \ell$, and $\fC' \models \acl(\phi)$, there exists $c' \in C$ such that $\fC' \models \gamma^{f^+}_i[\bar{d}c']$.
	By Lemma~\ref{lma:essentiallyTrivial}, $\fC' \models \gamma_i[(\bar{d}c')^{f^+}]$, or in other words, $\fC' \models \gamma_i[\bar{c}c']$,
	so that a witness $c'$ is already present in respect of the tuple $\bar{c}$ and
	the index $i$. (Notice that we are throwing the element $w_{i}(\bar{c})$ away, as an $i$th witness is already present for the tuple $\bar{c}$.)
	Suppose on the other hand that $\bar{c}\, w_{i}(\bar{c})$ is primitive. 
	Recall that, for all $h$ ($1 \leq h \leq \ell$), the element $c_h$ has the form $(b_h,i_h,j_h)$, and that 	
	$w_{i}(\bar{c})= (b_{\bar{b},i},i,g(j_1 \cdots j_\ell))$.
	Recall in addition 
	the adjacent $(\ell{+}1)$-type $\xi_{\bar{b},i}$
	identified earlier.	
	To reduce notational clutter, we 
	write $c$ for $w_{i}(\bar{c})$ and 
	$\xi$ for $\xi_{\bar{b},i}$.
	Setting $\zeta = \atp^{\fB^-}[\bar{b}]$ and $\eta = \atp^{\fB^-}[b_2 \cdots b_\ell b_{\bar{b},i}]$, Lemma~\ref{lma:cartesian} guarantees that
	$\zeta = \atp^{\fC'}[\bar{c}]$ and $\eta = \atp^{\fC'}[c_2 \cdots c_\ell\, c]$. 
	Moreover, by Lemma~\ref{lma:defectInheritance}, any defect of $c_2 \cdots c_\ell$ is evidently
	a defect of $b_2 \cdots b_\ell$. In fact, since $\bar{c}$ is primitive and
	$c = w_i(\bar{c})$ does not occur in $\bar{c}$, the tuple $\bar{c}c$ 
	cannot have any defects corresponding to even-length palindromes, and cannot have any defects of the forms $\langle 1, j\rangle$ or $\langle j, \ell+1\rangle$. Thus, letting
	$D$ denote the set of defects of $b_2 \cdots b_\ell$ corresponding to odd-length palindromes, we see that the defect set of $\bar{c}c$ is $D^+$. 
	But we have already argued that $\xi \models (\zeta \wedge \hat{\beta} \wedge \gamma_i\wedge \eta^+)$, and moreover that
    $\partial \xi$ is $D^+$-compatible.
	By Lemma~\ref{lma:coherencyAndConsistency},  
	there is a structure over the elements of $\bar{c}c$ in which the incremental type of $\bar{c}c$ is $\partial \xi$. Thus we may consistently 
	fix $\itp^{\fC^+}[\bar{c}c]$ to be $\partial \xi$. (Observe that this also fixes the incremental type of the reversed tuple
	$c\tilde{c}$.)
	Since $\zeta = \atp^{\fC'}[\bar{c}]$, $\eta = \atp^{\fC'}[c_2 \cdots c_\ell\, c]$, and $\xi = \zeta \cup \eta^+ \cup \partial \xi$, we have
	$\atp^{\fC^+}[\bar{c}c]= \xi$. Moreover, since $\xi \models \gamma_i$, $c$
	is an $i$th witness for $\phi$ with respect to the tuple $\bar{c}$. Finally, since $\xi \models \hat{\beta}$ the adjacent $(\ell{+}1)$-types
	assigned to the tuples $\bar{c}c$ and $c\tilde{c}$ satisfy $\beta$.
%
	Still keeping
	$\bar{c}$ fixed for the moment, we may carry out the above procedure for all $i \in I$. 
	To see that these assignments do not interfere with each other, we invoke property {(w1)} of the functions~$w_i$.

	Now make these assignments as just described for \textit{each} $\ell$-tuple $\bar{c}$ over $C$. To ensure that these assignments do not interfere with each other,
	we make use of properties {(w1)} and {(w3)} of the functions $w_i$. If $\bar{d}$ is an $m$-tuple
	that has been assigned (or not) to the extensions of various predicates by the process described above,
	then the two primitive generators of $\bar{d}$ must be of the form $\bar{c}c$ and $c\tilde{c}$, where $c = w_{i}(\bar{c})$ for some $i \in I$.
	Since primitive generators are unique up to reversal by Theorem~\ref{lemma:main}, it suffices to show that, for distinct pairs $(\bar{c},i)$ and
	$(\bar{c}',i')$, the corresponding $(\ell{+}1)$-tuples $\bar{c}\, w_i(\bar{c})$ and $\bar{c}'\, w_{i'}(\bar{c}')$ are not
	the same up to reversal. Now $\bar{c}\, w_i(\bar{c}) = \bar{c}'\, w_{i'}(\bar{c}')$ implies $\bar{c} = \bar{c}'$,  whence $i$ and $i'$ are distinct,
	whence $w_i(\bar{c}) \neq w_{i'}(\bar{c}')$ by ({w1}), a contradiction.
	On the other hand $\bar{c}'w_{i'}(\bar{c}') = w_i(\bar{c})\tilde{c}$ implies $\bar{c}'=w_{i}(\bar{c}), c_\ell \cdots c_2$,
	whence $w_{i'}(\bar{c}')$ does not occur in $\bar{c}$ (or therefore in $\tilde{c}$) by {(w3)}, again a contradiction.

	At this point, we have assigned a collection of tuples with primitive length $\ell{+}1$ to the extensions of predicates in $\tau$ so as to guarantee that  
	$\fC^+ \models \forall \bx_{\ell} \exists x_{\ell+1}\, \gamma_i$
	for all $i \in I$. In addition, no adjacent $(\ell{+}1)$-types thus defined violate $\forall \bx_{\ell+1}\, \beta$.
	It remains to complete the specification of $\fC^+$ by defining the adjacent types of all remaining primitive $\ell{+}1$-tuples, 
	and showing that, in the resulting structure, every $(\ell{+}1)$-tuple (primitive or not) satisfies $\beta$.
	Let $\bar{c}= c_1 \cdots c_{\ell+1}$ be a primitive $(\ell{+}1)$-tuple whose adjacent type has not yet been defined. Let
	$\zeta = \atp^\fC[c_1 \cdots c_{\ell}]$ 
	and $\eta= \atp^\fC[c_2 \cdots c_{\ell+1}]$. 
	Since $\bar{c}$ is primitive, it has no defects corresponding to even-length palindromes, and indeed
	no
	defects of the forms $\langle 1, j\rangle$ or $\langle j, \ell+1\rangle$. Letting $D$ be
	the defect set of $c_2 \cdots c_\ell$, therefore, we have $D \subseteq \bD^{\mathrm{o}}_{\ell{-}1}$; moreover,
	the defect set of $\bar{c}$ is $D^+$. 
	Further, since $\fB \models \psi_1 \wedge \psi_2$, we have $\fB \models \delta_{D}[c_2 \cdots c_\ell]$ and 
	$\fB \models p_\zeta[c_2 \cdots c_{\ell}]$. And since $\fB \models \psi_4$, we see that $\eta = \atp^{\fC'}[c_2 \cdots c_{\ell{+}1}]$ 
	is one of the disjuncts in the consequent of $\psi_4$. Thus, there exists an adjacent $(\ell{+}1)$-type $\xi$ such that 
	$\xi \models (\zeta \wedge \hat{\beta} \wedge \eta^+)$
	and $\partial \xi$ is $D^+$-compatible.
	By Lemma~\ref{lma:coherencyAndConsistency}, therefore,  
	there is a structure over the elements of $\bar{c}$ in which $\bar{c}$ has the incremental type $\partial \xi$. Thus we may consistently 
	fix $\itp^{\fC^+}[\bar{c}]$ to be $\partial \xi$. 
	Since $\zeta = \atp^{\fC'}[c_1 \cdots c_\ell]$, $\eta = \atp^{\fC'}[c_2 \cdots c_{\ell{+}1}]$, and $\xi = \zeta \cup \eta^+ \cup \partial \xi$, we have
	$\atp^{\fC^+}[\bar{c}]= \xi$. And since $\xi \models \hat{\beta}$, the adjacent $(\ell{+}1)$-types assigned to the tuples $\bar{c}$ and $\tilde{c}$ satisfy $\beta$. 
	Carrying this procedure out for all remaining primitive $(\ell{+}1)$-tuples,
	we obtain a layered structure $\fC^+$ of depth $\ell{+}1$. Let $\bar{d}$ be any $(\ell{+}1)$-tuple
	of elements from $C$.
	If $\bar{d}$ is primitive, then we have just ensured that $\fC^+ \models \beta[\bar{d}]$. If, on the other hand, 
	$\bar{d} = \bar{e}^{f}$ for some $k$-tuple $\bar{e}$ and some $f \in \bA^{\ell+1}_{k}$, where $k \leq \ell$, then, since
	$\fC \models \acl(\phi)$, we have $\fC \models \beta^f[\bar{e}]$ and hence, by Lemma~\ref{lma:essentiallyTrivial}, 
	$\fC \models \beta[\bar{d}]$. 
	This completes the construction of $\fC^+$. We have shown that $\fC^+ \models \phi$.
\end{proof}
Let us take stock. 
Since $\AFv{2}$ is included in $\FOt$, we know that 
each satisfiable $\AFv{2}$-sentence $\phi$ has a finite model of size $2^{O(\sizeof{\phi})}$.
Further, given any equality-free normal-form $\AFv{\ell{+}1}$-formula $\phi$ ($\ell \geq 2$),
we can construct an equisatisfiable equality-free $\AFv{\ell}$-formula $\psi$, also---modulo trivial logical re-arrangement---in normal form. We showed in Lemma~\ref{lma:reductionDirection1} that,
if $\phi$ is satisfiable over some domain, then $\psi$ is satisfiable over the same domain. 
And we showed in Lemma~\ref{lma:reductionDirection2} that,
if $\psi$ is satisfiable over some finite domain, then $\phi$ is satisfiable over a domain larger by a factor of at most $|I| \cdot (\ell^2+\ell+1)^\ell$.
Moreover, a simple check shows that $\sizeof{\psi}$ is $2^{O(\sizeof{\phi})}$. 
It follows by a routine (if slightly fiddly)  induction that any equality-free, satisfiable normal-form formula of $\AFv{\ell}$ ($\ell \geq 2$)
has a finite model of size $\ft(\ell{-}1,O(\sizeof{\varphi}))$. We omit a detailed proof, because, 
in the next section, we strengthen the above results in two ways.
First, we consider normal-form $\AFv{\ell{+}1}$-formulas $\phi$ involving equality. Using a more elaborate definition of $\psi \in \AFv{\ell}$, we prove
analogues of Lemmas~\ref{lma:reductionDirection1} and ~\ref{lma:reductionDirection2}. Secondly, as a by-product, we obtain tighter size bounds for the latter, showing that,
if $\psi$ is satisfiable over some domain, then $\phi$ is satisfiable over the same domain. This simplifies the proof of the eventual bound on model-sizes,
which we give in full in Theorem~\ref{theo:upperBoundEquality}.


\section{Adding Equality}
\label{section:AF-upper-boundsEq}

\begin{table}[b]
	\begin{tabular}{c | l}
		\multicolumn{2}{l}{Construction of $\psi$ from $\phi$ in the presence of $=$}\\
		\hline
		$G_{\bar{a}}$ & directed graph $(A, D_{\bar{a}} {\cup} E_{\bar{a}})$ of witnesses in $\fA$ around $\bar{a}$ \\
		$\sigma$ & a star; a function mapping witness $(\ell+1)$-types to colours $C$ \\
		$\atp(\sigma)$ & the underlying $\ell$-type of $\sigma$ \\
		$\St$ & the set of all stars $\sigma: \Atp_{\ell+1}^\tau \hookrightarrow C$ \\
		$\tl(\xi)$ & the $\ell$-type $\eta$ s.t. $\xi \models \eta^+$ \\
		$s_\sigma(\bx_\ell)$ & atom implying $\bx_\ell$ realises $\sigma$ \\
		$q_{\sigma, c}(\bx_{\ell-1})$ & atom implying there is some $x$ s.t. $x\bx_{\ell-1}$ realise $\sigma$ and colour $c$ \\
		$r_{\sigma, c, \xi}(\bx_\ell)$ & atom implying $x_\ell$ is the $\xi$-witness for $x\bx_{\ell-1}$ realising $\sigma$ and $c$
	\end{tabular}
	\caption{Quick reference guide for Sec~\ref{section:AF-upper-boundsEq}.}
	\label{table:equality}
	\end{table}

In this section, we complete the task of establishing a small model property for each of the fragments $\AFv{\ell}$, for $\ell \geq 2$, using the same strategy as employed in Sec.~\ref{section:AF-upper-bounds} for the equality-free sub-fragments.
The difference is that, when equality is present,
Lemma~\ref{lma:cartesian} becomes invalid. This lemma was used in the proof of Lemma~\ref{lma:reductionDirection2}, which constructed a 
model of the equality-free normal-form $\AFv{\ell+1}$-formula $\phi$ from a model of the equality-free $\AFv{\ell}$-formula $\psi$: by duplicating elements in the model of $\psi$, we could
easily avoid clashes when selecting elements to serve as witnesses for the existential requirements of $\phi$. When equality is present,
such duplication is no longer available, thus necessitating some additional combinatorial man\oe{}uvres.
Table~\ref{table:equality} provides a guide to important notions in this section. (Do note that they are, as of yet, undefined.)

Suppose $\fA \models \varphi$, where $\varphi$ is an normal form $\AFv{\ell+1}$-formula ($\ell \geq 2$) over some signature $\tau$. 
We fix $\phi$ and $\tau$ for the remainder of this section, 
writing the former as in~\eqref{eq:anf}, again repeated here for convenience:
\begin{equation*}
	\bigwedge_{i \in I} \forall \bx_{\ell} \exists x_{\ell+1}\, \gamma_i \wedge 
	\forall \bx_{\ell+1}\,
	\beta.
\end{equation*} 
We employ the letters $\ell$, $I$, $\beta$ and $\gamma_i$ with these denotations, and we assume without loss of generality that $I$
is non-empty. We proceed to define an 
expansion $\fA^+$ of $\fA$, and simultaneously, a normal-form $\AFv{\ell}$-formula $\psi$ over the expanded signature, such that $\fA^+ \models \psi$;
we later show that any layered model of $\psi$ (having height $\ell$) has a $\tau$-reduct that can be elevated to a model of $\phi$. 

We take $\psi$ to have the form
\begin{equation*}
	\psi\coloneqq \acl(\phi) \wedge \psi_0 \wedge \cdots \wedge \psi_5,
\end{equation*}
where $\acl(\phi)$ is the adjacent closure of $\phi$ (featured in Lemma~\ref{lma:adjacentClosure}), and $\psi_0, \dots, \psi_5$ are $\AFv{\ell}$-formulas over an expanded signature, defined below. We proceed to consider the conjuncts $\psi_0, \dots, \psi_5$ in turn. 

\subsubsection*{The conjunct $\psi_0$}
The initial step in this process is rather elaborate, and has no analogue in Sec.~\ref{section:AF-upper-bounds}.
For any element $a \in A$ and any tuple $\bar{a} \in A^{\ell{-}1}$, let $B_{a\bar{a}}$ be a minimal set such that, for each $i \in I$, there is some
$b \in B_{a\bar{a}}$ with $\fA \models \gamma_i[a\bar{a}b]$. Since $\fA \models \phi$, such a set exists, and moreover $|B_{a\bar{a}}| \leq |I|$.
We call the elements of $B_{a\bar{a}}$ the \textit{witnesses with respect to} $a\bar{a}$.
By assumption of minimality of $B_{a\bar{a}}$, there are no two distinct elements $b, b' \in B_{a\bar{a}}$
such that $\atp^\fA_{\ell{+}1}(a\bar{a}b) = \atp^\fA_{\ell{+}1}(a\bar{a}b')$.
Thus, given an adjacent $(\ell{+}1)$-type $\xi$ such that $\atp^\fA_{\ell{+}1}(a\bar{a}b) = \xi$ for some $b \in B_{a\bar{a}}$,
we will call this $b$ \textit{the $\xi$-witness with respect to $a\bar{a}$}. (Of course, this notion depends on our particular choice of the
set  $B_{a\bar{a}}$.)

Having chosen the witnesses with respect to the various $\ell$-tuples over $A$, consider now any $(\ell{-}1)$-tuple $\bar{a}$ over $A$.  
Defining the sets of ordered pairs over $A$
\begin{align*}
	& D_{\bar{a}} \coloneqq \set{ (a, b) \mid a \neq b \text{ and } b  \in B_{a\bar{a}} }\\
	& E_{\bar{a}} \coloneqq \set{ (a, a') \mid a \neq a' \text{ and there is some } b \in B_{a\bar{a}} \text{ s.t. } a' \in B_{b\tilde{a}}}
\end{align*}
we let $G_{\bar{a}}$ be the directed graph with vertices $A$ and edges $D_{\bar{a}} \cup E_{\bar{a}}$.
Thus, in this directed graph, there is an edge from $a$ to every witness $b$ with respect to $a\bar{a}$ (except $a$ itself), and
an edge from any element of $a$ to any other element $a'$ if there exists a witness $b$ with respect to $a\bar{a}$ such that 
$a'$ is a witness with respect to $b\tilde{a}$.
Since no $\ell$-tuple has more than $I$ witnesses,  
the out-degree of any vertex in $G_{\bar{a}}$ is at most $|I|^2 + |I|$. 
Recall that a $k$-\textit{colouring} of a directed graph $G_{\bar{a}} = (V, E)$ is a function $f\colon V \rightarrow [0,k{-}1]$ satisfying $f(u) \neq f(v)$ for all $(u,v) \in E$. It is well-known that any directed graph with maximum out-degree $d$ has a $(2d+1)$-colouring 
(see e.g. \cite[p.~612]{PrattHartmann23}). Thus,
we may colour the directed graph $G_{\bar{a}}$ with colours from a set $C$ of cardinality $2(|I|^2 + |I|) + 1$. For
every $(\ell{-}1)$-tuple $\bar{a}$, then, let some such colouring 
of $G_{\bar{a}}$ be fixed.

The importance of this colouring will become clearer as the proof unfolds. For the present, however, we note the following:
(i) if $b$ is one of the witnesses with respect to $a\bar{a}$, and is distinct from $a$, then $a$ and $b$ are differently coloured in $G_{\bar{a}}$;
(ii) if, in addition, $a'$ is one of the witnesses with respect to $b\tilde{a}$, and is distinct from $a$, then $a$ and $a'$ are differently coloured in $G_{\bar{a}}$.
It does not follow that the various witnesses with respect to $a\bar{a}$ will be differently coloured from each other in $G_{\bar{a}}$.
Notice also that, for distinct $(\ell{-}1)$-tuples $\bar{a}$ and $\bar{a}'$, an element $a$ might be coloured differently in the graphs $G_{\bar{a}}$ and $G_{\bar{a}'}$; this is true even if $\bar{a}' = \tilde{a}$.
%
%

Now let us treat the elements of $C$ as $\ell$-ary predicates, which we interpret in our expansion $\fA^+$ of $\fA$. Specifically, for all 
$a \in A$, $\bar{a} \in A^{\ell{-}1}$ and $c \in C$, we declare that
$\fA^+ \models c[a\bar{a}]$ just in case $a$ is assigned the colour $c$ in 
the colouring of $G_{\bar{a}}$, and we  write $\smash{\col^{\fA^+}[a\bar{a}]}$
to denote $c$. 
Note that, defining
\begin{equation*}
	\psi_0:= \forall \bx_\ell \left( \bigvee_{c \in C} c(\bx_\ell) \wedge \bigwedge_{c, c' \in C}^{c \neq c'} \Big( \neg c(\bx_\ell) \vee \neg {c'}(\bx_\ell) \Big)\right),
\end{equation*}
we have $\fA^+ \models \psi_0$. Conversely, if $\fB$ is any model of $\psi_0$, every $\ell$-tuple over $B$ is assigned a unique colour from $C$ in the obvious way. 

\subsubsection*{The conjunct $\psi_1$}
Here we can simply repeat material from Sec.~\ref{section:AF-upper-bounds}.
For each $s$ ($2 < 2s + 1 \leq \ell$) we introduce a fresh $(2s{+}1)$-ary predicate $d_{2s{+}1}$, and declare that a $(2s{+}1)$-tuple over $A$ 
satisfies $d_{2s+1}$ in the expansion $\fA^+$ just in case it is a palindrome. In addition, we define the $\AFv{\ell}$-formula
\begin{equation*}
	\psi_1 \coloneqq \bigwedge_{2 < 2s + 1 \leq \ell} \forall \bx_{s+1}\ d_{2s{+}1}(x_1 \cdots x_s x_{s + 1} x_s \cdots x_1).
\end{equation*}
Thus, $\fA^+ \models \psi_1$, and, conversely, if $\fB$ is any structure such that $\fB \models \psi_1$, and $\bar{c} \in B^{2s{+}1}$ is a palindrome
($2 < 2s + 1 \leq \ell$), then $\fB \models d_{2s+1}[\bar{c}]$. As before, we write 
$\delta_D:= \bigwedge \set{d_{j{-}i{+}1}(x_i \cdots x_j) \mid \langle i,j \rangle \in D}$ 
where $D$ is any set of pairs of integers $\langle i, j\rangle$
for $1 \leq i < j \leq \ell$ with $j{-}i{+}1$ odd and greater than 2.

\subsubsection*{The conjunct $\psi_2$}
The treatment of this conjunct is more elaborate than in Sec.~\ref{section:AF-upper-bounds}, and involves the colouring of $\ell$-tuples encountered above. 
However, the essential function of securing witnesses and imposing universal constraints is the same.
We remind the reader that, if $\bar{a}$ is a tuple, then $\tilde{a}$ denotes its reversal.
Observe 
that each $\ell$-tuple $a\bar{a}$ over $A$ gives rise to a partial function $\sigma \colon \Atp^{\tau}_{\ell{+}1}
\hookrightarrow C$ mapping adjacent $(\ell{+}1)$-types over $\tau$ to colours, namely,
\begin{equation*}
\sigma(\xi) = 
\begin{cases}
\col^{\fA^+}[b\tilde{a}] & \text{if $b$ is the (unique) $\xi$-witness in $\fA$ with respect to $a\bar{a}$,}\\
\text{undefined} & 	\text{if $a\bar{a}$ does not have a $\xi$-witness in $\fA$}.
\end{cases}
\end{equation*}
Thus, for every adjacent $(\ell{+}1)$-type $\xi$ for which
$a\bar{a}$ has a $\xi$-witness $b$ in $\fA$, the value $\sigma(\xi)$ tells us the colour of the `reversed' $\ell$-tuple $b\tilde{a}$ in $\fA^+$.
We denote the domain of $\sigma$ (i.e. the set of $\xi \in \Atp^\tau_{\ell{+}1}$ for which $\sigma(\xi)$ is defined) by $\dom(\sigma)$. Since
$I$ is non-empty, the $\ell$-tuple $a\bar{a}$ has at least one witness, whence $\dom(\sigma)$ is also non-empty.
We call $\sigma$ the \textit{star of} $a\bar{a}$ in $\fA^+$, and denote it $\st^{\fA^+}[a\bar{a}]$. Always remember that the elements
of $\Atp^{\tau}_{\ell{+}1}$ are adjacent types over the {\em original} signature $\tau$ of $\fA$, not over any expanded signature.

A simple check with reference to the formula $\phi$ given above verifies that $\sigma$ satisfies the following conditions:
\begin{enumerate}
	\item there exists a (unique) adjacent $\ell$-type $\zeta$ over $\tau$ such that, for all $\xi \in \dom(\sigma)$, we have $\xi \models \zeta$.
	\item for every $\xi \in \dom(\sigma)$ there exists $i \in I$ such that $\xi \models \gamma_i$;
	\item for every $i \in I$ there is exactly one $\xi \in \dom(\sigma)$ such that $\xi \models \gamma_i$; and
	\item for every $\xi \in \dom(\sigma)$, $\xi \models \hat{\beta}$ (remember that $\hat{\beta}:= \beta \wedge \beta^{-1}$).
\end{enumerate}
Condition 1 is verified by setting $\zeta = \atp^\fA[a\bar{a}]$; it is obvious that this is the unique adjacent $\ell$-type with the required properties;
we call $\zeta$ the \textit{underlying adjacent type} of $\sigma$ and denote it $\atp(\sigma)$.
The remaining conditions are immediate from the properties of witnesses.
Accordingly, we call any partial function $\sigma \colon \Atp^{\tau}_{\ell{+}1}
\hookrightarrow C$ satisfying conditions 1--4 above a {\em star}, and we denote the set of all such stars as $\St$. Thus, for any $(\ell{+}1)$-tuple over 
$A$, we have  $\st^{\fA^+}[a\bar{a}] \in \St$. We remark that the notion of a {star} depends on the formula $\phi$ (and on the parameters $\ell$, $C$,  $\beta$, $I$, $\gamma_i$ and $\tau$ associated with $\phi$). Since, however, $\phi$ may be considered fixed for the present, we suppress these parameters to avoid notational clutter.

Now we are ready to fix some additional predicates in the expansion $\fA^+$. 
Since we have settled the interpretations of the predicates $C$ in $\fA^+$, the adjacent star-type
$\st^{\fA^+}[a\bar{a}]= \sigma$ of any $\ell$-tuple $a\bar{a}$
is unaffected by these additional
predicates: in particular, the domain of $\sigma$
consists of adjacent $(\ell{+}1)$-types $\xi$ over the \textit{original signature} $\tau$; and the values  
$\sigma(\xi)$ are determined by the interpretations of 
the predicates $C$ in $\fA^+$.
With this in mind, for every $\zeta \in \Atp^\tau_{\ell}$, we introduce the $(\ell{-}1)$-ary predicate $p_\zeta$
familiar from Sec.\ref{section:AF-upper-bounds}, declaring $\fA^+ \models p_\zeta[\bar{a}]$ just in case, for some $a \in A$,
$\atp^\fA[a\bar{a}] = \zeta$. Thus,
$\fA^+ \models \psi_{2,0}$, where
\begin{align*}
	\psi_{2,0} &:= \bigwedge_{\zeta \in \Atp^\tau_{\ell}} \forall \bx_\ell \left( \zeta \rightarrow p_\zeta(x_2 \cdots x_\ell) \right).
\end{align*}
In addition, for every $\sigma \in \St$, we
introduce a new $\ell$-ary predicate $s_\sigma$,
and declare $\fA^+ \models s_\sigma[a\bar{a}]$ if and only if $\st^{\fA^+}[a\bar{a}] = \sigma$,
for any $\ell$-tuple $a\bar{a}$ over $A$. 
It is then easy to verify that 
$\fA^+ \models \psi_{2,1} \wedge \psi_{2,2}$, where
\begin{align*}
\psi_{2,1} &:= \forall \bx_\ell \bigvee_{\sigma \in \St} s_\sigma(\bx_\ell)  \wedge \forall \bx_\ell
 	\bigwedge_{\sigma, \sigma' \in \St}^{\sigma \neq \sigma'} \left( \neg s_\sigma(\bx_\ell) \vee \neg s_{\sigma'}(\bx_\ell) \right)\\
\psi_{2,2} & := \forall \bx_\ell \bigvee_{\sigma \in \St} \left(s_\sigma(\bx_\ell) \rightarrow \atp(\sigma)\right).
\end{align*}
Indeed, the first of these formulas states that any $\ell$-tuple satisfies exactly one of the predicates $s_\sigma$, and the second, that its adjacent type in $\fA$ is given by $\atp(\sigma)$.
(Recall in this connection that $\atp(\sigma)$ is an $\ell$-type, and thus a formula with variables $\bx_\ell$.)
Further, for every $\sigma \in \St$ and every $c \in C$, we introduce
a new $(\ell{-}1)$-ary predicate $q_{\sigma,c}$, and declare $\fA^+ \models q_{\sigma,c}[\bar{a}]$ just in case there is some $a \in A$ such that  $\st^{\fA^+}[a\bar{a}] = \sigma$ and $\col^{\fA^+}[a\bar{a}] = c$.
Thus, $q_{\sigma,c}$ identifies tails of $\ell$-tuples whose star in $\fA^+$ is $\sigma$ and whose colour in $\fA^+$ is $c$. It is thus immediate that $\fA^+ \models \psi_{2,3}$, where
\begin{equation*}
	\psi_{2,3}:= \forall \bx_\ell \bigwedge_{\sigma \in \St}
	\bigwedge_{c \in C}
	\Big( \big( s_\sigma(\bx_\ell) \wedge c(\bx_\ell) \big) \rightarrow q_{\sigma, c}(x_2 \cdots x_\ell) \Big).
\end{equation*}

Still
proceeding with the construction of $\psi_2$, for every $\sigma \in \St$, every $c \in C$, and every $\xi \in \dom(\sigma)$, we introduce
a new $\ell$-ary predicate $r_{\sigma,c,\xi}$, and fix its interpretation in $\fA^+$ as follows. Take any $(\ell{-}1)$-tuple $\bar{a}$ over $A$. If, on the one hand,
$\fA^+ \models q_{\sigma,c}[\bar{a}]$, then select some $a \in A$ such that $\st^{\fA^+}[a\bar{a}] = \sigma$ and $\col^{\fA^+}[a\bar{a}] = c$. By the interpretation of $q_{\sigma,c}$ in
$\fA^+$, this is possible. Now, for each $\xi \in \dom(\sigma)$, let $b$ be the $\xi$-witness for $a\bar{a}$, and set  
$\fA^+ \models r_{\sigma,c,\xi}[\bar{a}b]$. If, on the other hand, $\fA^+ \not \models q_{\sigma,c}[\bar{a}]$, then do nothing in respect of the tuple $\bar{a}$.
By carrying out this procedure for every $(\ell{-}1)$-tuple $\bar{a}$ over $A$, we thus fix the extensions of the predicates $r_{\sigma,c,\xi}$ in $\fA^+$.
Informally, it helps to read the atom $r_{\sigma, c, \xi}(\bx_\ell)$ as ``$x_\ell$ wants to be the $\xi$-witness with respect to the tuple
$x'\bx_{\ell{-}1}$ {\em for a particular} $x'$ such that $x'\bx_{\ell{-}1}$ has star-type $\sigma$ and colour $c$''. 
Since $\xi \in \dom(\sigma)$ implies
that a $\xi$-witness exists, we have $\fA^+ \models \psi_{2,4}$, where
\begin{equation*}
	\psi_{2,4} := \forall \bx_{\ell{-}1} \exists x_\ell
	\bigwedge_{\sigma \in {\St}}
	\bigwedge_{c \in C}
	\bigwedge_{\xi \in {\dom}(\sigma)}
	\left( q_{\sigma, c}(\bx_{\ell-1}) \to r_{\sigma, c, \xi}(\bx_{\ell})  \right).
\end{equation*}

Let us return to our $\ell$-tuple $a\bar{a}$ with $\st^{\fA^+}[a\bar{a}] = \sigma$, $\col^{\fA^+}[a\bar{a}] = c$ and $\xi$-witness $b$.
Let $\tl(\xi)$ denote the unique adjacent $\ell$-type $\eta$ over $\tau$ such that $\xi \models \eta^+$. Thus,
$\atp^{\fA}[\bar{a}b] = \tl(\xi)$.
Moreover, from the fact that $\st^{\fA^+}[a\bar{a}] = \sigma$ and $b$ is the $\xi$-witness with respect to $a\bar{a}$, we have 
$\col^{\fA^+}[b\tilde{a}] = \sigma(\xi)$. Thus, $\fA^+ \models \psi_{2,5}$, where
\begin{equation*}
		\psi_{2,5} \coloneqq \forall \bx_{\ell{-}1}
		\bigwedge_{\sigma \in \St} \bigwedge_{c \in C} \bigwedge_{\xi \in \dom(\sigma)}
		     \left( r_{\sigma, c, \xi}(\bx_\ell) \to \tl(\xi) \wedge (\sigma(\xi))(x_\ell \cdots x_1)\right).
\end{equation*}
Recall in this regard that $\tl(\xi)$ is an adjacent $\ell$-type (hence a formula with free variables $\bx_\ell$), and $\sigma(\xi) \in C$ is an $\ell$-ary predicate of $\tau$.

More can be said considering the $\ell$-tuple $\bar{a}b$ and its reversal, $b\tilde{a}$.
This latter tuple has some star in $\fA^+$, say $\st^{\fA^+}[b\tilde{a}] = \sigma'$. Consider, then, any adjacent $(\ell+1)$-type $\xi' \in \dom(\sigma')$.
Thus, $b\tilde{a}$ has a $\xi'$-witness in $\fA$, say $b'$.
If $b'=a$, then $\atp^\fA[a\bar{a}b]$ and $\atp^\fA[b\tilde{a}b']$ are mutually inverse $(\ell{+}1)$-types, and we have $\xi' = \xi^{-1}$. If, on the other hand, 
$b' \neq a$, then $\langle a,b'\rangle$ is an edge of the directed graph $G_{\bar{a}}$, whence $\col^{\fA^+}[a\bar{a}] \neq \col^{\fA^+}[b'\bar{a}]$, by construction of $\fA^+$. Thus, $\fA^+ \models \psi_{2,6}$, where
\begin{equation*}
	\begin{split}
		\psi_{2,6} \coloneqq \forall \bx_\ell
		\bigwedge_{\sigma  \in \footnotesize{\St}} & \bigwedge_{c \in C} \bigwedge_{\xi \in \footnotesize{\dom}(\sigma)}
		\Big(r_{\sigma, c, \xi}(\bx_\ell) \to  \bigvee\{ s_{\sigma'}(x_\ell \cdots x_1) \mid \\ & \sigma' \in \St \text{ and, for all $\xi' \in \dom(\sigma')$, }
		\xi' \neq \xi^{-1} \Rightarrow \sigma'(\xi') \neq c \} \Big). 
	\end{split}
\end{equation*}
That is to say: if $b$ is the $\xi$-witness with respect to $a\bar{a}$ and $b'$ the $\xi'$-witness with respect to $b\tilde{a}$, then
either $\xi$ and $\xi'$ are mutually inverse types in $\fA$ or else the tuples $a\bar{a}$ and $b'\bar{a}$
are differently coloured in $\fA^+$. 

We complete the construction of $\psi_2$ with a somewhat simpler constraint concerning the predicates $r_{\sigma, c, \xi}$. Consider again any $(\ell{-}1)$-tuple $\bar{a}$ over $A$, and
suppose that, for some $\sigma \in \St$, some $c \in C$ and some pair of distinct $(\ell{+}1)$-types $\xi, \xi' \in \dom(\sigma)$, there exist elements $b$ and $b'$
such that $\fA^+ \models r_{\sigma, c, \xi}[\bar{a}b]$ and $\fA^+ \models r_{\sigma, c, \xi'}[\bar{a}b']$. From the construction of $\fA^+$, $b$ and $b'$ 
must have been chosen as $\xi$- and $\xi'$-witnesses, respectively, with respect to a tuple $a\bar{a}$ for some particular element $a$. It follows that
$b$ and $b'$ must be distinct. That is, $\fA^+ \models \psi_{2,7}$, where
\begin{equation*}
\psi_{2,7} \coloneqq \forall \bx_\ell  \bigwedge_{\sigma  \in \footnotesize{\St}} \bigwedge_{c \in C} \bigwedge^{\xi \neq \xi'}_{\xi, \xi' \in \footnotesize{\dom}(\sigma)}
		\big(r_{\sigma, c, \xi}(\bx_\ell) \to \neg r_{\sigma, c, \xi'}(\bx_\ell) \big).
\end{equation*}
Conversely, if $\fB$ is any structure making $\psi_{2,7}$ true, and $\bar{a}$ an $(\ell{-}1)$-tuple over $B$, then we cannot have the same element $b$ such that
$\fB \models r_{\sigma, c, \xi}[\bar{a}b]$ and $\fB \models r_{\sigma, c, \xi'}[\bar{a}b]$ for different $\xi$ and $\xi'$ in the domain of $\sigma$.

Setting $\psi_2 \coloneqq \psi_{2,0} \wedge \cdots \wedge \psi_{2,7}$, 
we have thus established that $\fA^+ \models \psi_2$.

\subsubsection*{The conjuncts $\psi_3$ and $\psi_4$}
Here, we can again recapitulate the ideas of Sec.~\ref{section:AF-upper-bounds}, though
in a slightly different guise. As before, we take
$\bD^{\mathrm{o}}_{k}$ to denote the set of all pairs $\langle i, j \rangle$ for $1  \leq i < j \leq k$ such that
$j{-}i{+}1$ is greater than 2 and odd. 
Fix some subset $D \subseteq \bD^{\mathrm{o}}_{\ell{-}1}$, and 
suppose $\bar{a}$ is an $(\ell{-}1)$-tuple over $A$ such that $\fA^+ \models \delta_D[\bar{a}]$. By construction of $\fA^+$, the defect set of $\bar{a}$ includes $D$, whence, for any elements $a, b \in A$, the defect set of $a\bar{a}b$ certainly includes $D^+$. It follows that, if $\sigma \in \St$, $c \in C$ and
$\xi \in \dom(\sigma)$, but with $\partial \xi$ not $D^+$-compatible,
then there cannot exist $b \in A$ such that $\fA \models r_{\sigma,c,\xi}[\bar{a}b]$. For otherwise,
by construction of $\fA^+$, there exists $a$ such that $\st^{\fA^+}[a\bar{a}] = \sigma$, and $b$ is the $\xi$-witness with respect to $a\bar{a}$, whence
$\atp^\fA[a\bar{a}b] = \xi$, contradicting the first statement of Lemma~\ref{lma:coherencyAndConsistency}. We have thus proved $\fA^+ \models \psi_3$, where
\begin{equation*}\label{eq:equalreduct5}
	\begin{split}
		\psi_3 \coloneqq
		\bigwedge_{D \subseteq \bD^{\mathrm{o}}_{\ell{-}1}} \hspace{-4mm} & \forall \bx_\ell\,
		   \big( \delta_{D} \rightarrow\\  
		    &  \bigwedge \set{\neg r_{\sigma, c, \xi}(\bx_\ell) \big)  \mid \text{$\sigma \in \St$, $c \in C$, $\xi \in \dom(\sigma)$, $\partial \xi$ not $D^+$-compatible}} \Big).
	\end{split}
\end{equation*} 
For $\psi_4$, we recapitulate the formula from Sec.~\ref{section:AF-upper-bounds}, namely
\begin{equation*}
	\psi_4 \coloneqq \begin{split}
	\smash{\bigwedge_{\smash{\zeta \in \Atp^\tau_\ell}} \bigwedge_{D \subseteq \bD^{\mathrm{o}}_{\ell{-}1}}}
	  \forall \bx_\ell\, \Big( & \big(\delta_D \wedge p_\zeta(\bx_{\ell{-}1}) \big) \to\\
	       & \bigvee \set{\eta \in \Atp^\tau_\ell \mid 
		\text{$(\zeta \wedge \hat{\beta} \wedge \eta^+)$ is $D^+$-consistent}}\Big).
\end{split}
\end{equation*}	
And by the same reasoning as in
Sec.~\ref{section:AF-upper-bounds}, we have $\fA^+ \models \psi_4$.

\subsubsection*{The conjunct $\psi_5$}
Our final conjunct again has no analogue in Sec.~\ref{section:AF-upper-bounds}, and concerns extra conditions we need to impose on certain \textit{non-primitive} $(\ell{+}1)$-tuples in models of $\psi$. We need to ensure that, when reconstructing a model of $\phi$ from such structures, we do not attempt to assign these
tuples incompatible adjacent types.
We require the following simple lemma regarding words.
Recall that $\vec{\bA}^\ell_k$ denotes the set of adjacent functions $f \colon [1,\ell] \rightarrow [1,k]$ such that $f(\ell) =k$.
We say that an $\ell$-tuple $\bar{b}$ is \textit{terminal} if it can be written $\bar{b} = \bar{d}^f$ for some $k$-tuple
$\bar{d}$ with $k < \ell$, and some $f \in \vec{\bA}^\ell_k$. As explained in Sec.~\ref{sec:preliminaries}, for any $f \in \bA^\ell_k$,
a leg of $f$ is a maximal interval $[i,j] \subseteq [1,\ell]$ such that $f(h{+}1){-}f(h)$ is constant for all $h$ ($i \leq h < j)$; such an interval
corresponds to a 
straight-line segment of the graph of $f$.
\begin{lemma} \label{lma:rightFoldedOrterminal}
	Let $\bar{b}$ be an $\ell$-tuple over some set and $b$ an element of that set. Then 
	at least one of the following holds: \textup{(i)}
	$\bar{b}b$ is primitive; \textup{(ii)} $b$ is the last element of $\bar{b}$; \textup{(iii)} $\bar{b}b$ has a suffix which is an odd-length,
	non-trivial palindrome;
	\textup{(iv)}
	$\bar{b}$ is terminal.
\end{lemma}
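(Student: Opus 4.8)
The plan is to dispose of the two degenerate options immediately and then analyse a shortest generation of $\bar b b$ after two normalising reductions. First, if $\bar b b$ is primitive we are in case (i), and if $b = b_\ell$ we are in case (ii); so assume neither. Next I would record the elementary fact that \emph{if two consecutive letters of $\bar b$ coincide, then $\bar b$ is terminal}: if $b_i = b_{i+1}$, then deleting one of the pair and tracing the resulting $(\ell{-}1)$-tuple $\bar d$ by the map that is the identity apart from a single repeated index $i$ yields an $f \in \vec{\bA}^{\ell}_{\ell-1}$ with $\bar b = \bar d^f$, so (iv) holds. Hence we may additionally assume that no two consecutive letters of $\bar b$ are equal.

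Under these assumptions I would fix a primitive generator $\bar c = a_1 \cdots a_k$ of $\bar b b$ (so $k \le \ell$) together with a surjective $f \in \bA^{\ell+1}_k$ with $\bar b b = \bar c^f$, and, reversing $\bar c$ if necessary, assume the last leg of $f$ is increasing, so $f(\ell{+}1) = f(\ell) + 1$. Since $b \ne b_\ell$ and consecutive letters of $\bar b$ differ, $f$ has no flat step at all: every step of $f$ is $\pm 1$. As $k \le \ell$, such a strict walk on $[1, \ell{+}1]$ cannot be injective, so $f$ has an interior local minimum; let $q$ be the \emph{largest} one. Because $q$ is the last local minimum and $f$ ascends into position $\ell{+}1$, $f$ is strictly increasing on $[q, \ell{+}1]$ and $f(q{-}1) = f(q) + 1$. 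Writing $v = f(q)$ and $w = f(\ell{+}1)$, the last up-run of $f$ traces the suffix $a_v a_{v+1} \cdots a_w$, so $\bar b b$ ends in $\cdots a_{v+1}\, a_v\, a_{v+1}\, a_{v+2} \cdots a_w$.

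The core of the argument is to move leftwards from the trough $q$, using the reflection $f(q{-}i) = f(q{+}i)$ as far as it holds. If $q = \ell$, this already forces $f(\ell{-}1) = w$, so the last three letters $a_w\, a_{w-1}\, a_w$ of $\bar b b$ are an odd palindromic suffix (case (iii)). If $q < \ell$, let $r \ge 1$ be maximal with $f(q{-}i) = f(q{+}i)$ for all $i \le r$; as $f$ increases on $[q, \ell{+}1]$, this block traces the odd palindrome $a_{v+r} \cdots a_v \cdots a_{v+r}$ inside $\bar b b$, and I would split on how the reflection fails. (a) If $q{+}r = \ell{+}1$, that palindrome is a suffix of odd length $\ge 3$ (case (iii)). (b) If the reflection fails at a flat step of $f$ to the left, that step yields two equal consecutive letters of $\bar b$, contradicting the normalisation. (c) If the reflection fails at the left end of the word, $f$ is a single ``V''; surjectivity then forces $v = 1$ and $\bar c = a_1 \cdots a_{\max(q,w)}$, and one checks that if $q \ge w$ the bottom of the V is an odd palindromic suffix of length $2w{-}1$ (case (iii)), while if $q < w$ then $\bar b$ uses only the letters $a_1, \dots, a_{w-1}$, the induced generation of $\bar b$ by the restriction of $\bar c$ to its first $w{-}1$ letters is final, and hence $\bar b$ is terminal (case (iv)). (d) The remaining possibility is that the reflection fails at a local maximum of $f$ on the left, in which case I would repeat the analysis with the centre relocated to that maximum; this terminates because the centre position strictly decreases.

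I expect the main obstacle to lie in the bookkeeping of steps (c)--(d): checking that each failure of the reflection is of exactly one of the listed kinds, and that the iteration in (d) always ends in (iii) or (iv) rather than escaping all four cases. The two normalisations — that $f$ has no flat step and that $q$ is its last local minimum — are the technical crux, since they are what keep the list of failure modes finite and control the strict walk $f$ near its right end.
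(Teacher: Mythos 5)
Your opening normalisations coincide with the paper's: dispose of (i) and (ii), observe that an immediately repeated letter in $\bar b$ makes it terminal, and conclude that the generating function $f$ has no flat steps. Your cases (a)--(c) are sound. The gap is in case (d). Once you relocate the centre from the last local minimum $q$ to the local maximum $q'=q-r$, the portion of $f$ to the right of the new centre is no longer monotone --- it descends to $q$ and then ascends to $\ell+1$ --- so the reflection around $q'$ can now fail because the \emph{right-hand} mirror position is a turning point while the left-hand one is not. This failure mode is in none of (a)--(d), and it genuinely occurs. Take $\bar c=a_1a_2a_3a_4$ with distinct letters and $f$ with course of values $(1,2,3,2,3,4)$, so that $\bar b b=a_1a_2a_3a_2a_3a_4$: here (i)--(iii) all fail and only (iv) holds, via $\bar b=(a_1a_2a_3)^g$ with $g$ given by $(1,2,3,2,3)$. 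Your procedure finds $q=4$, $r=1$, relocates to $q'=3$, and the reflection around $q'$ then fails at radius $2$ because position $4$ is a local minimum while position $2$ is not a turning point; the iteration has nowhere to go, and nothing in your argument yields terminality here.

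What is missing is precisely the mechanism that produces conclusion (iv) outside the degenerate cases: when the reflection fails on the right, the leg of $f$ entering the current centre from the left is strictly longer than the leg leaving it to the right, and one must \emph{unfold} the shorter leg to exhibit $\bar b$ as the image of a strictly shorter word under a final adjacent function. The paper's proof makes this the organising principle rather than an afterthought: choose a \emph{shortest} leg of $f$. If the final leg is shortest, folding it back into the penultimate leg yields an odd palindromic suffix (giving (iii), since the absence of repeated letters rules out even-length palindromes); otherwise a strictly shorter leg is initial or internal, both of its neighbours are at least as long, and unfolding it writes $\bar b=\bar d^{\,g}$ for some shorter $\bar d$ and $g\in\vec{\bA}^{\ell}_{k}$ (giving (iv)). Your reflection framework can be repaired by adding this unfolding step for right-side failures (and re-examining case (c) for non-V shapes), but as written the proposal establishes terminality only in the repeated-letter case and in your case (c), which does not cover all the configurations that arise.
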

\begin{proof}
	Suppose that $\bar{b}b$ is not primitive. 
	If $\bar{b}$ has an immediately repeated letter, then it is certainly terminal: indeed $\bar{b}= \bar{a}cc\bar{d}$ is generated from $\bar{a}c\bar{d}$ 
	via a function $f \in \vec{\bA}^\ell_{\ell{-}1}$ which pauses for one step on the letter $c$.  Hence
	we may assume that there are no immediately repeated letters in $\bar{b}$. Furthermore, if $b$ is not the last element of $\bar{b}$, then the whole of $\bar{b}b$ has no immediately repeated letters. 
	Since $\bar{b}b$ is not primitive, we have $\bar{b}'b=\bar{c}^f$ for some $k$-tuple $\bar{b}'$ and some $f \in \bA^m_k$ with at least two legs (i.e. maximal strictly ascending or descending intervals). If the final leg of $f$ is shortest, then some suffix of $\bar{b}b$ is a non-trivial palindrome, and this palindrome must have odd length, since there are no immediately repeated letters. Otherwise 
	$\bar{b}b$ has either of the forms $c\bar{d}d\tilde{d}c\bar{e}b$ or $\bar{a}c\bar{d}d\tilde{d}c\bar{d}d\bar{e}b$, depending on whether the shortest leg is initial
	or internal. In the former case, $\bar{b} = (d\tilde{d}c\bar{e})^g$ for some final adjacent function $g$; in the latter, $\bar{b} =
	(\bar{a}c\bar{d}d\bar{e})^h$  for some final adjacent function $h$ In both cases, $\bar{b}$ is terminal. 
\end{proof}

We remark that, in the case were $\bar{b}b$ has a suffix which is an odd-length non-trivial palindrome, that palindrome may be the whole of $\bar{b}b$.
The cases of Lemma~\ref{lma:rightFoldedOrterminal} correspond to conditions on the adjacent type of the $(\ell{+}1)$-tuple in question.
Say that an
adjacent $(\ell{+}1)$-type $\xi$ is \textit{palindromic} if, for any $\AFv{[\ell{+}1]}$-atom $\alpha$ (over the relevant signature),  $\xi \models \alpha$ implies $\xi \models \alpha(x_{\ell{+}1} \cdots x_1)$. 
Evidently, if an $(\ell{+}1)$-tuple is a palindrome, then its adjacent type in any structure is palindromic. Say that $\xi$ is \textit{blunt} if, for any $\AFv{[\ell{+}1]}$-atom $\alpha$, $\xi \models \alpha$ implies $\xi \models \alpha(\bx_{\ell}x_{\ell})$. If an $(\ell{+}1)$-tuple has the same last two elements, then its adjacent type in any structure is blunt. 
Say that $\xi$ is $s$-{\em hooked} (for $s$ satisfying $2 <  2s+1 < \ell{+}1$) if, for any $\AFv{[\ell{+}1]}$-atom $\alpha$ we have that
$\xi \models \alpha$ implies $\xi \models \alpha(x_1 \cdots x_{\ell{+}1{-}s} x_{\ell{-}s} \cdots x_{\ell{+}1{-}2s})$.
If an $(\ell{+}1)$-tuple has a proper suffix that is a non-trivial palindrome of length $2s+1$, then its adjacent type in any structure is $s$-hooked. Observe the strict inequality $2s+1 < \ell{+}1$ governing the parameter $s$ in this last definition: if
$\xi$ is palindromic (and $\ell+1$ is odd), we do not say that $\xi$ is $(\ell/2)$-hooked.

Now let $\bar{a}$ be an $(\ell{-}1)$-tuple over $A$ and $b \in A$, and suppose $\fA^+ \models r_{\sigma,c,\xi}[\bar{a}b]$.
By the construction of $\fA^+$, there exists $a \in A$ such that $b$ is the $\xi$-witness with respect to $a\bar{a}$, and, moreover, $\col^{\fA^+}[a\bar{a}] = c$.
If $\xi$ is palindromic, blunt or $s$-hooked for some $s$ ($2 < 2s+ 1 < \ell{+}1$), then the $(\ell{+}1)$-tuple $a\bar{a}b$ exhibits certain properties, which we proceed to describe.
Consider first the case where $\ell{+}1$ is odd and $\xi$ is not palindromic, and suppose in addition that $\fA^+ \models d_{\ell-1}[\bar{a}]$. By the construction of $\fA^+$ again, $\bar{a}$ is a palindrome, i.e.~$\bar{a}= \tilde{a}$. It follows that $a \neq b$,
since otherwise, $a\bar{a}b$ is a palindrome, contradicting the assumption that $\xi = \atp^\fA[a\bar{a}b]$ is not palindromic. 
And since $b$ is a witness with respect to $a\bar{a}$ with $a \neq b$, the ordered pair $\langle a,b \rangle$ is an edge of the directed graph $G_{\bar{a}} = G_{\tilde{a}}$,
so that by the construction of $\fA^+$,  we have $c = \col^{\fA^+}[a\bar{a}] \neq \col^{\fA^+}[b\bar{a}] = \col^{\fA^+}[b\tilde{a}]$. 
Thus we have shown that $\fA^+ \models \psi_{5,1}$, where, for $(\ell{+}1)$ odd,
\begin{equation*}
	\psi_{5,1} := \forall \bx_\ell \bigwedge_{\sigma \in \St}
	\bigwedge_{c \in C} \bigwedge_{\xi \in \dom(\lambda)}^{\xi \text{ not palindromic}}
	\Big(r_{\lambda, c, \xi}(\bx_\ell) \wedge d_{\ell{-}1}(\bx_{\ell{-}1}) \to \neg c(x_\ell \cdots x_1) \Big),
\end{equation*}
and for $(\ell{+}1)$ even, $\psi_{5, 1} := \top$. We remark  that even-length palindromes have immediately repeated letters in the middle, which 
obviates---as we shall see later---the need for an analogue of the odd-length case.

Second, 
consider the case where $\xi$ is not blunt. Here, we do not need to add any conjuncts to $\psi$, since the consistency of $\xi$ requires that 
it contains the inequality literal
$x_{\ell} \neq x_{\ell{+}1}$. Hence, $\atp^\fA[\bar{a}b]$ contains the inequality literal
$x_{\ell{-}1} \neq x_{\ell}$, which is all the information we shall require.

Third, 
consider the case where, for any $s$  in the range $2 <  2s+1 < \ell{+}1$, $\xi$ is not $s$-hooked. Since $\xi = \atp[a\bar{a}b]$, it follows that $a\bar{a}b$ has no proper suffix of length $2s+1$ that is a palindrome, and
therefore, $\bar{a}b$ has no suffix of the same length that is a palindrome. Thus we have shown that $\fA^+ \models \psi_{5,2}$, where
\begin{equation*}
	\psi_{5,2} := \forall \bx_\ell \bigwedge_{\sigma \in \St}
	\bigwedge_{c \in C} \bigwedge_{s= 1}^{s \leq (\ell{-}1)/2} \, \,\bigwedge_{\xi \in \dom(\lambda)}^{\xi \text{ not $s$-hooked}}
	\Big(r_{\lambda, c, \xi}(\bx_\ell)  \to  \neg d_{2s{+}1}(x_{\ell{-}2s} \cdots x_{\ell}) \Big).
\end{equation*}
Writing $\psi_5:= \psi_{5,1} \wedge \psi_{5,2} \wedge \psi_{5,2}$, we have shown that $\fA^+ \models \psi_5$. This completes the definition of the formula $\psi$.

Summarizing the above discussion, and recalling that, by Lemma~\ref{lma:adjacentClosure}, $\phi \models \acl(\phi)$, 
we have:
\begin{lemma}
	Suppose $\fA \models \phi$. Then we can expand $\fA$ to a model $\fA^+ \models \psi$.
	\label{lma:reductionDirection1eq}
\end{lemma}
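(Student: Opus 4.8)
The plan is to do essentially no new work: the lemma merely packages the conjunct-by-conjunct verifications already carried out in the subsections above. First I would fix, once and for all, the auxiliary data on which the expansion $\fA^+$ depends. For each $\ell$-tuple $a\bar a$ over $A$ choose a minimal witness set $B_{a\bar a}$ with $|B_{a\bar a}| \le |I|$; such a set exists precisely because $\fA \models \phi$. For each $(\ell{-}1)$-tuple $\bar a$ choose a proper colouring of the directed graph $G_{\bar a}$ using colours from a set $C$ with $|C| = 2(|I|^2{+}|I|){+}1$; such a colouring exists because the out-degree of $G_{\bar a}$ is bounded by $|I|^2{+}|I|$ and every digraph of out-degree $d$ admits a $(2d{+}1)$-colouring. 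Relative to these fixed choices, the interpretations of all the new predicates --- the colour predicates $C$, the palindrome predicates $d_{2s+1}$, the tail predicates $p_\zeta$, the star predicates $s_\sigma$, the colour-tail predicates $q_{\sigma,c}$, and the witness predicates $r_{\sigma,c,\xi}$ --- are exactly those stipulated in the preceding subsections. Since these predicates have pairwise disjoint names and none lies in $\tau$, the stipulations never conflict, so together they determine a single expansion $\fA^+$ of $\fA$.

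It then remains only to collect the individual claims already established. In the discussion of $\psi_0$ we checked $\fA^+ \models \psi_0$ (every $\ell$-tuple gets exactly one colour); in that of $\psi_1$, $\fA^+ \models \psi_1$ (every palindromic tuple of the appropriate length satisfies the matching $d_{2s+1}$); in that of $\psi_2$, we verified each of $\psi_{2,0}, \dots, \psi_{2,7}$ in turn, so $\fA^+ \models \psi_2$; in those of $\psi_3$ and $\psi_4$, we obtained $\fA^+ \models \psi_3$ --- using the first statement of Lemma~\ref{lma:coherencyAndConsistency} to rule out $\xi$-witnesses whose increment is not $D^+$-compatible --- and $\fA^+ \models \psi_4$ by the argument transcribed from Sec.~\ref{section:AF-upper-bounds}; and in that of $\psi_5$, we checked $\psi_{5,1}$ and $\psi_{5,2}$, so $\fA^+ \models \psi_5$. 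For the final conjunct $\acl(\phi)$: since $\phi$ is in normal form, Lemma~\ref{lma:adjacentClosure} gives $\phi \models \acl(\phi)$, whence $\fA \models \acl(\phi)$; and since $\acl(\phi)$ is a sentence over $\tau$, on which $\fA^+$ agrees with $\fA$, also $\fA^+ \models \acl(\phi)$. Combining everything yields $\fA^+ \models \acl(\phi) \wedge \psi_0 \wedge \cdots \wedge \psi_5$, which is $\psi$.

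The one point I would be careful to spell out, rather than leave implicit, is the mutual consistency of all these predicate interpretations: that fixing the extension of a predicate for the sake of one conjunct never covertly constrains a predicate relied on by another, and that every choice threaded through the construction --- the witness sets $B_{a\bar a}$, the colourings of the graphs $G_{\bar a}$, and the selection, when defining $r_{\sigma,c,\xi}$ on a tuple $\bar a$, of a representative element $a$ with $\st^{\fA^+}[a\bar a] = \sigma$ and $\col^{\fA^+}[a\bar a] = c$ --- is made coherently and globally. Because the new predicates live on disjoint name spaces and each is pinned down by an unambiguous rule relative to the fixed auxiliary data, there is in fact nothing to reconcile; but this is the kind of bookkeeping where an informal argument can quietly go wrong, so it deserves an explicit remark. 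Everything else is a direct quotation of the verifications above, so the proof is a short act of assembly.
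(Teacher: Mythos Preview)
Your proposal is correct and takes essentially the same approach as the paper: the paper's own proof is a one-line summary ("Summarizing the above discussion, and recalling that, by Lemma~\ref{lma:adjacentClosure}, $\phi \models \acl(\phi)$, we have \dots") that simply points back to the conjunct-by-conjunct verifications of $\psi_0,\dots,\psi_5$ carried out in the preceding subsections. Your explicit remark about the mutual consistency of the new predicate interpretations is a welcome bit of bookkeeping the paper leaves implicit, but otherwise the two arguments are the same.
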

Having defined $\psi$ and established Lemma~\ref{lma:reductionDirection1eq}, we establish a converse
in the form of the following lemma.
\begin{lemma}
Suppose $\fA \models \psi$. Then we can construct a model $\fB \models \phi$ over the same domain. 
\label{lma:reductionDirection2eq}
\end{lemma}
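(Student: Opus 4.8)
The plan is to follow the same overall strategy as in the proof of Lemma~\ref{lma:reductionDirection2}, but, since equality is now present and Lemma~\ref{lma:cartesian} is unavailable, to build the model of $\phi$ \emph{directly over the domain $A$ of $\fA$}, reading the witnessing data off the auxiliary predicates of $\psi$ instead of manufacturing fresh copies of elements. First I would let $\fA^-$ be the $\tau$-reduct of $\fA$; by Lemma~\ref{lma:approx} we may assume $\fA$, and hence $\fA^-$, to be a layered structure of height $\ell$. From the conjuncts of $\psi$ one extracts, for every $\ell$-tuple $a\bar{a}$ over $A$: its colour $c = \col^\fA[a\bar{a}] \in C$ (well-defined by $\psi_0$); its star $\sigma = \st^\fA[a\bar{a}] \in \St$ (well-defined by $\psi_{2,1}$), with $\atp^{\fA^-}[a\bar{a}] = \atp(\sigma)$ by $\psi_{2,2}$; and, via $\psi_{2,3}$ and $\psi_{2,4}$, for each $\xi \in \dom(\sigma)$ a designated element $b_{a\bar a,\xi} \in A$ with $\fA \models r_{\sigma,c,\xi}[\bar{a}\,b_{a\bar a,\xi}]$. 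By $\psi_{2,7}$ these designated elements are pairwise distinct as $\xi$ ranges over $\dom(\sigma)$; by $\psi_{2,5}$, $\atp^{\fA^-}[\bar{a}\,b_{a\bar a,\xi}] = \tl(\xi)$ and $\col^\fA[b_{a\bar a,\xi}\,\tilde{a}] = \sigma(\xi)$; and by $\psi_{2,6}$ the star of $b_{a\bar a,\xi}\tilde{a}$ avoids the colour $c$ on every type in its domain other than $\xi^{-1}$ --- this last fact being the device that prevents witness chains from wrapping around inconsistently.

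Next I would elevate $\fA^-$ to a layered structure $\fB$ of height $\ell{+}1$ over $A$ by fixing the increment of every primitive $(\ell{+}1)$-tuple. Call an $(\ell{+}1)$-tuple $\bar{d}$ a \emph{witness tuple} if $\bar{d} = a\bar{a}\,b_{a\bar a,\xi}$ for some $\ell$-tuple $a\bar{a}$ and some $\xi\in\dom(\st^\fA[a\bar a])$, in which case we aim to set $\atp^\fB[\bar d] = \xi$. For a primitive witness tuple $\bar{d}$ the defect set is $D^+$, where $D \subseteq \bD^{\mathrm{o}}_{\ell{-}1}$ is the (odd-palindrome) defect set of the middle $(\ell{-}1)$-tuple, so $\fA \models \delta_D$ there; hence by $\psi_3$ the chosen $\xi$ has $\partial\xi$ $D^+$-compatible, and Lemma~\ref{lma:coherencyAndConsistency} lets us consistently put $\itp^\fB[\bar d] = \partial\xi$ (which simultaneously fixes the reversal). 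Together with $\atp^{\fA^-}[a\bar a] = \atp(\sigma)$ for the left $\ell$-subtuple and $\atp^{\fA^-}[\bar a\,b_{a\bar a,\xi}] = \tl(\xi)$ for the right one, this yields $\atp^\fB[\bar d]=\xi$. The crucial point is consistency: a primitive tuple can be a witness tuple in essentially one way (from $\bar d = a\bar a b = a'\bar a' b'$ one gets $a\bar a = a'\bar a'$, hence $b = b'$, and by $\psi_{2,7}$ the demanded types coincide), and if moreover $\tilde d$ is also a witness tuple then $\psi_{2,5}$, $\psi_{2,6}$ and the colouring of $G_{\bar a}$ force the two demanded types to be mutually inverse, so the assignments to $\bar d$ and $\tilde d$ agree. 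Non-primitive witness tuples are handled by Lemma~\ref{lma:rightFoldedOrterminal}: such a $\bar d = a\bar a b$ is terminal --- whence a $\gamma_i$-witness is already present via $\acl(\phi)$ and Lemma~\ref{lma:essentiallyTrivial} --- or has $b$ equal to the last letter of $\bar a$, or has an odd-length non-trivial palindromic proper suffix; and $\psi_{5,1}$, $\psi_{5,2}$ together with the ``not blunt / not $s$-hooked'' analysis of the relevant $\xi$ are precisely what is needed to show that in these last two cases $\xi$ itself is blunt resp.\ $s$-hooked, so the type demanded of the non-primitive tuple is already the one dictated by its primitive generator.

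Finally I would fill in the remaining primitive $(\ell{+}1)$-tuples using $\psi_4$, exactly as in Sec.~\ref{section:AF-upper-bounds}: for such $\bar c = c_1\cdots c_{\ell+1}$, with $\zeta = \atp^{\fA^-}[c_1\cdots c_\ell]$, $\eta = \atp^{\fA^-}[c_2\cdots c_{\ell+1}]$ and $D$ the odd-palindrome defect set of $c_2\cdots c_\ell$, the conjuncts $\psi_1$ and $\psi_{2,0}$ make $\delta_D \wedge p_\zeta$ hold of $c_2\cdots c_{\ell+1}$, so $\psi_4$ supplies an adjacent $(\ell{+}1)$-type $\xi$ with $\xi \models \zeta \wedge \widehat\beta \wedge \eta^+$ and $\partial\xi$ $D^+$-compatible; Lemma~\ref{lma:coherencyAndConsistency} then lets us set $\itp^\fB[\bar c] = \partial\xi$, and $\xi\models\widehat\beta$ makes $\beta$ hold of $\bar c$ and $\tilde c$. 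Non-primitive $(\ell{+}1)$-tuples inherit their types from primitive generators, and $\fB \models \acl(\phi)$ (Lemma~\ref{lma:adjacentClosure}) guarantees $\beta$ holds of them too. It remains to check $\fB \models \forall\bx_\ell\exists x_{\ell+1}\gamma_i$ for each $i\in I$: given an $\ell$-tuple $a\bar a$, condition~3 in the definition of a star yields a unique $\xi\in\dom(\sigma)$ with $\xi\models\gamma_i$, and $b_{a\bar a,\xi}$ (or, in the terminal case, the witness supplied by $\acl(\phi)$) is then the required $\gamma_i$-witness. The main obstacle throughout is this bookkeeping of consistency --- proving that the several mechanisms which can demand a type for a given primitive $(\ell{+}1)$-tuple never demand incompatible ones --- the argument hinging on uniqueness of primitive generators (Lemma~\ref{lemma:main}), on the colourings of the graphs $G_{\bar a}$ (encoded by $\psi_0$ and $\psi_{2,6}$), and on the palindrome/blunt/hooked classification of Lemma~\ref{lma:rightFoldedOrterminal} feeding into $\psi_5$.
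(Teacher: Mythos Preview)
Your proposal is correct and follows essentially the same approach as the paper's proof: extract star and colour data via $\psi_0$--$\psi_2$, assign witness types to primitive $(\ell{+}1)$-tuples using $\psi_3$ and Lemma~\ref{lma:coherencyAndConsistency}, handle non-primitive witness tuples via the case analysis of Lemma~\ref{lma:rightFoldedOrterminal} together with $\psi_5$, resolve reversal clashes via $\psi_{2,5}$ and $\psi_{2,6}$, and complete the remaining primitive tuples via $\psi_4$. Two small slips worth fixing: the justification for $\fB \models \acl(\phi)$ is that $\acl(\phi)$ is a conjunct of $\psi$ (not Lemma~\ref{lma:adjacentClosure}, which says $\phi \models \acl(\phi)$ and would be circular here), and your appeal to ``the colouring of $G_{\bar a}$'' is out of place in this direction---no such graph exists in $\fA$, and the required colour-disjointness is exactly what $\psi_{2,5}$ and $\psi_{2,6}$ encode; also note that Lemma~\ref{lma:rightFoldedOrterminal} allows the odd palindromic suffix to be the \emph{whole} of $a\bar{a}b$, a sub-case handled by $\psi_{5,1}$ (forcing $\xi$ palindromic) that your ``proper suffix'' phrasing elides.
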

\begin{proof}
Since $\psi \in \AFv{\ell}$, by Lemma~\ref{lma:approx} we may take $\fA$ to be
a layered structure of the height $\ell$.
Setting $\fA^-$ to be the contraction of $\fA$ to the signature $\tau$ (the original signature of $\phi$), we proceed to elevate $\fA^-$
to a model $\fB \models \phi$ (with the same domain, $A$) by setting the interpretations of the predicates in $\tau$ with respect to the primitive $(\ell{+}1)$-tuples over $A$.

Let $a$ be any element of $A$ and $\bar{a}$ any $(\ell{-}1)$-tuple over $A$. By $\psi_0$, $a\bar{a}$ has a unique colour, say $\smash{c = \col^{\fA}[a\bar{a}]}$, 
and  by $\psi_{2,1}$, there is a unique $\sigma \in \St$ such that $\fA \models s_\sigma[a\bar{a}]$. Let $\zeta = \atp(\sigma)$ be the underlying $\ell$-type of $\sigma$, and let us fix, for the moment, any $\xi \in \dom(\sigma)$. Thus, $\zeta= \xi\restr_{[1,\ell]}$.  By $\psi_{2,2}$, $\atp^{\fA^-}[a\bar{a}] = \zeta$.
Moreover, by $\psi_{2,3}$, $\fA \models q_{\sigma,c}[\bar{a}]$, and hence, by the fact that  $\fA \models \psi_{2,4}$, there exists $b \in A$ such that
$\fA \models r_{\sigma,c,\xi}[\bar{a}b]$. Let $\eta = \tl(\xi)$; it follows from $\psi_{2,5}$ that $\atp^{\fA^-}[\bar{a}b]= \eta$ and
$\col^{\fA}[b\tilde{a}] = \sigma(\xi)$. The intention is that we should set the interpretations of 
the predicates in the structure $\fB$ in such a way that the $(\ell{+}1)$-tuple $a\bar{a}b$ is assigned the adjacent type $\xi$. The various other conjuncts of $\psi$ will ensure that this can be done consistently.

Suppose on the one hand that the $(\ell{+}1)$-tuple $a\bar{a}b$ is primitive. Thus, no prefix or suffix of
$a\bar{a}b$ is a non-trivial palindrome, and 
$a\bar{a}b$ contains no immediately repeated letters.
Hence, any defect $\langle i, j \rangle$ of $a\bar{a}b$
satisfies $2 \leq i < j \leq \ell$, with $j{-}i{+}1$ greater than 2 and odd. Letting $D$ now be the set of 
defects $\langle i, j \rangle$ of the $(\ell{-}1)$-tuple $\bar{a}$ 
(so that $D \subseteq \bD^{\mathrm o}_{\ell{-}1}$), we see that the set of defects of $a\bar{a}b$ is given by $D^+ = \set{\langle i+1, j+1 \rangle \mid \langle i,j \rangle \in D}$. By $\psi_1$, we have that $\fA \models \delta_D[\bar{a}]$; and by $\psi_3$, bearing in mind that
$\fA \models r_{\sigma,c,\xi}[\bar{a}b]$, we have that $\partial \xi$ is $D^+$-compatible. 
Lemma~\ref{lma:coherencyAndConsistency} thus ensures that
that it is meaningful to set $\itp^{\fB}_{\ell+1}(a\bar{a}b) = \partial \xi$. 
Moreover, since 
$\atp^{\fA^-}[a\bar{a}] = \zeta$,
$\atp^{\fA^-}[\bar{a}b] = \eta$, and $\xi = \zeta \cup \eta^+ \cup \partial \xi$, we have $\atp^{\fB}_{\ell+1}(a\bar{a}b) = \xi$.  
We remark that the adjacent type of the primitive $(\ell{+}1)$-tuple $b\tilde{a}a$ is also set in this process, but no other primitive $(\ell{+}1)$-tuples have their adjacent types set. Observe that, since
$\xi \models \hat{\beta}$, 
the newly-set adjacent types of $a\bar{a}b$ and $b\tilde{a}a$ will not violate $\beta$.

Suppose on the other hand that 
that $a\bar{a}b$ is not primitive. 
We show that, in that case, $\fA$ already provides a $\xi$-witnesses with respect to $a\bar{a}$, so that there is nothing to do.
Here, we make use of the fact that $\fA \models \acl(\phi) \wedge \psi_5$.
By Lemma~\ref{lma:rightFoldedOrterminal}, either $b$ is the last element of $a\bar{a}$, or $a\bar{a}b$ has a suffix 
that is a non-trivial odd-length palindrome, or $a\bar{a}$ is terminal.
The case where $a\bar{a}$ is terminal is easily dealt with. There exists a $k$-tuple $\bar{d}$ and $f \in \vec{\bA}^{\ell}_{k}$
such that $a\bar{a} = \bar{d}^{f}$ for some $k$ ($2 \leq k < \ell$). Defining, as before, $f^+ = f \cup \langle k{+}1, \ell{+}1\rangle$, 
by  $\acl(\phi)$ there exists, for each $i \in I$, some $b' \in A$ such that
$\fA \models \gamma_i[(\bar{d}b')^{f^+}]$, i.e.~$\fA \models \gamma_i[a \bar{a} b']$.
In effect, we are discarding our originally chosen element $b$, since the required witness is already present.  

Consider next the case where $b$ is the last element of $\bar{a}$. We have already argued that
$\atp^{\fA^-}[\bar{a}b]= \eta = \tl(\xi)$, whence $\tl(\xi) \models x_{\ell{-}1} = x_\ell$,
whence $\xi \models x_{\ell} = x_{\ell{+}1}$. Thus, $\xi$ is blunt, by consistency of $\xi$. But in that case, writing
$\bar{a} = a_1 \cdots a_{\ell{-}1}$, we have 
$\fA^- \models \zeta[a a_1 \cdots a_{\ell{-}1}]$ implies
$\fA^- \models \xi[a a_1 \cdots a_{\ell{-}1}a_{\ell{-}1}]$, that is, $\fA^- \models \xi[a\bar{a}b]$. Thus, 
our chosen element $b$ is already a $\xi$-witness with respect to $a\bar{a}$, without our having to do anything.

Consider finally the case where $a\bar{a}b$ is not terminal and has a suffix that is an odd-length, non-trivial palindrome.
Suppose, on the one hand, that the suffix in question is the \textit{whole} of $a\bar{a}b$---that is, $\ell{+}1$ is odd, and $a\bar{a}b$ has the form
$a a_1 \cdots a_{\ell/2} a_{\ell/2{-}1} \cdots a_1 a$.
Thus, $a = b$ and $\bar{a} = \tilde{a}$.
Hence, $a\bar{a} = b\tilde{a}$, whence $\col^{\fA^-}[b\tilde{a}] = \col^{\fA^-}[a\bar{a}] = c$. 
It follows by $\psi_1$ that $\fA \models d_{\ell{-}1}[\bar{a}]$, and thence, by $\psi_{5,1}$\linebreak
that $\xi$ is palindromic. 
Thus,
$\atp^{\fA^-}(a a_1 \cdots a_{\ell/2}) = \zeta\restr_{[1,\ell/2{+}1]} = \xi\restr_{[1,\ell/2{+}1]}$, whence 
$\atp^{\fA^-}(a a_1 \cdots a_{\ell/2} a_{\ell/2{-}1} \cdots a_1 a) = \xi$.
Again, our chosen element $b= a$ is already a $\xi$-witness with respect to $a\bar{a}$.

On the other hand, suppose that $a\bar{a}b$ has a \textit{proper} suffix that is a non-trivial, odd-length palindrome.
Thus, $a\bar{a}b$ is of the form $a a_1 \cdots 
a_{\ell{-}s} a_{\ell{-}(s{+}1)} \cdots a_{\ell{-}2s}$ (for some $s$ with
$2 <  2s+1 \leq \ell$).  
But by $\psi_1$, $\fA \models d_{s}[a_{\ell{-}2s} \cdots a_{\ell{-}s} a_{\ell{-}(s{+}1)} \cdots a_{\ell{-}2s}]$.
Hence, by $\psi_{5,2}$, we have that $\xi$ is $s$-hooked. 
As in the previous case, it follows that $b=a_{(\ell{-}2s)}$ is already a $\xi$-witness with respect to $a\bar{a}$. 
This completes the process of finding a $\xi$-witness with respect to $a\bar{a}$. 

Keeping $a$ and $\bar{a}$ fixed for the moment, carry out the above assignments for all $\xi \in \dom(\sigma)$, choosing, for each such $\xi$,
a $\xi$-witness $b_\xi$ with respect to $a\bar{a}$. It follows from $\psi_{2,7}$ that
the various elements $b_\xi$ must be distinct, so no clashes can arise.
This completes the process of finding all the required witnesses with respect to $a\bar{a}$. 
From the properties of the elements of $\St$, we see that, however $\fB$ is completed, the $\ell$-tuple $a\bar{a}$ will satisfy $\exists x_{\ell{+}1}\, \gamma_i$
 in $\fB$ 
for every $i \in I$. Moreover, each of the newly-fixed primitive $(\ell{+}1)$-tuples (and their reversals) satisfies $\beta$.

Now vary $a$ and $\bar{a}$ freely. We must show that, for $a'$ and $\bar{a}'$ with $a\bar{a} \neq a'\bar{a}'$, the chosen witnesses
$b$ and $b'$ can never lead to a clash. Suppose then that some $m$-tuple $\bar{d}$ is assigned (or not) to the extension of an $m$-ary predicate $p$ 
when defining the adjacent $(\ell{+}1)$-types of both of the primitive
tuples $a\bar{a}b$ and $a'\bar{a}'b'$.
Then $\bar{d}$ is generated by both $a\bar{a}b$ and $a'\bar{a}'b'$, and there exist adjacent $(\ell{+}1)$-types $\xi$ and $\xi'$ such that $b$ is selected as the $\xi$-witness with respect to $a\bar{a}$ and
$b'$ is selected as the $\xi$-witness with respect to $a'\bar{a}'$. Since, by assumption, $a\bar{a} \neq a'\bar{a}'$, Theorem~\ref{lemma:main} implies
$a\bar{a}b$ is the reversal of $a'\bar{a}'b'$, so that $a' = b$, $b' = a$ and $\bar{a}' = \tilde{a}$; henceforth, we shall write $b\tilde{a}$ in preference to $a'\bar{a}'$. It suffices to show that $\xi' = \xi^{-1}$, because in that case there can be no clash in the assignment of $\bar{d}$ to the extension of the predicate $p$.

Suppose, for contradiction that $\xi' \neq \xi^{-1}$. 
By $\psi_{2,1}$, let $\sigma$ and $\sigma'$ be the unique elements of $\St$ such that $\fA \models s_\sigma[a\bar{a}]$ and 
$\fA \models s_{\sigma'}[b\tilde{a}]$; and by $\psi_{0}$, let $c$ and $c'$ be the unique elements of $C$ such that $\fA \models c[a\bar{a}]$ and 
$\fA \models c'[b\tilde{a}]$. Since $b$ was chosen as a $\xi$-witness for $a\bar{a}$, we have $\fA \models r_{\sigma,c,\xi}[\bar{a}b]$, and similarly
$\fA \models r_{\sigma',c',\xi'}[\tilde{a}a]$. Recalling that $\fA \models r_{\sigma,c,\xi}[\bar{a}b]$ and $\xi' \neq \xi^{-1}$, we have by $\psi_{2,6}$ that
$\sigma'(\xi') \neq c$. 
Moreover, since $\fA \models r_{\sigma',c',\xi'}[\tilde{a}a]$, by $\psi_{2,5}$ we have $\fA \models (\sigma'(\xi'))[a\bar{a}]$,
contradicting $\fA \models c[a\bar{a}]$.
Thus $\xi' = \xi^{-1}$ as required.

This completes the process of finding witnesses for all $\ell$-tuples from $A$, in the course of which 
we have partially defined a layered structure $\fB$, such that, however it is completed,
$\fB \models \forall \bx_\ell \exists x_{\ell{+}1}\, \gamma_i$ for every $i \in I$.
Moreover, each of the newly-fixed 
primitive $(\ell{+}1)$-tuples (and their reversals) satisfies $\beta$.

All that remains is to complete the construction of $\fB$ by
assigning adjacent types to all primitive $(\ell{+}1)$-tuples whose adjacent types have not yet been fixed, without violating the condition
$\forall \bx_{\ell{+}1} \, \beta$. Suppose, then $\bar{c} = a\bar{a}b$ is such a primitive $(\ell{+}1)$-tuple.
By primitiveness, any defect $\langle i,j \rangle$ of $\bar{c}$ is such that $2 < i < j \leq \ell$ and $j{-}i{+}1$ is odd.
Thus, writing $D$ for the set of defects of $\bar{a}$, we have that 
$D \subseteq \bD^{\mathrm o}_{\ell{-}1}$, and
$D^+$ is exactly the defect set of $\bar{c}$.
By $\psi_1$ we have $\fA \models \delta_D[\bar{a}]$ and, 
writing $\zeta = \atp^{\fA^-}[a\bar{a}]$,
we have that $\fA \models p_\zeta[\bar{a}]$ by $\psi_{2, 0}$. 
And writing $\eta = \atp^{\fA^-}[\bar{a}b]$, by 
$\psi_4$ there exists an adjacent $(\ell{+}1)$-type $\xi$ such that $\xi \models \zeta \wedge \hat{\beta} \wedge \eta^+$
and $\partial \xi$ is $D^+$-compatible.   
By the second statement of Lemma~\ref{lma:coherencyAndConsistency}, then,
we can consistently assign $\itp^\fB[\bar{c}] = \partial \xi$.
Since $\zeta = \atp^{\fA^-}[a\bar{a}]$, $\eta = \atp^{\fA^-}[\bar{a}b]$, and $\xi = \zeta \cup \eta^+ \cup \partial \xi$, we have 
$\atp^\fB[\bar{c}] = \xi$.
Additionally, since $\xi \models \hat{\beta}$, both $\bar{c}$ and $\tilde{c}$ satisfy the universal requirements $\beta$ in $\fB$.
Repeated applications of this procedure result in all primitive $(\ell{+}1)$-tuples having their adjacent types defined in
such a way that
$\fB \models \forall \bx_{\ell{+}1} \, \beta$. Hence $\fB \models \phi$. 
\end{proof}

Taken together,
Lemmas~\ref{lma:reductionDirection1eq} and~\ref{lma:reductionDirection2eq} reduce the
satisfiability problem for $\AFv{\ell+1}$ to that for $\AFv{\ell}$~($\ell \geq 2$), though with exponential blow-up.
We thus obtain the decidability of satisfiability for the whole of $\AF$.
More precisely:
\begin{theorem}
	If $\phi$ is a satisfiable \AFv{\ell+1}-formula, with $\ell \geq 1$, then
	$\phi$ is satisfied in a structure of size at most $\ft(\ell, \sizeof{\phi}^{O(1)})$.
	Hence the satisfiability problem for \AFv{k} is in $(k-1)$-\NExpTime{} for all $k \geq 2$, and the adjacent fragment is $\Tower$-complete.
	\label{theo:upperBoundEquality}
\end{theorem}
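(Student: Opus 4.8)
The plan is to establish the size bound by induction on $\ell \geq 1$, feeding the two reduction lemmas of the previous sections into one another, and then to read off the complexity consequences by the usual guess-and-check argument together with the known fluted-fragment lower bounds. First I would dispose of the preprocessing: given an arbitrary $\AFv{\ell+1}$-formula $\phi$, Lemma~\ref{lma:anf} yields, in polynomial time, a normal-form $\AFv{\ell+1}$-formula $\psi$ satisfiable over exactly the same domains as $\phi$, with $\sizeof{\psi} = \sizeof{\phi}^{O(1)}$; so it suffices to bound model sizes for satisfiable \emph{normal-form} $\AFv{\ell+1}$-formulas. The base case $\ell = 1$ is immediate: $\AFv{2} \subseteq \FOt$, and every satisfiable $\FOt$-sentence has a model of size $2^{O(\sizeof{\psi})} = \ft(1, \sizeof{\psi}^{O(1)})$ by \cite[Thm.~5.1]{GradelKV97}.

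For the inductive step $\ell \geq 2$, I would combine Lemmas~\ref{lma:reductionDirection1eq} and~\ref{lma:reductionDirection2eq}: they produce from $\psi$ an $\AFv{\ell}$-formula $\psi'$ (over an expanded signature) that is satisfiable over \emph{exactly the same} domains as $\psi$, and with $\sizeof{\psi'} = 2^{O(\sizeof{\psi})}$. Re-applying Lemma~\ref{lma:anf} to put $\psi'$ into normal form keeps its size at $2^{O(\sizeof{\psi})}$, so the induction hypothesis for $\AFv{\ell}$ gives a model of $\psi'$ --- hence of $\psi$, hence of $\phi$ --- of size $\ft\big(\ell{-}1, (2^{O(\sizeof{\psi})})^{O(1)}\big) = \ft(\ell{-}1, 2^{O(\sizeof{\psi})})$. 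The arithmetical fact that closes the loop is $\ft(a, \ft(b,n)) = \ft(a{+}b, n)$, proved by a one-line induction on $a$ from $\ft(k{+}1,n) = 2^{\ft(k,n)}$; with $b = 1$ this gives $\ft(\ell{-}1, 2^{O(\sizeof{\psi})}) = \ft(\ell{-}1, \ft(1, O(\sizeof{\psi}))) = \ft(\ell, O(\sizeof{\psi}))$. Substituting $\sizeof{\psi} = \sizeof{\phi}^{O(1)}$ yields the claimed bound $\ft(\ell, \sizeof{\phi}^{O(1)})$.

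The complexity consequences then follow routinely. For $\AFv{k}$ with $k = \ell{+}1 \geq 2$, a decision procedure nondeterministically guesses a structure of size at most $\ft(k{-}1, \sizeof{\phi}^{O(1)})$ over $\sig(\phi)$ --- whose description has size $\ft(k{-}1, \sizeof{\phi}^{O(1)})$, since the arities occurring in $\phi$ are at most $\sizeof{\phi}$ --- and verifies in time polynomial in the structure size that it models $\phi$; this places $\AFv{k}$-satisfiability in $(k{-}1)$-\NExpTime. Since an $\AF$-formula $\phi$ uses at most $\sizeof{\phi}$ variables, the same procedure runs within $\ft(\sizeof{\phi}, \sizeof{\phi}^{O(1)})$ steps, so $\AF$-satisfiability is in \Tower. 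For the matching lower bound, $\FLv{k} \subseteq \AFv{k}$, so the $\lfloor k/2\rfloor$-\NExpTime-hardness of $\FLv{k}$ \cite[Thm.~4.2]{phst19} transfers; hence $\FL$- and therefore $\AF$-satisfiability is \Tower-hard \cite[Sec.~3.1.1]{schmitz16}, so $\AF$-satisfiability is \Tower-complete.

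Since the heavy lifting is already done in Lemmas~\ref{lma:reductionDirection1eq} and~\ref{lma:reductionDirection2eq}, the only real care needed here is bookkeeping: verifying that each reduction step (including the re-normalization) is only single-exponential in the formula size, that it preserves the exact class of domains over which satisfiability holds --- so model sizes transfer with no multiplicative factor, which is precisely the gain of the equality-aware construction over the equality-free one in Section~\ref{section:AF-upper-bounds} --- and that the $\ft$-composition identity is invoked correctly at each level of the induction.
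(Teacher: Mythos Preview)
Your proposal is correct and follows essentially the same approach as the paper: iterate the reduction of Lemmas~\ref{lma:reductionDirection1eq} and~\ref{lma:reductionDirection2eq} down to $\AFv{2}$, invoke the $\FOt$ small-model bound there, and track the single-exponential blowup at each step; the paper merely unrolls the recursion explicitly rather than phrasing it as an induction. One small inaccuracy: the blowup per step is $2^{\sizeof{\psi}^{O(1)}}$ rather than $2^{O(\sizeof{\psi})}$ (the number of stars is bounded only by $2^{\sizeof{\psi}^{O(1)}}$), and the paper devotes most of its proof to justifying precisely this size estimate---your final paragraph correctly flags this as the point requiring care, but you should expect to do that bookkeeping rather than merely assert it.
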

\begin{proof}
	Fix $\ell \geq 2$ and suppose $\phi$ is a satisfiable $\AFv{\ell+1}$-formula
	over a signature $\tau$.
	By Lemma~\ref{lma:anf}, we may assume that $\phi$ is in normal form.
	Writing $\phi_{\ell+1}$ for $\phi$, let $\phi_{\ell}$ now denote the formula $\psi \coloneqq \acl(\phi) \wedge \psi_0 \wedge  \cdots \wedge \psi_5$
	as defined above. 
	Repeating this process, we obtain a
	sequence of formulas $\phi_{\ell+1}, \dots, \phi_2$. 
	By Lemma~\ref{lma:reductionDirection1eq}, $\phi_2$ is then also satisfiable.
	For all $k$, ($2 \leq k \leq \ell+1$), let $\phi_k$ have signature $\tau_k$, and for $k \leq \ell$, consider the construction of $\phi_{k}$ from
	$\phi_{k{+}1}$. Since $\sum_{k'=1}^{k{+}1} |\bA^{k{+}1}_{k'}|$ 
	is bounded by a constant, we see that $\sizeof{\acl(\phi)_{k{+}1}}$ is $O(\sizeof{\phi_{k{+}1}})$. 
	In regard to~$\psi_0, \dots, \psi_5$ we may,
	when considering the adjacent $(k{+}1)$-types over $\tau_{k{+}1}$, disregard all adjacent atoms whose argument sequence is not
	a substitution instance of some argument sequence $\bx_{k{+}1}^g$ occurring in an atom of $\phi_{k{+}1}$, as these cannot affect the
	evaluation of $\phi_{k{+}1}$. And since $k \leq \ell$, the 
	number of functions from $\bx_{k{+}1}$ to itself is again bounded by a constant,
	and the number of adjacent $(k{+}1)$-atoms over $\tau_{k{+}1}$ 
	that impact the satisfaction of $\varphi_{k{+}1}$ is $O(\sizeof{\phi_{k{+}1}})$.
	Thus, the number of adjacent $(k{+}1)$-types over $\tau_{k{+}1}$ that we need to consider
	is $2^{O(\sizeof{\phi_{k{+}1}})}$. (This cardinality bounds the number of predicates $p_\zeta$ as featured in $\psi_{2, 0}$ and $\psi_{4}$).
	Keeping the number of $(k{+}1)$-types we need to consider in mind recall that a star is a partial function mapping
	$(k{+}1)$-types to a set of colours $C$.
	Pick any such star $\sigma$ and let $m$ be the cardinality of $\dom(\sigma)$.
	Clearly, $m$ is at most $\sizeof{\varphi_{k{+}1}}$ as no tuple needs more witnesses than there are existential requirements.
	On the other hand, $|C|$ is also bounded by the number of existential requirements in $\varphi_{k{+}1}$ and is
	at most $2(\sizeof{\varphi_{k{+}1}}^2 + \sizeof{\varphi_{k{+}1}}) + 1$.
	Noting that there are at most $|C|^{m}$ functions over a domain of size $m$ and image $C$,
	we have that the total number of stars does not exceed $\sum_{m=0}^{\sizeof{\varphi_{k{+}1}}} (|C|^{m} \cdot (2^{O(\sizeof{\phi_{k{+}1}})})^{m})$
	which is clearly bounded by $2^{\sizeof{\phi_{k{+}1}}^{O(1)}}$.
	Taking these considerations together and noting that the set of defects $\bD^{\mathrm{o}}_{k{-}1}$ is bounded by the constant $(k-1)^2$,
	we have that each of the sentences $\psi_0, \dots, \psi_5$ contain at most $2^{\sizeof{\phi_{k{+}1}}^{O(1)}}$ conjuncts.
	Some care is needed when calculating the sizes of these conjuncts themselves, as they feature disjunctions over stars as in $\psi_{2, 6}$ or
	that over tailing ends of $D^+$-compatible $(k{+}1)$-types
	obtained by processing $\zeta \wedge \hat{\beta}$
	for some set of defects $D \subseteq \bD^{\mathrm{o}}_{k{-}1}$ as in $\psi_{4}$.
	However, in the first case, the disjunction rages over a maximum of $2^{\sizeof{\phi_{k{+}1}}^{O(1)}}$ predicates $s_{\sigma'}$, and in the second case
	the formulas are, in effect, in disjunctive normal form over atoms contained in $\phi_{k{+}1}$, and hence
	have cardinality $2^{O(\sizeof{\phi_{k{+}1}})}$. Hence, $\sizeof{\phi_k}$
	is $2^{\sizeof{\phi_{k{+}1}}^{O(1)}}$.
	By an easy induction, $\sizeof{\phi_{2}}$ is then $\ft(\ell-1, \sizeof{\phi_{\ell+1}}^{O(1)})$. 
	
	Since $\phi_2$ is a sentence in the two-variable fragment of first-order logic, we have, by \cite{GradelKV97},
	that $\phi_2$ has a model $\fA_2$ of cardinality $2^{O(\sizeof{\varphi_2})}$. 
	Moreover, by Lemma~\ref{lma:reductionDirection2eq}, each of the formulas $\phi_{k{+}1}$ ($2 \leq k \leq \ell$)  
	have a model over the same domain as $\fA_2$.
	Since $|\tau_k| \geq |\tau_{k{+}1}|$, we conclude that the expansion of $\fA_2$ (and subsequent models) into models
	$\fA_3, \dots, \fA_{\ell{+}1}$ of $\phi_3, \dots, \phi_{\ell+1}$
	remains bounded by $2^{O(\sizeof{\varphi_2})}$ and thus is of size $\ft(\ell, \sizeof{\phi_{\ell+1}}^{O(1)})$ as required.
\end{proof}

We note that one can do better if the equality predicate is disallowed.
As shown in \cite[Lemma~4.6]{bkp-h23}, $\AFv{3}$-sentences have a model exponential size in regard to the input.
Thus, by reducing variables of $\varphi_{\ell+1}$ as in Theorem~\ref{theo:upperBoundEquality}
until an equisatisfiable $\AFv{3}$-sentence $\varphi_3$ is reached, we
have that $\varphi_3$ (and thus also $\varphi_{\ell+1}$) has a model of size $\ft(\ell-1, \sizeof{\phi_{\ell+1}}^{O(1)})$
thus establishing the following:
\begin{theorem}
The satisfiability problem for the sub-fragment of \AFv{k} without equality is in $(k{-}2)$-\NExpTime{} for all $k \geq 3$.
\end{theorem}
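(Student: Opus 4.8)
The plan is to run the argument of Theorem~\ref{theo:upperBoundEquality} almost verbatim, but to terminate the variable-reduction process one level earlier --- at $\AFv{3}$ rather than at $\AFv{2}$. The point is that, in the \emph{absence} of equality, we already possess a singly-exponential model bound at level three, namely \cite[Lemma~4.6]{bkp-h23}, so there is no need to descend all the way down to $\FOt$. All the heavy lifting --- the construction of $\psi$, and Lemmas~\ref{lma:reductionDirection1} and~\ref{lma:reductionDirection2} --- is precisely the equality-free material of Sec.~\ref{section:AF-upper-bounds}, which is available at every level $\ell \geq 2$, hence in particular for the last step $\ell = 3$.

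Concretely, fix $k \geq 3$ and let $\phi$ be a satisfiable equality-free $\AFv{k}$-sentence; by Lemma~\ref{lma:anf} we may assume $\phi$ is in normal form, and the construction there keeps it equality-free. Set $\phi_k \coloneqq \phi$ and, for $3 \leq \ell < k$, let $\phi_\ell$ be the equality-free $\AFv{\ell}$-formula $\acl(\phi_{\ell+1}) \wedge \psi_1 \wedge \cdots \wedge \psi_4$ obtained from $\phi_{\ell+1}$ by the construction of Sec.~\ref{section:AF-upper-bounds}. By Lemma~\ref{lma:reductionDirection1}, satisfiability is preserved over the same domain at each step, so all of $\phi_k, \phi_{k-1}, \dots, \phi_3$ are satisfiable; and since $\sizeof{\phi_\ell} = 2^{O(\sizeof{\phi_{\ell+1}})}$, a routine induction gives $\sizeof{\phi_3} = \ft(k{-}3, \sizeof{\phi}^{O(1)})$.

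Next I would invoke the base case: by \cite[Lemma~4.6]{bkp-h23} the satisfiable equality-free $\AFv{3}$-sentence $\phi_3$ has a model of size $2^{O(\sizeof{\phi_3})} = \ft(k{-}2, \sizeof{\phi}^{O(1)})$. Climbing back up via Lemma~\ref{lma:reductionDirection2}, the transition from a model of $\phi_\ell$ of size $N$ to a model of $\phi_{\ell+1}$ multiplies the domain size by at most $|I| \cdot (\ell^2{+}\ell{+}1)^{\ell} \leq \sizeof{\phi_{\ell+1}} \cdot (\ell^2{+}\ell{+}1)^{\ell}$, where $I$ is the set of existential conjuncts of $\phi_{\ell+1}$. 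Since there are only $k{-}3$ such steps ($k$ being a constant), and since every such factor is bounded by a value lying a full exponential below $\ft(k{-}2, \cdot)$, the product of these blow-ups is absorbed into the $O(1)$ exponent, and we conclude that every satisfiable equality-free $\AFv{k}$-sentence $\phi$ is satisfiable in a structure of size $\ft(k{-}2, \sizeof{\phi}^{O(1)})$.

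The complexity bound then follows exactly as in Theorem~\ref{theo:upperBoundEquality}: satisfaction of $\phi$ in a layered structure of height $k$ is determined by the extensions of the predicates of $\sig(\phi)$ on the at most $\sizeof{\phi} \cdot N^{k}$ tuples that are substitution instances of the argument sequences occurring in $\phi$, so a nondeterministic algorithm can guess such a partial structure of size $N = \ft(k{-}2, \sizeof{\phi}^{O(1)})$ and verify it in time $\ft(k{-}2, \sizeof{\phi}^{O(1)})$; hence satisfiability for the equality-free sub-fragment of $\AFv{k}$ lies in $(k{-}2)$-\NExpTime{} for every $k \geq 3$. I expect the only mildly fiddly point to be the bookkeeping of the model-size blow-ups during the ascent; but, as just indicated, each such blow-up is only polynomial in a formula that is itself one exponential below the governing bound, so it contributes nothing to the final estimate, and everything else is a direct transcription of Sec.~\ref{section:AF-upper-bounds} together with the size analysis already performed for Theorem~\ref{theo:upperBoundEquality}.
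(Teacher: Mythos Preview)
Your proposal is correct and follows essentially the same approach as the paper: iterate the variable-reduction down to $\AFv{3}$, invoke \cite[Lemma~4.6]{bkp-h23} for the singly-exponential base case, and climb back up. The only cosmetic difference is that the paper's one-line proof appeals to the reduction ``as in Theorem~\ref{theo:upperBoundEquality}'', i.e.\ the Sec.~\ref{section:AF-upper-boundsEq} construction, which by Lemma~\ref{lma:reductionDirection2eq} preserves the domain exactly and so avoids the model-size blow-up bookkeeping you carry out; but your use of the Sec.~\ref{section:AF-upper-bounds} reduction with its polynomial inflation factors is equally valid, and your absorption argument for those factors is sound.
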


It may not have escaped the reader's attention that the equality predicate barely features in the proof of Theorem~\ref{theo:upperBoundEquality}. This
is  because the reduction from the satisfiability problem for $\AFv{\ell{+}1}$ to that for $\AFv{\ell}$ effected in
Lemmas~\ref{lma:reductionDirection1eq}
and~\ref{lma:reductionDirection2eq}
does not interfere with predicates of arity 
less than $(\ell{+}1) \geq 3$, and does not interfere with the domain of quantification. The special status of the equality predicate is thus, in a sense, pushed back into the two-variable fragment.

This observation suggests a generalization.
Recent decades have witnessed concerted attempts to investigate the decidability of the satisfiability problem for $\FO^2$ over various classes of structures, where certain distinguished predicates are required to be interpreted as relations subject to various restrictions, e.g. as linear orders~\cite{KieronskiT09, SchwentickZ10}, trees~\cite{CharatonikKM14, BednarczykCK17}, equivalences~\cite{KieronskiT09, KieronskiO12} and more (see~\cite{KieronskiPT18} for a survey).
If such an extension of $\FOt{}$ is decidable for (finite) satisfiability, then the reduction outlined in this section applies to the corresponding 
extension of $\AF$---it being understood that the distinguished predicates in question have arity at most 2. On the other hand,
if the extension of $\FOt{}$ is undecidable for (finite) satisfiability, then---anticipating Theorem~\ref{theo:fo2-and-af-over-binary-sig-are-the-same}, which states that $\AF$ subsumes $\FOt$---this undecidability result likewise immediately transfers to $\AF$.


\newcommand{\atmM}{\mathcal{M}}
\newcommand{\nofM}{n}
\newcommand{\statesQ}{\mathrm{Q}}
\newcommand{\statesQexists}{\statesQ_{\exists}}
\newcommand{\statesQforall}{\statesQ_{\forall}}
\newcommand{\states}{\mathrm{s}}
\newcommand{\stateq}{\mathrm{q}}
\newcommand{\initialstate}{\states_{I}}
\newcommand{\acceptingstate}{\states_{A}}
\newcommand{\rejectingstate}{\states_{R}}
\newcommand{\transreldelta}{\mathrm{T}} 
\newcommand{\transitiont}{\mathrm{t}}
\newcommand{\letterzero}{\mathtt{0}}
\newcommand{\letterone}{\mathtt{1}}
\newcommand{\lettera}{\mathtt{a}}
\newcommand{\letterb}{\mathtt{b}}
\newcommand{\letterc}{\mathtt{c}}
\newcommand{\letterx}{\mathtt{x}}
\newcommand{\tapeword}[1]{\mathrm{#1}}
\newcommand{\tapewordw}{\bar{w}}
\newcommand{\tapewordu}{\bar{u}}
\newcommand{\tapewordv}{\bar{v}}
\newcommand{\runT}{\mathcal{T}}
\newcommand{\confC}{\mathcal{C}}
\newcommand{\confD}{\mathcal{D}}
\newcommand{\phiconfM}{\varphi_{\atmM}^{\textrm{conf}}}
\newcommand{\phitransM}{\varphi_{\atmM}^{\textrm{tran}}}


\renewcommand{\bx}{\ensuremath{\mathbf{x}}} 
\renewcommand{\by}{\ensuremath{\mathbf{y}}} 
\renewcommand{\bu}{\ensuremath{\mathbf{u}}} 
\renewcommand{\bz}{\ensuremath{\mathbf{z}}} 
\renewcommand{\bv}{\ensuremath{\mathbf{v}}} 

\newcommand{\genright}{\ensuremath{\overrightarrow{\mathrm{allSeq}}}}
\newcommand{\genbi}{\ensuremath{\overleftrightarrow{\mathrm{allSeq}}}}

\section{The Guarded Subfragment}\label{sec:guarded}


We next shift our attention to the \emph{guarded adjacent fragment}, denoted~$\GA$, defined as the intersection of $\AF$ with the guarded fragment, $\GF$~\cite[Sec.~4.1]{ABN98}. 
In $\GF$, quantification is relativized by atoms, i.e.~all quantification takes the form $\forall{\bar{x}} (\alpha \rightarrow \psi)$ and $\exists{\bar{x}} (\alpha \land \psi)$, where $\alpha$ (a \emph{guard}) is an atom featuring all the variables in $\bar{x}$ and all the free variables~of~$\psi$. 
The satisfiability problem for $\GF$ is $\TwoExpTime$-complete~\cite[Thm.~4.4]{Gradel99}. 
We show that the satisfiability problem for $\GA$ remains $\TwoExpTime$-complete, and thus is as hard as for full $\GF$.
This contrasts with the $\ExpTime$-completeness of the $k$-variable guarded fragment $\GF^k$~\cite[Cor.~4.6]{Gradel99} and of the guarded forward fragment~\cite[Thm.~4]{Bednarczyk21}.
Our proof follows the same reduction strategy as the $\TwoExpTime$-hardness proof for $\GF$ by E.~Gr\"adel~\cite[Thm.~4.4]{Gradel99}.
Because we are working in the guarded \emph{adjacent} fragment, however, Gr\"adel's reduction is not directly available, and some new techniques are required. 
To aid readability, we employ the variable names $w, x, y, z, \dots$ in adjacent formulas, rather than the official $x_1, x_2, x_3, x_4, \dots$. 
The reader may easily check that all formulas in question are indeed in $\AF$ modulo variable renaming.


\subsection{Generating Words.}\label{subsec:generating-words-for-GAF}
We start with a combinatorial exercise concerning the generation of words by the recurrent application of certain adjacent functions. 
Let $m \in \N$ and consider the adjacent functions $\lambda_1, \lambda_2, \lambda_3 \colon [1,m+2] \rightarrow [1,m+1]$ defined by the following
courses of values $(\lambda_i(1) \ \ \cdots \ \ \lambda_i(m+2))$:
\begin{align*}
& \lambda_1\colon \qquad
\begin{pmatrix}
	1 & 2 & 2 & 3 & 4 & \dots & m{+}1
\end{pmatrix}\\
& \lambda_2\colon  \qquad
\begin{pmatrix}
	1 & 2 & 1 & 2 & 3 & \dots & \quad \ m
\end{pmatrix}\\
& \lambda_3\colon  \qquad
\begin{pmatrix}
	1 & 2 & 3 & 3 & 4 & \dots & m{+}1
\end{pmatrix}.
\end{align*}
Using the imagery of Sec.~\ref{sec:primGen}, we can think of these functions as describing strolls on a word of length $(m{+}1)$, in each case starting at the 
left-most position, and proceeding generally rightwards: $\lambda_1$ pauses at the second time step before continuing; 
$\lambda_2$ returns to the beginning after the second time step, but then resumes its rightward journey (without quite reaching the end); and $\lambda_3$ pauses at the third time step before continuing.
%
%

We show that repeated application of these functions to the bit-string $\letterzero \letterone \letterone^m$ yields the whole of the ({exponentially large}) language $\letterzero \letterone \{ \letterzero, \letterone \}^m$.
\begin{lemma} \label{lemma_GA_generation_1}
    Let $W_0 \subseteq \{ \letterzero, \letterone \}^*$ contain $\letterzero \letterone \letterone^m$ and $W_i \coloneqq W_{i{-}1}$
    $\cup \{ \bar{w}^{\lambda_1}, \bar{w}^{\lambda_2}, \bar{w}^{\lambda_3} \mid \bar{w} \in W_{i{-}1} \}$.
    Setting $W \coloneqq \textstyle \bigcup_{i\geq0} W_i$, we have $\letterzero \letterone \{ \letterzero, \letterone \}^m \subseteq W$.
\end{lemma}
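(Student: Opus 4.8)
The plan is to strip off the constant two-letter prefix that every word here carries and reduce to a combinatorial statement about bit-strings of length $m$. First I would record that each $\lambda_j$ leaves the first two letters of its argument untouched and acts on the remaining length-$m$ \emph{data word} $d=d_1\cdots d_m$ by a ``prepend-and-truncate'' move that preserves the length: reading off the courses of values, $\lambda_1$ prepends $\letterone$ and deletes $d_m$; $\lambda_2$ prepends $\letterzero\letterone$ and deletes $d_{m-1}d_m$; and $\lambda_3$ prepends a copy of $d_1$ and deletes $d_m$ (the third coordinate of $\letterzero\letterone d$, which $\lambda_3$ lingers on, is exactly $d_1$). Since $\letterzero\letterone\letterone^m$ has data word $\letterone^m$, it then suffices to show that starting from $\letterone^m$ one can reach every $u\in\{\letterzero,\letterone\}^m$ by iterating these three data-moves.

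Next I would reformulate this as a factorization problem. A data-move sends the current word $w$ to the length-$m$ prefix of $pw$, where the prepended block $p$ is $\letterone$, $\letterzero\letterone$, or the first letter of $w$ for $\lambda_1,\lambda_2,\lambda_3$ respectively; so applying moves with prepended blocks $p_1,\dots,p_t$ in this time order to $\letterone^m$ produces the length-$m$ prefix of $p_t p_{t-1}\cdots p_1\letterone^m$. Hence, to reach a target $u$ it is enough to find such a sequence with $|p_1\cdots p_t|\ge m$ and with $u$ a prefix of $p_t p_{t-1}\cdots p_1$. Reading this concatenation left to right as a list of blocks $q_1q_2\cdots$ — so $q_i$ is the block prepended at time $t-i+1$ — the only constraint is that a block coming from $\lambda_3$ is a single letter equal to the first letter of the block immediately after it (or $\letterone$, if it happens to be the last block, being prepended onto $\letterone^m$).

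Then I would construct such a block list for $u$ greedily, scanning $u$ from the left: a leading $\letterone$ becomes a block $\letterone$ (a $\lambda_1$-move); a leading $\letterzero\letterone$ becomes a block $\letterzero\letterone$ ($\lambda_2$); a leading $\letterzero$ followed by another $\letterzero$ becomes a single-letter block $\letterzero$ ($\lambda_3$), after which one continues with the string still beginning with that second $\letterzero$; and a trailing lone $\letterzero$ becomes a block $\letterzero$ ($\lambda_3$) followed by a padding block $\letterzero\letterone$ ($\lambda_2$). This halts, and the blocks concatenate to $u$ or to $u\letterzero\letterone$; either way the concatenation has length at least $m$ and has $u$ as its length-$m$ prefix. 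Crucially, in each of the two ways a $\letterzero$-block is emitted the next block begins with $\letterzero$, and no $\letterzero$-block is last — so performing the corresponding moves in the reverse of the order the blocks were listed is legal at every step and, by the reformulation, yields $u$. As $W_0$ contains $\letterzero\letterone\letterone^m$, this gives $\letterzero\letterone\{\letterzero,\letterone\}^m\subseteq W$.

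The hard part is the handling of $\lambda_3$: unlike $\lambda_1$ and $\lambda_2$ it prepends no fixed letter but copies the current leading letter, so one cannot simply ``prepend the letters of $u$ one at a time''. The greedy factorization is engineered precisely so that every time $\lambda_3$ is used to place a $\letterzero$ the word already begins with $\letterzero$; checking that invariant — and that the two-letter $\lambda_2$-blocks and the optional padding never disturb the intended prefix $u$ — is where the (modest) real work lies.
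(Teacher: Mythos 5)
Your proposal is correct and is essentially the paper's own argument in different packaging: both proofs build the target's length-$m$ data word from the left, using $\lambda_1$ to prepend a $\letterone$, $\lambda_2$ to prepend a $\letterzero$ in front of a $\letterone$, and $\lambda_3$ to prepend a $\letterzero$ in front of a $\letterzero$. The paper phrases this as an induction on the length of the already-correct prefix (padded on the right with $\letterone$'s), while you phrase it as a greedy block factorization of $u$ whose blocks are applied in reverse order; the case split and the treatment of $\lambda_3$'s dependence on the current leading data letter coincide.
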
   
\begin{proof}
    We establish by induction on $i \in [0,m]$ that, for any word $\bar{c} \in \{ \letterzero, \letterone \}^i$, 
    the word $\letterzero \letterone \bar{c} \letterone^{m{-}i}$ is in $W$; the case $i = m$ then yields the statement of the lemma.
    The base case, $i = 0$ follows from the assumption that $W_0$ contains $\letterzero\letterone\letterone^m$. Suppose now $i > 0$. 
    We show that, for any $x\in \{ \letterzero, \letterone \}$, the word $\bar{u} = \letterzero \letterone x \bar{c} \letterone^{m{-}i{-}1}$ is in $W$.
    Letting $c_1$ be the first character of $\bar{c}$, and writing $\bar{c}= c_1\bar{d}$, it follows by 
    inductive hypothesis that the words $\bar{v} = \letterzero \letterone \bar{c}\letterone^{m{-}i{-}1} \letterone$ and
    $\bar{w} = \letterzero \letterone \bar{d} \letterone^{m{-}i{-}1} \letterone \letterone$ are in $W$.
    Consider cases: 
    (i) if $x = \letterone$ then $\bar{u} = \bar{v}^{\lambda_1}$, 
    (ii) if both $x = \letterzero$ and $c_1 = \letterzero$ then $\bar{u} = \bar{v}^{\lambda_3}$,
    and otherwise, (iii) $x=\letterzero$, $c_1 =  \letterone$ and $\bar{u} = \bar{w}^{\lambda_2}$.
    Thus~$\bar{u} \in W$.
\end{proof} 

We now apply Lemma~\ref{lemma_GA_generation_1} to $\GA$. For $m \geq 0$,
let $G_{m}$ be an $(m{+}2)$-ary predicate. We proceed to write, for any 
binary predicate $p$, a $\GA$-sentence $\genright_m(p)$ ensuring that, if~$p$ is satisfied by a pair of objects, say $ab$, then $G_{m}$ is satisfied by any $(m+2)$-tuple $ab\bar{w}$ for $\bar{w} \in \{ a, b \}^m$.
By Lemma~\ref{lemma_GA_generation_1}, it suffices so take $\genright_m(p)$ to be
\[
    \forall x y \; \Big( p(x y) \to
        G_{m}(x y \underbrace{y \cdots y}_m) \Big) \land
    \bigwedge_{i \in [1,3]} \forall \bu_{m+2} \; 
        \Big( G_{m}(\bu_{m+2}) \to
        G_{m}(\bu_{m+2}^{\lambda_i}) \Big).
\]
Thus, if $\fA \models \genright_m(p)$, and $\fA \models p[ab]$,
we may freely quantify over words of length $(m{+}2)$ over the alphabet $\set{a,b}$ as long as they have the prefix $ab$, since $G_m$ can always be used as a guard. 

We shall require a `mirrored' version of the above device, this time involving a \emph{pair} of 2-element alphabets and a \emph{pair} of words of length $m$ over these alphabets. 
For $m \geq 0$, let $F_m$ be a $(2m + 4)$-ary predicate.  We proceed to write, for any
quaternary predicate $r$, a $\GA$-sentence $\genbi_m(r)$ ensuring that, 
if $r$ is satisfied by a quartet of objects $bacd$, then, for any $m$-tuple $\bar{u}$ over the alphabet $\set{a, b}$, and any $m$-tuple~$\bar{v}$ over the alphabet $\set{c,d}$, the predicate $F_{m}$ is satisfied by $\bar{u}bacd\bar{v}$. 
Let $\lambda_0$ denote the identity function on $[1,m+2]$, and take $\lambda_1, \lambda_2, \lambda_3$ as defined above. 
In addition, for any word $\bar{w}$, we write $\bar{w}^{\mathrm R}$ for the reversal of $\bar{w}$. (Thus, $\bar{w}^{\mathrm R} = \tilde{w}$; but the new notation
is more readable in what follows.) 
By two applications of Lemma~\ref{lemma_GA_generation_1}, it suffices so take $\genbi_m(r)$ to be:
\begin{multline*}
\forall{yxzt} \Big( r(yxzt) \to
F_m(\underbrace{y \dots y}_m yxzt \underbrace{t \dots t}_m) \Big) \land\\ 
\bigwedge_{i, j \in [0,3]} \forall{{\bu}^{\mathrm R}_{m{+}2}\bv_{m{+}2}}
\Big( F_m({\bu}^{\mathrm R}_{m{+}2} \bv_{m{+}2}) \to F_m((\bu_{m{+}2}^{\lambda_i})^{\mathrm R}\, \bv_{m{+}2}^{\lambda_j}) \Big).
\end{multline*}
Again, if $\fA \models \genbi_m(r)$, and $\fA \models r[bacd]$,
we may freely quantify over words of the language $\set{a,b}^mbacd\set{c,d}^m$ in $\GA$, since $F_m$ can always be used as a guard. 
Note that the variables of $\bu_{m{+}2} = u_1 \cdots u_{m+2}$ are quantified above in `reverse order'.
This ensures, after renaming, the adjacency of the formula~$\genbi_m(r)$.

%

\subsection{ATMs.} 

An \emph{alternating Turing machine} (ATM)~\cite{Chandra81} is a tuple\linebreak
$\atmM \coloneqq \langle Q, \Sigma, q_0, \Delta \rangle$,
where $Q$ is a non-empty finite set (the {\em states} of $\atmM$),
$\Sigma$ a non-empty finite {alphabet}, $q_0$ an element of $Q$ (the \emph{initial} state), and $\Delta$
a set \emph{transitions} $\delta$, defined presently.
We imagine $\atmM$ to operate on an 1-way infinite
tape by means of a read-write head as usual, but
we take $Q$ to be partitioned into the sets $Q_\exists$ (\emph{existential states}), $Q_\forall$ (\emph{universal states}) and
$\{ q_a, q_r\}$ (the \emph{accepting} and \emph{rejecting state}, respectively). 
Writing $\Sigma'$ for the alphabet $\Sigma$ augmented with the blank cell symbol `$\blank$',
we define a \textit{transition} $\delta \in \Delta$ to be a relation
\begin{equation*}
    \delta \in Q \times \Sigma' \times Q \times \Sigma' \times \{ -1, 0, 1 \}.
\end{equation*}
The transition $\delta = \langle q, s, q', s', k' \rangle$ is enabled when the machine is in state $q$ and
the read-write head is positioned over a tape square containing the letter $s$. On execution of $\delta$, the current tape square is overwritten with $s'$, the head is moved by $k'$ squares, and 
the current state is updated to $q'$. We denote the set of $\delta \in \Delta$ enabled by state $q$ and letter $s$ by $\Delta(q,s)$.

A \emph{configuration} $\confC$ of $\atmM$ is a triple $\langle q, \omega, h \rangle$,
where $q \in Q$, $\omega$ is an infinite word over $\Sigma'$ and $h$ a non-negative integer.
We read the triple $\confC$ as stating that the machine is in state $q$, the tape contents are given by $\omega$, and the head is situated over the $h$th tape square (counting from 0).
If $\bar{w}_0$ is a finite 
word over $\Sigma$ representing the input to the machine, the
{\em initial configuration} is $\langle q_0, \bar{w}_0\blank^*, 0 \rangle$, where
$\blank^*$ represents an infinite series of blanks.
The \textit{successors} of $\confC$ are defined in the usual way via transitions in $\Delta$.
A \textit{halting} configuration is one which is in state $q_a$ or $q_r$. We assume that halting configurations
have no enabled transitions, and non-halting configurations always have at least one enabled transition.

We shall be interested in the case where every computation  
of $\atmM$ 
(understood as a sequence of enabled transitions starting in the initial configuration)  is of finite length, so that 
there is a function $f$ such that, when $\atmM$ runs on input $w_0$, the read-write head never reaches positions beyond
$f(|w_0|){-}1$.
In that case, we may as well take a configuration to have the form $\langle q_0, \bar{w}, h \rangle$ where $\bar{w}$ is a (finite) word over $\Sigma'$ of length 
$f(|w_0|){-}1$ and $h$ is an integer in the range $[0,f(|w_0|){-}1]$. The notions of acceptance and rejection may then be defined as follows: a halting configuration is \textit{accepting} if it 
is in state $q_a$; an existential configuration is \textit{accepting} if it has an accepting successor;
a universal configuration is \textit{accepting} if all its successors are accepting; a configuration which is not accepting is \textit{rejecting}.
We take $\atmM$ to accept $\bar{w}_0$ if the initial configuration is accepting.

We witness acceptance of an input $\bar{w}_0$ by $\atmM$ using an \textit{acceptance tree} $\mathcal{T}$.
This is a finite tree with vertices labelled by (accepting) configurations. 
The root is labelled by an \emph{initial configuration}; and for any vertex labelled with a particular configuration, its children are 
labelled with the results of executing enabled transitions in that configuration.
Vertices labelled with existential configurations have \textit{at least one} child corresponding to an enabled transition;
those labelled with universal configurations
have a
child corresponding to {\em every} enabled transition; the leaves of the tree are labelled with accepting configurations.

We are interested in the case where the function $f$ bounding the space required by $\atmM$ is of the form $f(n) = 2^{n}$.
Thus, $\atmM$ accesses at most $2^{|\bar{w}_0|}$ tape squares in the course of any computation on input $\bar{w}_0$.
%
We now fix such a machine $\atmM$, and show how, for a given input $\bar{w}_0$, we can manufacture 
a $\GA$-formula $\varphi_{\atmM, \bar{w}_0}$ satisfiable if and only if $\atmM$ accepts $\bar{w}_0$. The computation of 
$\varphi_{\atmM, \bar{w}_0}$ runs in polynomial time (in fact, in logarithmic space) as a function of $|\bar{w}_0|$. Since
there are problems in \complexityclass{ASpace}$(2^n)$ that are complete for $\AExpSpace$,
we can thereby reduce any problem in 
$\AExpSpace$ to the the satisfiability problem for $\GA$. Hence the latter problem is $\AExpSpace$-hard. Using the well-known 
equation $\AExpSpace= \TwoExpTime$, this achieves our goal.


\subsection{Encoding numbers.} \label{sec:ga_enc_numb}
In the sequel, we will consider structures $\str{A}$ interpreting a unary predicate $O$. 
Whenever 
$\fA \models \alpha[ab]$, where $\alpha(xy)$ is the formula $\lnot O(x) \land O(y)$,
we say that $a$ and $b$ \emph{act as zero and unit bits} ($O$ for ``One''), 
and for any word $\bar{u}$ over $\{a,b\}$,
we write $\val{\fA}(\bar{u})$ to denote the integer represented by $\bar{u}$, considered as a bit-string,
with $a$ standing for $\letterzero$ and $b$ for $\letterone$
(most significant bit first).
Notice that there may be other elements, say $c$ and $d$, such that 
$\fA \models \alpha[cd]$, in which case may write $\val{\fA}(\bar{w})$ for the integer represented by any word $\bar{w}$ over $\{c,d\}$.
Clearly, the $\GA$-formula $\textsc{eq}(\bu_\nofM, \bv_\nofM) \coloneqq \textstyle \bigwedge_{i = 1}^{\nofM} O(u_i) \leftrightarrow O(v_i)$ satisfies 
$\fA \models \textsc{eq}[\bar{c}, \bar{d}]$ if and only if $\textstyle\val{\fA}(\bar{c}) = \textstyle\val{\fA}(\bar{d})$.
Other arithmetical properties can be expressed similarly.
In particular, for each $k \in \{ {-}1, {+}1\}$, we may easily write a $\GA$-formula 
$\sheq{\bu_\nofM}{\bv_\nofM{+}k}$ satisfying the following property:
\begin{equation*}
	\fA \models \textsc{eq}[\bar{c}, \bar{d}{+}k] \text{ iff } \val{\fA}(\bar{c}) = \val{\fA}(\bar{d}) + k.
\end{equation*}
The details are routine (see~\cite[p.~15]{bkp-h23}). Observe that the formula $\textsc{eq}(\bar{x},\bar{y})$ is not
atomic! Indeed, all the predicates appearing in this formula are unary; thus, this formula may appear in adjacent formulas whatever the order of quantification
among the variables $\bar{x}$ and $\bar{y}$. Similar remarks~apply~to~$\sheq{\bar{x}}{\bar{y}{+}k}$.


\subsection{Encoding configurations.}
We proceed to 
describe a method of encoding, within a certain class of structures, configurations of an ATM $\atmM = \langle Q, \Sigma, q_0, \Delta \rangle$ 
that never accesses more than $f(n) = 2^n$ tape squares on an input of size $n$. Recall that we agreed to regard configurations in this case as triples $\langle q, \bar{w}, h \rangle$,
where $q \in Q$, $\bar{w}$ is a word over $\Sigma'$ of length $N = 2^n$, and $0 \leq h < N$.  
In doing so, we take the various states $q \in Q$  to be \textit{binary} predicates: a configuration is then represented by
an ordered pair of (distinct) elements $ab$ satisfying any of these predicates. If $\fA \models
q[ab]$, we take it that the represented configuration has state $q$. Since we shall want words over the alphabet $\set{a,b}$ to represent integers,
we shall insist that, in this case, $ab$ satisfies the formula $\alpha$ defined in Sec.~\ref{sec:ga_enc_numb}. This we ensure by writing the
$\GA$-sentence
$\bigwedge_{q \in Q} \forall x y \left(q(xy) \rightarrow \neg O(x) \wedge O(y)\right)$. (Thus, we think of $a$ as a 0 and $b$ as a 1.)
To represent further aspects of the configuration $ab$, we treat each symbol $s \in \Sigma'$ as an $n$-ary predicate, and additionally employ an
$n$-ary predicate $H$.
Specifically,  for any $\bar{w} \in \{ a,b \}^n$, we read
$\fA \models H[\bar{w}]$ as ``the head of the configuration represented by $ab$ is at position $\val{\fA}(\bar{w})$'',
and we read $\fA \models s[\bar{w}]$ as ``the tape square $\val{\fA}(\bar{w})$ of the configuration represented by $ab$ contains the symbol $s$''.
Of course, these  interpretations are only meaningful if:
\begin{enumerate}
    \item there is at most one string in $\{ a, b \}^n$ satisfying $H$ and thus encoding the head position;
    \item each bit-string over $\{ a, b \}^n$ satisfies at most one predicate $s \in \Sigma'$, thus ensuring that a tape cell contains at most one symbol; and
    \item $ab$ satisfies at most one $q \in Q$, thus ensuring that the configuration is in at most one state.
\end{enumerate}
For any state $q \in Q$ we construct a $\GA$-formula $\psi_{q} \coloneqq \psi_{q, 1} \wedge \psi_{q, 2} \wedge \psi_{q, 3}$
enforcing these conditions for any configuration whose state is $q$. 
The first conjunct, $\psi_{q, 1}$, may be given as follows:
\begin{multline*}
    \psi_{q, 1} \coloneqq
    \genright_{2n}(q) \wedge \forall{xy\bu_{\nofM}\bv_{\nofM}}
        \Big( G_{2\nofM}(xy\bu_{\nofM}\bv_{\nofM}) \to \\
            \big(( H(\bu_{\nofM}) \land H(\bv_{\nofM}) ) \to \sheq{\bu_{\nofM}}{\bv_{\nofM}} \big) \Big).
\end{multline*}
Note that this sentence is (adjacent and) guarded, with the atom $G_{2\nofM}(xy\bu_{\nofM}\bv_{\nofM})$ acting as a guard. Of course, thanks to
the conjunct $\genright_{2n}(q)$, this guard is, as it were, \textit{semantically inert},
since, assuming that $q$ is satisfied by $ab$, $G_{2\nofM}$ is satisfied by 
\textit{every}
word of $ab\{ a, b \}^{2n}$.
%
The remaining conjuncts $\psi_{q, 2}$ and $\psi_{q, 3}$ can be easily formulated.

\subsection{Encoding instances of transitions}
For each transition $\delta \in \Delta$ we employ a quaternary predicate $E_\delta$ and read $\fA \models E_\delta[bacd]$ as
``the configuration encoded by $ab$ enables a transition $\delta$ thus producing the configuration encoded by $cd$''.
(Do note the reversal of $ab$ in $E_\delta[bacd]$!)
We call $bacd$ a \textit{$\delta$-transition instance} in $\fA$ with $ab$ corresponding to the predecessor configuration and $cd$ to the successor configuration.
To make sure that each $\delta$-transition instance is indeed a result of transitioning via $\delta = \langle q, s, q', s', k' \rangle$
we write the following~in~$\GA$:
\begin{enumerate}
    \item the successor configuration is in state $q'$;
    \item the head of the successor configuration is moved by $k'$ relative to the predecessor configurations head;
    \item the $h$-th tape cell on the successor configuration is occupied by $s'$, where $h$ is the position of the predecessor's head;
    \item all tape cells that the predecessor's head does not point to are inherited by the successor.
\end{enumerate}
For any transition $\delta \in \Delta$, we construct a $\GA$-formula $\psi_{\delta} \coloneqq \psi_{\delta, 1} \wedge \cdots \wedge \psi_{\delta, 4}$
enforcing these conditions.  We write $\psi_{\delta,4}$ in detail as an example; the preceding three conjuncts are handled similarly or more easily:
\begin{multline*}
    \psi_{\delta,4} \coloneqq \genbi_n(E_\delta) \wedge 
    \forall{\bu_{\nofM} y x z t \bv_{\nofM}} \bigg( F_{\nofM}(\bu_{\nofM} y x z t \bv_{\nofM}) \to  \\
        \Big( \big( E_\delta(yxzt) \wedge \neg H(\bu_{\nofM}) \land \sheq{\bu_\nofM}{\bv_{\nofM}} \big) \to \bigwedge_{s \in \Sigma} \big( s_{\lettera}(\bu_{\nofM}) \leftrightarrow s_{\lettera}(\bv_{\nofM}) \big)  \Big)
    \bigg).
\end{multline*}
Suppose now that $\fA \models \psi_\delta$, and moreover, that
$\fA \models E_\delta[bacd]$ for some $bacd \in A^4$, where
$ab$ and $cd$ both encode configurations.
By $\genbi_n(E_\delta)$, we have that $\fA \models F_n[\bar{u}bacd\bar{v}]$ for all $\bar{u} \in \{ a, b \}^n$ and $\bar{v} \in \{ c, d \}^n$.
By picking any $0 \leq i < 2^n$ that is not the head position of the configuration encoded by $ab$, and $\bar{u} \in \{ a, b \}^n$, $\bar{v} \in \{ c, d \}^n$
with $\val{\fA}(\bar{u}) = \val{\fA}(\bar{v}) = i$, we are guaranteed that, for each symbol $s \in \Sigma$,
$\fA \models s[\bar{u}]$ if and only if $\fA \models s[\bar{v}]$.

\subsection{Encoding acceptance trees}
The last step in our reduction is to write a formula whose models contain configurations arranged as an acceptance tree witnessing the fact that
$\atmM$ accepts some input $w_0 = s_1 \cdots s_n$. Recall that, the root of this tree is labelled with the initial configuration, namely
$\confC = \langle q_0, \bar{w}_0 \blank^{\ell} , 0 \rangle$,
where $\ell = 2^n - |\bar{w}_0|$. We ensure the existence of such a root configuration in a structure with the following $\GA$-sentence:
\begin{multline*}
    \genright_n(q_0) \wedge \exists xy \bigg( q_0(xy) \wedge H(\text{``}0\text{''}) \wedge
    \bigwedge_{i = 0}^{i < |\bar{w}_0|} s_{i{+}1}(\text{``}i\text{''}) \wedge \\
    \forall \bar{u} \Big(G(xy\bar{u}) \to
        \big( (\bigwedge_{i = 0}^{i < |\bar{w}_0|} \neg \sheq{\text{``}i\text{''}}{ \bar{u}} ) \to \blank(\bar{u}) \big) \Big)
    \bigg),
\end{multline*}
where $\text{``}i\text{''}$ is the binary encoding of $i$ using $x$ as a zero bit and $y$ as a unit~bit.
We  ensure the existence of successor configurations required by existential configurations as follows. Suppose that 
$q$ is an existential state, and $s$ a symbol. The following sentence ensures that any configuration in state 
$q  \in Q_\exists$ with the head reading symbol  $s \in \Sigma'$ has a child in the acceptance tree:
%
%
\begin{multline*}
    \genright_n(q) \wedge \forall{\tilde{\bu}_{\nofM} yx}
    \Big( G_{\nofM}(xy\bu_{\nofM}) \to \\
     \big((q(xy) \land H(\bu_{\nofM}) \land s(\bu_{\nofM})) \to \bigvee_{\delta \in \Delta(q,s)} \exists zt\, E_\delta(yxzt)  \big) \Big).
\end{multline*}
In case $q$ is universal, the disjunction over $\Delta(q,s)$ is replaced by a conjunction.
Lastly, we write the sentence $\neg \exists xy\,  q_r(xy)$ to ensure
that a rejecting configuration is never encoded by any pair of elements in any structure.

For any input string $w_0$, let $\varphi_{\atmM, \bar{w}_0}$ be the conjunction of all the $\GA$-sentences given above. We remark that 
$\varphi_{\atmM, \bar{w}_0}$ does not feature the equality predicate.
Bearing in mind that $\atmM$ is guaranteed to terminate on $\bar{w}_0$ in a finite number of steps, accessing no more than $2^{|w_0|}$ tape squares, 
we see that any model of $\varphi_{\atmM, \bar{w}_0}$ embeds an acceptance tree for $\atmM$ on input $\bar{w}_0$.
Conversely, any acceptance trees for $\atmM$ on input $\bar{w}_0$ can be expanded to a model of $\varphi_{\atmM, \bar{w}_0}$ by
interpreting the relevant predicates as suggested above.
We conclude:
\begin{theorem}
    The finite and general satisfiability problem for the \textup{(}equality-free\textup{)} guarded adjacent fragment of first-order logic is $\TwoExpTime$-hard. 
\label{theo:hard}
\end{theorem}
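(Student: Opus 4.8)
The plan is to reduce the word problem for alternating exponential-space Turing machines (which is $\AExpSpace$-complete, hence $\TwoExpTime$-complete) to satisfiability for $\GA$, following the overall architecture of Gr\"adel's $\TwoExpTime$-hardness proof for $\GF$ but replacing his use of unrestricted guarded quantification over $n$-bit addresses by the adjacency-respecting machinery developed in Section~\ref{subsec:generating-words-for-GAF}. First I would fix an ATM $\atmM$ running in space $2^n$ and an input $\bar{w}_0$ of length $n$, and assemble the sentence $\varphi_{\atmM,\bar{w}_0}$ as the conjunction of: (i) the state-consistency formulas $\psi_q$ for each $q\in Q$ (encoding that a pair $ab$ representing a configuration carries at most one head marker, at most one tape symbol per cell, and at most one state); (ii) the transition-faithfulness formulas $\psi_\delta$ for each $\delta\in\Delta$ (state update, head movement by $k'$, overwriting the scanned cell by $s'$, and inheritance of all other cells), of which $\psi_{\delta,4}$ is displayed as the representative case; (iii) the root formula forcing an initial configuration $\langle q_0,\bar{w}_0\blank^{2^n-n},0\rangle$; (iv) for each existential state $q$ and symbol $s$, a formula demanding that any $q$-configuration scanning $s$ spawn a successor via some $\delta\in\Delta(q,s)$ (a conjunction over $\Delta(q,s)$ for universal states); and (v) the sentence $\neg\exists xy\,q_r(xy)$ forbidding rejecting configurations.

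The key technical point is that every quantification over $n$-bit (or $2n$-bit) addresses, and the four-tuples linking a configuration to its successor, must be guarded by an atom containing all the quantified variables in an \emph{adjacent} sequence. This is exactly what the predicates $G_m$ and $F_m$, together with the sentences $\genright_m(p)$ and $\genbi_m(r)$, provide: once a pair $ab$ satisfies $p$, the atom $G_m(ab\bar{w})$ holds for \emph{every} $\bar{w}\in\{a,b\}^m$, so $G_m$ is a semantically inert but syntactically legal guard; similarly $F_m$ guards quantification over $\{a,b\}^m\,bacd\,\{c,d\}^m$, the reversed prefix $\bar{u}^{\mathrm R}$ being quantified in reverse variable order precisely so the resulting formula is adjacent after renaming. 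The correctness of $\genright_m$ and $\genbi_m$ rests on Lemma~\ref{lemma_GA_generation_1}, which I would simply cite. The arithmetic comparison formulas $\sheq{\bar{u}}{\bar{v}}$ and $\sheq{\bar{u}}{\bar{v}{+}k}$ use only unary predicates $O$, so they impose no adjacency constraint at all and may be inserted freely.

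With the construction in place, the correctness argument is the usual two-direction verification. If $\atmM$ accepts $\bar{w}_0$, an acceptance tree exists; I would build a model of $\varphi_{\atmM,\bar{w}_0}$ by taking as domain a set large enough to host one element-pair per tree vertex (plus the auxiliary elements needed so that $\{a,b\}$-words code all $2^n$ addresses), interpreting the state predicates on the vertex-pairs, the symbol and head predicates on the address-words according to each configuration's tape contents, $E_\delta$ on the parent/child quartets realising the relevant transition, and finally $G_m,F_m$ as the full sets forced by $\genright_m,\genbi_m$. A routine check confirms each conjunct. Conversely, from any model $\fA\models\varphi_{\atmM,\bar{w}_0}$ one extracts an acceptance tree: the root formula supplies an initial configuration; the existential/universal successor formulas, together with the $\psi_\delta$, guarantee that the configurations reachable by following $E_\delta$-edges form a tree all of whose leaves are accepting (they cannot be rejecting by conjunct (v), and the space bound $2^n$ guarantees finiteness of every branch). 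One records $\varphi_{\atmM,\bar{w}_0}$ is equality-free and computable in logarithmic space in $|\bar{w}_0|$, which also yields the same lower bound for \emph{finite} satisfiability since acceptance trees are finite. Invoking $\AExpSpace=\TwoExpTime$ completes the proof.

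I expect the main obstacle — and the place requiring the most care — to be verifying that \emph{every} displayed subformula is genuinely adjacent after the promised renaming of $w,x,y,z,\dots$ back to $x_1,x_2,\dots$. In particular the $\genbi_m(r)$ sentence, with its reversed block ${\bu}^{\mathrm R}_{m{+}2}$ quantified in descending index order immediately abutting $\bv_{m{+}2}$ in ascending order, and the transition formulas $\psi_{\delta,i}$ that weave together a four-element configuration link $bacd$ with two $n$-bit address blocks on either side, are exactly the spots where a careless ordering would break adjacency; the guards $F_n(\bu_{\nofM}yxzt\bv_{\nofM})$ must be checked to be adjacent words over the available variables, and likewise their images under the $\lambda_i$. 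The rest — the per-transition bookkeeping formulas $\psi_{\delta,1},\dots,\psi_{\delta,3}$ and the state-consistency formulas $\psi_{q,2},\psi_{q,3}$ — is entirely analogous to $\psi_{\delta,4}$ and $\psi_{q,1}$ and I would leave it to the reader.
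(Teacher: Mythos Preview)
Your proposal is correct and follows essentially the same approach as the paper: it reduces from the word problem for $\AExpSpace$ ATMs by assembling $\varphi_{\atmM,\bar{w}_0}$ from exactly the conjuncts the paper defines ($\psi_q$, $\psi_\delta$, the root and successor formulas, and the no-rejecting-state sentence), using $\genright_m$ and $\genbi_m$ with their $G_m,F_m$ guards to obtain adjacency-respecting quantification over address blocks, and then verifying the two directions via acceptance trees. Your emphasis on checking adjacency of the $\genbi_m$ and $\psi_{\delta,i}$ formulas after renaming is well placed and matches the paper's own concern; the paper's proof is in fact just the brief concluding paragraph, so your write-up is if anything more detailed.
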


\section{Extending the adjacent fragment}\label{sec:extensions-and-future-work}
In this final section, we consider the prospects for extending the logic $\AF$ within the family of argument-sequence fragments. As a preliminary, we compare the expressive power of \AF{} to that of the two-variable fragment of first-order logic.

Denote by $\FOt$ the two-variable fragment of first-order logic, namely, the set of first-order formulas over a purely relational signature (i.e.~no individual constants or function symbols) in which the only logical variables occurring are $x_1$ and $x_2$. We allow the equality predicate in $\FOt$.
%
\begin{theorem}\label{theo:fo2-and-af-over-binary-sig-are-the-same}
	Every $\FO^2$-formula is logically equivalent to an~$\AF$-formula. Conversely,
	every $\AF$-sentence featuring predicates of arity at most two is logically equivalent to an $\FO^2$-sentence.
\end{theorem}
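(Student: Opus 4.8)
The theorem has two halves. For the first half — every $\FO^2$-formula is logically equivalent to an $\AF$-formula — the plan is to exploit the fact that in $\AF$ we are free to reuse the two bottom variables $x_1, x_2$ in \emph{either} order: both $p(x_1x_2)$ and $p(x_2x_1)$ are adjacent $2$-atoms, and when $k \leq 2$ the adjacency requirement is vacuous (as noted on page~\pageref{page:Def}). So I would argue by structural induction on an $\FO^2$-formula $\varphi(x_1,x_2)$ (or $\varphi(x_1)$, or a sentence), maintaining the invariant that the translation of a subformula with free variables among $\{x_1,x_2\}$ is an $\AFv{2}$-formula with the \emph{same} free variables. Atoms: $p(x_1), p(x_1x_1), p(x_1x_2), p(x_2x_1), p(x_2x_2)$ and equality $x_1=x_2$ are already $\AFv{2}$-literals (arity $\leq 2$, and equality is permitted at arity $2$). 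Boolean cases are immediate. The only real manoeuvre is quantification: to translate $\exists x_2\, \psi(x_1,x_2)$, note that the inductively obtained $\widehat{\psi} \in \AFv{2}$ has free variables $x_1,x_2$, and I take $\exists x_2\, \widehat{\psi}$ directly; this is legitimate because clause~(3) in the definition of $\AFv{[k]}$ allows binding $x_{k+1}$, and here $k=1$. To translate $\exists x_1\, \psi(x_1,x_2)$ — where the \emph{lower-indexed} variable is bound, leaving $x_2$ free — I first swap the roles of the two variables in $\widehat{\psi}$ to obtain $\widehat{\psi}(x_2,x_1) \in \AFv{2}$ (the swap preserves membership because every $2$-atom has its reverse also available), which now has $x_1$ free and $x_2$ bound, and then apply $\exists x_1$. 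After the outer induction, any residual relabelling to put free variables in the standard positions is routine. Sentences are the base case $k=0$ of the invariant.

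For the converse half — every $\AF$-sentence over a signature with predicates of arity at most $2$ is logically equivalent to an $\FO^2$-sentence — the key observation is that once all predicates have arity $\leq 2$, an adjacent $k$-atom $p(\mathbf{x}_k^f)$ with $p$ of arity $m \leq 2$ mentions at most two distinct variables, and if it mentions two, say $x_i$ and $x_{i+1}$ (adjacency forces them to be consecutive when $m=2$ and the atom is non-constant in both positions), it is either $p(x_i x_{i+1})$ or $p(x_{i+1}x_i)$. So I would show, by induction on the structure of an $\AFv{[k]}$-formula, that every such formula is logically equivalent to an $\FO^2$-formula in which the free variables of the $\AF$-formula — which, by the index-normal shape of $\AF$, are always a final segment $\{x_{k-1},x_k\}$ or $\{x_k\}$ or $\emptyset$ — are renamed to $x_1,x_2$ (or $x_1$, or none) in an order-preserving way. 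Atoms and Booleans are immediate; the quantifier clause $\exists x_{k+1}\,\psi$ or $\forall x_{k+1}\,\psi$ is handled by noting that $\psi \in \AFv{[k+1]}$ has free variables among $\{x_k,x_{k+1}\}$, hence its $\FO^2$-equivalent uses only two variables, and prefixing the quantifier (on the variable renamed $x_2$, say) stays in $\FO^2$, possibly after an $\alpha$-renaming of $x_1 \leftrightarrow x_2$ to recycle names — exactly the standard trick for seeing that $\FO^2$ is closed under its own quantifiers. Applying this to a sentence gives the claim.

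The main obstacle — though it is more bookkeeping than mathematics — is keeping the variable-renaming bookkeeping honest across the induction in \emph{both} directions: in $\AF$ the ``active'' variables slide along the index line, whereas in $\FO^2$ there are only two names, so each induction step must track which $\AF$-variable currently corresponds to $x_1$ and which to $x_2$, and must be prepared to transpose them (this is exactly why both $p(x_1x_2)$ and $p(x_2x_1)$ being available in $\AF$, and equality being symmetric, is essential). I would state the induction hypothesis carefully with an explicit order-preserving bijection between the (at most two) free variables on each side, so that the transpositions are forced and unambiguous. No deeper combinatorics from Sec.~\ref{sec:primGen} is needed here, since tuples of primitive length $> 2$ cannot arise when all arities are $\leq 2$.
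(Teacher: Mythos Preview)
Your plan for the first half (every $\FO^2$-formula is equivalent to an $\AF$-formula) is essentially the paper's approach: structural induction with a variable-swap when the lower-indexed variable is the one being bound. Your write-up is a bit muddled at the swap step---after swapping in $\widehat\psi$ you want to apply $\exists x_2$, not $\exists x_1$, and the result then has $x_1$ free rather than $x_2$, so you need one more move (increment all variable indices by one, which takes an $\AFv{[1]}$-formula into $\AFv{[2]}$) to restore your stated invariant ``same free variables''. The paper does exactly this index-increment; your ``residual relabelling'' hand-wave hides it, but the underlying idea is right.

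For the second half there is a genuine gap. Your induction hypothesis rests on the claim that an $\AFv{[k]}$-formula over a signature of arity at most two has free variables forming a final segment $\{x_{k-1}, x_k\}$, $\{x_k\}$ or $\emptyset$. This is false: $\AFv{[k]}$ is closed under Boolean combinations of \emph{all} adjacent $k$-atoms, so for instance $p(x_1) \wedge q(x_3) \in \AFv{[3]}$ has free variables $\{x_1, x_3\}$, and $p(x_1 x_2) \wedge q(x_2 x_3) \wedge r(x_3 x_4) \in \AFv{[4]}$ has four free variables. Consequently, in the quantifier case $\exists x_{k+1}\, \psi$, the subformula $\psi \in \AFv{[k+1]}$ need \emph{not} have its free variables confined to $\{x_k, x_{k+1}\}$, and your inductive step does not go through.

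The paper handles this by proving a stronger invariant: every $\AFv{[k]}$-formula (arity $\leq 2$) is equivalent to a Boolean combination of atomic $\AFv{[k]}$-formulas together with $\FOts$-formulas having at most \emph{one} free variable (where $\FOts$ is first-order logic in which no subformula has more than two free variables). The key manoeuvre at the quantifier step $\forall x_{k'}\chi$ is to put $\chi$ into CNF and, within each clause, separate the disjuncts not mentioning $x_{k'}$ (which pull out of the quantifier unchanged) from those that do; adjacency plus arity $\leq 2$ then guarantees that the latter involve only $x_{k'-1}$ and $x_{k'}$, so what remains under the quantifier is an $\FOts$-formula with the single free variable $x_{k'-1}$. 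The $\exists$ case is dual via DNF. This CNF/DNF decomposition is the missing idea in your plan.
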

\begin{proof}
	Let $Q$ stand for either of the quantifiers $\forall$ or $\exists$.
	For the first statement of the theorem, 	
	suppose that $\phi$ is an $\FOt${}-formula. Let $k$ be the highest index of any free variable of 
	$\phi$, and $k=0$ if $\phi$ is a sentence. (Thus, $0 \leq k \leq 2$.) We claim that $\phi$ is logically equivalent to an $\AFv{[k]}$-formula, proceeding by structural induction on $\phi$.
	For the base case, $\phi$ is an atom of $\FOt$, 
	and therefore certainly in $\AFv{[k]}$. The case of Boolean operators is routine (bearing in mind that $\AFv{[0]} \subseteq \AFv{[1]} \subseteq \AFv{[2]}$). Now consider the case where $\phi$ has the form $Q x_1 \, \psi$. If $x_1$ does not occur free in $\psi$, then $\phi$ and $\psi$ are logically equivalent, and the result follows immediately by inductive hypothesis. If $x_1$ does occur free in $\psi$, but $x_2$ does not, then, by inductive hypothesis,
	let $\psi' \in \AFv{[1]}$ be logically equivalent to $\psi$. Then $\phi' = Q x_1\, \psi' \in \AFv{[0]}$ is logically equivalent to $\phi$ as
	required. 
	Otherwise, both $x_1$ and $x_2$ occur free in $\psi$: let $\phi^*$ be the result of transposing the variables $x_1$ and $x_2$ everywhere in $\phi$, and similarly for $\psi$. Thus 
	$\phi$ is satisfied in a structure $\fA$ under an assignment in which $x_2 \leftarrow a$ if and only if  
	$\phi^*$ is satisfied in $\fA$ under an assignment in which $x_1 \leftarrow a$. Moreover,  	
	$\phi^* = Q x_2 \, \psi^*$, with $x_2$ free in $\psi^*$. By inductive hypothesis, $\psi^*$ is equivalent to a formula $\psi' \in \AFv{[2]}$,
	and hence $\phi^*$ is equivalent to $Q x_2 \psi' \in \AFv{[1]}$.
	Now let $\phi' \in \AFv{[2]}$  be the result of incrementing the indices of all variables (free or bound) in $Q x_2 \psi'$. Thus
	$\phi'$ is satisfied in $\fA$ under an assignment in which $x_2 \leftarrow a$ if and only if 
	$\phi^*$ is satisfied in $\fA$ under any assignment in which $x_1 \leftarrow a$. That is to say, $\phi'$ is logically equivalent to $\phi$ as required.
	The remaining case is where $\phi$ is of the form $Q x_2 \psi$. If $x_2$ does not occur free in $\psi$, then $\phi$ and $\psi$ are logically equivalent, and the result follows immediately by inductive hypothesis. 
	If $x_1$ does not occur free in $\phi$ (i.e.~$\phi$ is a sentence), then $\phi$ is logically equivalent to $\phi^* = Q x_1 \psi^*$, and
    by the previous case, $\phi^*$ is equivalent to a formula of $\AFv{[0]}$. 
	Otherwise, if both $x_1$ and $x_2$ occur free in $\psi$, by inductive hypothesis, $\psi$ is logically equivalent to a formula $\psi' \in \AFv{[2]}$, whence 
	$\phi$ is logically equivalent to $Q x_2\, \psi' \in \AFv{[1]}$. Since $x_1$ is free in $\phi$, this is what is required, completing the induction.
	
	Turning to the second statement of the theorem,
	denote by $\FOts$ the fragment of first-order logic consisting of those formulas with the property that no sub-formula contains
	more than two free variables. A simple variable re-naming procedure shows that any sentence of $\FOts$ is logically equivalent to a sentence of~$\FOt$.
	We prove the following claim: if $\phi$ is a formula of $\AF^{k}$ featuring predicates of arity at most 2, then there exists a formula $\phi'$ such that: (i) $\phi$ and $\phi'$ have the same free variables and are logically equivalent; and (ii) $\phi'$ is a Boolean combination of atomic \AFv{[k]}-formulas
	and $\FOts$-formulas having at most one free variable. 
	Since every sentence of $\AF$ is (by adding vacuous quantification if necessary) an 
	$\AFv{[0]}$-formula, and since any $\FOts$-sentence is logically equivalent to an $\FOt$-sentence, putting $k=0$ yields the desired result.
	Observe that, since all predicates have arity at most 2, atomic formulas can contain at most two variables; and since $\phi$ is in $\AF$, any two distinct free variables occurring in an atomic formula must have indices differing by exactly one. To avoid special cases in the following proof, we assume that 
	$x_0$ is a variable, though this variable will never actually appear in any formulas. 
	We prove the claim by structural induction on $\phi$. If $\phi$ is atomic, the claim is immediate. Likewise, if  $\phi$ is
	$\neg \psi$ or $\psi_1 \circ \psi_2$, where $\circ$ is a Boolean connective, it follows immediately by inductive hypothesis. 
	
	Suppose then $\phi$ has the form $\forall x_{k'} \psi$. By the formation rules for $\AF$, we have $\psi \in \AFv{[k']}$. Let $\psi'$ be the formula
	guaranteed by the inductive hypothesis applied to $\psi$. Using standard Boolean identities, we may re-write the Boolean combination 
	$\psi'$ in conjunctive normal form, say as $\chi:= \chi_1 \wedge \cdots \wedge \chi_\ell$, where each $\chi_i$ is a disjunction, say $\theta_{i,1} \vee \cdots \vee \theta_{i,m_i}$, and each $\theta_{i,j}$ is either a literal in $\AFv{[k']}$ or an 
	$\FOts$-formula having at most one free variable. Thus, $\phi$ is logically equivalent to the conjunction $\forall x_{k'}\, \chi_1 \wedge \cdots \wedge \forall x_{k'}\, \chi_\ell$. 
	Fixing one of these conjuncts, say $\forall x_{k'}\, \chi_i$, consider the various disjuncts of $\chi_{i}$. 
	Let $\gamma_i$ be the disjunction of those disjuncts $\theta_{i,j}$ in which $x_{k'}$ does not occur free, and let let $\delta_i$ be the disjunction of the rest. Thus, $\forall x_{k'} \chi_i$ is logically equivalent to $\gamma_i \vee \forall x_{k'} \delta_i$. 	
	Clearly, $\gamma_i$ is a Boolean combination of atomic \AFv{[k']}-formulas and $\FOts$-formulas with at most one free variable. 
	Moreover, the free variables of $\gamma_i$ are confined to $\bx_k$, since the free variables of $\phi$ are are confined to $\bx_k$ as well.
	On the other hand, if $\theta_{i,j}$ occurs in $\delta_i$, then either its free variables
	are included in $\set{x_{k'}}$, or it is a literal with free variables exactly $\set{x_{k'{-}1}, x_{k'}}$. Hence, $\forall x_{k'} \delta_i$
	is an $\FOts$-formula with free variables included in $\set{x_{k'{-}1}}$. Moreover, if $x_{k'{-}1}$ actually occurs
	free in $\theta_{i,j}$, we must have $k'{-}1 \leq k$, since the free variables of $\phi$ are confined to $\bx_k$. Thus, $\chi$ and hence $\phi$, is logically equivalent to a Boolean combination
	of formulas of the required forms. 
	
	Suppose finally that $\phi$ has the form $\exists x_{k'} \psi$. Then we proceed as in the previous case, but use disjunctive normal form rather than conjunctive normal form.
	
	This completes the induction, and hence the proof. 
\end{proof}

Now let us return to the question of extending the adjacent fragment.
The adjacent fragment is defined by restricting the permitted argument sequences appearing in atoms. More precisely, within the régime of index-normal formulas,
we insist that, in contexts where the variable $x_k$ is available for quantification, all atoms have the form $p(\bx^f_k)$, where $p$ is an
$m$-ary predicate, and $f\colon [1,m] \rightarrow [1,k]$ is an adjacent function. 
It is natural to ask whether the adjacency restriction might be relaxed without 
compromising the decidability of satisfiability. As we now show, the answer must be no. 

Let $f\colon [1, m] \to [1, k]$ be a non-adjacent function, with $m, k \geq 2$: that is,  
there is some $j$ ($1 \leq j < m$) for which $|f(j+1) - f(j)| \geq 2$.
Denote by $\AF^f$ the extension of~$\AF$ obtained by allowing atoms of the form $p(\bx_k^f)$ 
in clause (1) of the definition of $\AF^{[k]}$ on page~\pageref{page:Def}. We show that $\AF^f$
is expressive enough to state that a given binary relation is transitive.
\begin{lemma}
	Let $f \colon [1, m] \to [1, k]$ be a non-adjacent function, $T$ a binary predicate, and $Q$ an $m$-ary predicate. There exists a formula $\phi_T$ of $\AF^f$, such that: 
	\textup{(i)} if  $\fA \models \phi_T$ then $T^{\fA}$ is transitive, and \textup{(ii)} any structure~$\fA$ interpreting $T$ as a transitive relation, but
	not interpreting $Q$, can be expanded to a model of $\phi_T$.
\label{lma:transitivity} 
\end{lemma}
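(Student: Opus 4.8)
The plan is to exploit the single ``jump'' in $f$ to simulate, inside $\AF^f$, a non-adjacent binary atom of the form $T(x_a,x_b)$ with $b-a\geq 2$, and then to use such an atom in the familiar way to force $T$ to be transitive. Concretely, fix some $j$ with $|f(j{+}1)-f(j)|\geq 2$; by treating the mirror situation symmetrically we may assume $a:=f(j)<f(j{+}1)=:b$, so that $b\geq a+2$ and the indices $a{+}1,\dots,b{-}1$ all lie in $[1,k]$. The key device is the $m$-ary predicate $Q$ from the statement: in every model we construct we interpret $Q$ as the projection of $T$ onto its $j$th and $(j{+}1)$th coordinates, i.e.\ $\fA\models Q[\bar d]$ iff $\fA\models T[d_j,d_{j+1}]$. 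Since the $j$th and $(j{+}1)$th arguments of the atom $Q(\bx_k^f)$ are precisely $x_{f(j)}=x_a$ and $x_{f(j{+}1)}=x_b$, under this interpretation $Q(\bx_k^f)$ ``is'' the forbidden atom $T(x_a,x_b)$.

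With this device in hand, I would take $\phi_T$ to be a conjunction of two parts. The first is a \emph{trigger}: using only adjacent $T$-atoms, together with equality atoms $x_i=x_{i{+}1}$ (for $a{+}2\leq i<b$, which is vacuous when $b=a{+}2$) forcing $x_{a+2}=\dots=x_b$, it asserts
\[
\forall \bx_k \Bigl( T(x_a,x_{a+1}) \wedge T(x_{a+1},x_{a+2}) \wedge \textstyle\bigwedge_{i=a+2}^{b-1}(x_i=x_{i+1}) \;\longrightarrow\; Q(\bx_k^f) \Bigr),
\]
so that whenever the value of $x_{a+1}$ witnesses that the values of $x_a$ and $x_b$ stand in $T\circ T$, the atom $Q(\bx_k^f)$, i.e.\ $T(x_a,x_b)$, is forced. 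The second part is a \emph{read-back}: morally a clause of the form $\forall \bx_k\,(Q(\bx_k^{g})\to T(x_p,x_{p+1}))$ for an adjacent $g$ and index $p$ chosen so that, under the projection interpretation, $Q(\bx_k^{g})$ unwinds to exactly $T(x_p,x_{p+1})$, which lets one recover $T$ of the two coordinates carried by the jump. For the converse direction of the lemma, given $\fA$ with $T$ transitive and $Q$ uninterpreted, one expands $\fA$ by the projection interpretation above: the trigger then holds because its antecedent entails $(x_a,x_b)\in T\circ T\subseteq T$ by transitivity, and the read-back holds because it has been arranged to unwind to a tautology $T(\cdots)\to T(\cdots)$. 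For the forward direction, given $T(u,v)$ and $T(v,w)$, one puts $x_a\leftarrow u$, $x_{a+1}\leftarrow v$, $x_{a+2},\dots,x_b\leftarrow w$ (choosing the remaining variables suitably), invokes the trigger to obtain that $Q$ holds of the resulting $m$-tuple — whose two jump coordinates are $u$ and $w$ — and then invokes the read-back to conclude $T(u,w)$; the degenerate cases $u=v$ and $v=w$ make transitivity hold for free and are dispatched separately.

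The delicate step, which I expect to be the main obstacle, is turning the read-back into a genuine $\AF^f$-formula. The tuple on which the trigger forces $Q$ carries the value $v$ of $x_{a+1}$ ``between'' the two endpoint values, and when $v$ differs from both $u$ and $w$ this tuple is in general \emph{not} of the form $\bar{e}^{g}$ for any adjacent $g$; hence one cannot read it with a plain adjacent atom, while the only non-adjacent atoms available are instances of the schema $p(\bx_k^f)$ itself. Making the read-back go through therefore calls for some care about the shape of $f$ near the chosen jump: one selects the unconstrained variables in the forward-direction instantiation (and, if needed, an extra auxiliary predicate) so that the forced tuple \emph{does} become adjacently generable, with the two jump positions landing on indices differing by at most one, and only then extracts the endpoint relation. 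Carrying out this bookkeeping, together with the routine checks that every conjunct of $\phi_T$ lies in $\AF^f$ and that the equality predicate (permitted only in binary atoms) is used only thus, completes the proof.
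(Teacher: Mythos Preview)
Your trigger and your interpretation of $Q$ are exactly what the paper does, and your forward-direction argument is correct up to the point where you have $\fA\models Q[\bar d^f]$ with $(\bar d^f)_j=u$ and $(\bar d^f)_{j+1}=w$. The difficulty you flag with the read-back, however, is illusory: you are trying to write it as a $k$-variable formula $\forall\bx_k\,(Q(\bx_k^g)\to T(x_p,x_{p+1}))$ for some adjacent $g\colon[1,m]\to[1,k]$, and you correctly observe that the specific $m$-tuple $\bar d^f$ need not be of the form $\bar e^g$. But nothing forces the read-back to live in $\AFv{[k]}$. Since $Q$ is $m$-ary, the atom $Q(\bx_m)$ (with the identity argument sequence) is an adjacent $m$-atom, and $T(x_j,x_{j+1})$ is an adjacent $m$-atom as well; hence
\[
\forall\bx_m\,\bigl(Q(\bx_m)\to T(x_j,x_{j+1})\bigr)
\]
is already a formula of $\AF^{[m]}\subseteq\AF\subseteq\AF^f$. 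Instantiating $\bx_m$ with the $m$-tuple $\bar d^f$ immediately yields $T(u,w)$, with no bookkeeping about the shape of $f$ and no auxiliary predicates. This is precisely the paper's $\phi_T^1$, and with it the whole argument is a few lines. Your final paragraph about ``selecting unconstrained variables so that the forced tuple becomes adjacently generable'' is chasing a problem that does not exist; drop it and the proof is complete.
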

\begin{proof}
	Since $f$ is non-adjacent, fix an index $j \in [1, m{-}1]$ for which $|f(j{+}1) - f(j)| \geq 2$. 
	We may assume without loss of generality that $f(j{+}1) > f(j)$, as the proof for the other case is obtained by swapping all occurrences of $j$ and $j{+}1$.
	Notice that, in this case, we must have $f(j) <f(j+1){-}1$ and hence $f(j) < k{-}1$.
	Define~$\phi_T \coloneqq \forall{\bx_m}\, \phi_T^1 \land \forall{\bx_k}\, \varphi_T^2$,~where
	\begin{align*}
	& \phi_T^1 \coloneqq \left(Q(\bx_{m}) \rightarrow T(x_j,x_{j{+}1})\right)\\ 	
	\begin{split}
	& \phi_T^2 \coloneqq \Big( T(x_{f(j)}, x_{f(j){+}1}) \land T(x_{f(j){+}1}, x_{f(j){+}2}) \land\\
	&\hspace{7cm} \bigwedge_{i = f(j)+2}^{f(j{+}1){-}1} x_i = x_{i{+}1} \Big) \to Q(\bx_k^f).
	\end{split}
	\end{align*}
	To establish (i), we take any $\fA \models \varphi_T$ and any elements $a, b, c$~of~$\fA$ with $(a,b) \in T^{\fA}$ and $(b,c) \in T^{\fA}$.
	We must show that $(a,c) \in T^{\fA}$.
	Let $\bar{d}= d_1 \cdots d_k$ be a tuple with $d_{f(j)}= a$, $d_{f(j){+}1}= b$, and $d_i = c$ for all $i$ except $f(j)$ and $f(j){+}1$. Thus, 
    $\fA \models T[d_{f(j)},d_{f(j){+}1}]$,
	$\fA \models T[d_{f(j){+}1},d_{f(j){+}2}]$, and $d_{f(j){+}2} = \cdots = d_{f(j{+}1)}$. Since
	$\fA \models \phi^2_T[\bar{d}]$, we have $\fA \models Q[\bar{d}^f]$. But the $j$th position of $\bar{d}^f$ is occupied by $d_{f(j)} = a$, and
	the $(j{+}1)$th position is occupied by $d_{f(j+1)} = c$; and since $\fA \models \phi^1_T[\bar{d}^f]$, we have $(a,c) \in T^{\fA}$, as required.

	To establish (ii), suppose $\fA$ interprets $T$ as a transitive relation. We expand $\fA$ to $\fA^+$ by fixing the interpretation of $Q$
	to be the set of all $m$-tuples $\bar{a}$ such that $(a_j, a_{j{+}1}) \in T^{\fA}$. It is immediate that $\fA^+ \models \forall{\bx_m}\, \phi_T^1$.
	Now take any $k$-tuple~$\bar{c}$ satisfying the antecedent of~$\phi_T^2$ in $\fA^+$. Thus, $(c_{f(j)},c_{f(j){+}1})$ and $(c_{f(j){+}1},c_{f(j){+}2})$
	are both in the relation $T^{\fA}$, and, moreover, $c_{f(j){+}2} = \cdots = c_{f(j{+}1)}$. By transitivity, $(c_{f(j)},c_{f(j){+}2}) \in T^{\fA}$,
	whence $(c_{f(j)},c_{f(j{+}1)}) \in T^{\fA}$. But $c_{f(j)}$ and $c_{f(j{+}1)}$ are, respectively, the $j$th and $(j{+}1)$th element of $\bar{c}^f$, 
	and so, by construction, $\fA^+ \models Q[\bar{c}^f]$. Thus $\fA^+ \models \forall{\bx_k}\, \varphi_T^2$.
\end{proof}

\begin{theorem}
Let $f\colon [1, k] \to [1, n]$ be a non-adjacent function. Then the satisfiability and finite satisfiability problems for $\AF^f$ are undecidable.
\end{theorem}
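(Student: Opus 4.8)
The plan is to reduce to $\AF^{f}$ the (finite) satisfiability problem for $\FOt$ equipped with two distinguished transitive relations --- a problem well known to be undecidable, for both satisfiability and finite satisfiability (see, e.g.,~\cite{KieronskiPT18}). Observe first that the mere existence of a non-adjacent $f$ forces its domain to have at least two elements and its codomain at least three (we need some index $j$ with $|f(j{+}1){-}f(j)| \geq 2$), so Lemma~\ref{lma:transitivity} applies. Fix two fresh binary predicates $T_1, T_2$ and two fresh predicates $Q_1, Q_2$ of the arity required by Lemma~\ref{lma:transitivity}, and apply that lemma to $f$ twice, with the pairs $(T_1, Q_1)$ and $(T_2, Q_2)$, obtaining $\AF^{f}$-sentences $\phi_{T_1}$ and $\phi_{T_2}$. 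Put $\Phi \coloneqq \phi_{T_1} \wedge \phi_{T_2} \in \AF^{f}$. Lemma~\ref{lma:transitivity} then guarantees: (i)~every model of $\Phi$ interprets $T_1$ and $T_2$ as transitive relations; and (ii)~any structure interpreting $T_1$ and $T_2$ as transitive relations, and interpreting neither $Q_1$ nor $Q_2$, can be expanded --- over the same domain --- to a model of $\Phi$ (apply clause~(ii) of the lemma for $(T_1,Q_1)$, then again for $(T_2,Q_2)$, each step leaving the other step's data intact).

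The reduction is then immediate. Let $\psi$ be any sentence of $\FOt$ over a relational signature $\tau$ with $T_1, T_2 \in \tau$ but $Q_1, Q_2 \notin \tau$. By the first statement of Theorem~\ref{theo:fo2-and-af-over-binary-sig-are-the-same}, $\psi$ is logically equivalent to an $\AF$-sentence $\psi^{\dagger}$, which does not mention $Q_1$ or $Q_2$ and lies in $\AF^{f}$. Set $\psi' \coloneqq \psi^{\dagger} \wedge \Phi$, an $\AF^{f}$-sentence computable from $\psi$. If $\fA \models \psi'$, then by~(i) the relations $T_1^{\fA}$ and $T_2^{\fA}$ are transitive and $\fA \models \psi$. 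Conversely, if $\fA \models \psi$ with $T_1^{\fA}$ and $T_2^{\fA}$ transitive, we may take $\fA$ not to interpret $Q_1$ or $Q_2$ (they are not in $\tau$), so by~(ii) it expands to some $\fA^{+} \models \Phi$ over the same domain; since $Q_1, Q_2$ do not occur in $\psi^{\dagger}$ we still have $\fA^{+} \models \psi^{\dagger}$, and thus $\fA^{+} \models \psi'$. Every step preserves the domain, and in particular finiteness. Hence $\psi \mapsto \psi'$ is an effective reduction of the satisfiability (respectively, finite satisfiability) problem for $\FOt$ with two transitive relations to the corresponding problem for $\AF^{f}$, and both problems for $\AF^{f}$ are therefore undecidable.

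The one external ingredient is the undecidability of $\FOt$ with two transitive relations; this is where the real difficulty resides, and it is also the step one would have to replace in a self-contained treatment. Such a treatment would encode a standard undecidable problem --- an unbounded tiling problem, say --- directly, using $T_1$ and $T_2$ as ``horizontal-reach'' and ``vertical-reach'' relations and $\FOt$-definable (hence, by Theorem~\ref{theo:fo2-and-af-over-binary-sig-are-the-same}, $\AF^{f}$-definable) constraints to lay out the grid; the delicate part, as usual with such encodings, is forcing a genuine $\N\times\N$ grid rather than a merely locally grid-like structure, which requires the customary confluence argument. Since Lemma~\ref{lma:transitivity} and Theorem~\ref{theo:fo2-and-af-over-binary-sig-are-the-same} are insensitive to the choice between arbitrary and finite structures, the argument above disposes of satisfiability and finite satisfiability simultaneously.
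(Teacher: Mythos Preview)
Your proof is correct and follows essentially the same route as the paper: both reduce from the (finite) satisfiability problem for $\FOt$ with two transitive relations, invoking Lemma~\ref{lma:transitivity} twice (with fresh auxiliary predicates) to axiomatize transitivity in $\AF^f$, and Theorem~\ref{theo:fo2-and-af-over-binary-sig-are-the-same} to translate the given $\FOt$-sentence into $\AF$. Your write-up is somewhat more explicit about freshness of the auxiliary predicates and domain preservation, and the final paragraph sketching a self-contained tiling encoding is extraneous but harmless.
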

\begin{proof}
We reduce from the (finite) satisfiability problem for $\FOt$ with two transitive relations, known to be undecidable~\cite[Thm.~3]{Kieronski05}.
Let an $\FOt$-sentence~$\phi$ be given (possibly featuring the two binary predicates $T$ and $T'$ (required to be interepreted as transitive relations), and let 
$\phi^*$ to be its logically equivalent $\AF$-formula, guaranteed by Theorem~\ref{theo:fo2-and-af-over-binary-sig-are-the-same}.
Define $\phi_T$ and $\phi_{T'}$ to be the $\AF^f$-formulas guaranteed by Lemma~\ref{lma:transitivity}, stating that $T$ and $T'$ are transitive.
(The respective predicates $Q$ appearing in these formulas are chosen afresh.) 
Applying Lemma~\ref{lma:transitivity} we see that $\phi$ has a (finite) model interpreting $T$ and $T'$ as transitive relations if and only if the $\AF^f$-formula $\phi^* \wedge \phi_{T} \wedge \phi_{T'}$ has
a (finite)~model.
\end{proof}

\bibliographystyle{asl}
\bibliography{references}

\end{document}